\journal{}
\newtheorem{theorem}{Theorem}[section]
\newtheorem{lemma}[theorem]{Lemma}
\newtheorem{corollary}[theorem]{Corollary}
\newtheorem{remark}[theorem]{Remark}
\newtheorem{proposition}[theorem]{Proposition}
\newtheorem{Dbar}{$\bar{\partial}$-Problem}[section]
\numberwithin{equation}{section}
\newtheorem{RHP}{RH problem}[section]
\DeclareMathOperator*{\res}{Res}
\DeclareMathOperator*{\re}{Re}
\begin{document}



\begin{frontmatter}
\title{Soliton resolution and asymptotic stability of $N$-solitons to the Degasperis-Procesi equation on the line }


\author[inst2]{Xuan Zhou}

\author[inst3]{Zhaoyu Wang}

\author[inst3]{Engui Fan$^{*,}$  }

\address[inst2]{ College of Mathematics and Systems Science, Shandong University of Science and Technology, Qingdao, 266590,  P. R. China}
\address[inst3]{ School of Mathematical Sciences and Key Laboratory of Mathematics for Nonlinear Science, Fudan University, Shanghai,200433, P. R. China\\
* Corresponding author and e-mail address: faneg@fudan.edu.cn  }





\begin{abstract}

  The Degasperis-Procesi (DP) equation
  \begin{align}
    &u_t-u_{txx}+3\kappa  u_x+4uu_x=3u_x u_{xx}+uu_{xxx},   \nonumber
  \end{align}
  serving as a model delineating the propagation of shallow water waves, stands as a completely integrable system and admits a $3\times3$  matrix
  Lax pair. In this manuscript, we study the soliton resolution and large time behavior of solutions to the Cauchy problem of the DP equation with generic initial data in Schwarz space. Employing the $\bar \partial$-generalization of the Deift-Zhou nonlinear steepest descent method, we deduce different long time asymptotic expansions of
  the solution $u(x,t)$ in two distinct types of space-time  regions. This  result verifies the soliton resolution conjecture and asymptotic stability of $N$-soliton solutions for the DP equation.
\end{abstract}

\begin{keyword}
Degasperis-Procesi equation \sep Riemann-Hilbert problem \sep $\bar{\partial}$-steepest descent method \sep
 Soliton resolution, Asymptotic stability.\vspace{2mm}

  \textit{Mathematics Subject Classification:} 35Q51; 35Q15; 35C20; 35B40; 37K15.
  \end{keyword}
\end{frontmatter}
\tableofcontents

\section{Introduction}
The present paper is concerned with   soliton resolution and asymptotic stability of $N$-soliton solutions to  the  Degasperis-Procesi (DP) equation  on the line
\begin{align}
    &u_t-u_{txx}+3\kappa u_x+4uu_x=3u_x u_{xx}+uu_{xxx}, \label{DP}\\
    &u(x,0)=u_{0}(x),\label{intva}
\end{align}
where the initial data   $ u_0( x)$ is the function in  the  Schwarz space $\mathcal{S}(\mathbb{R})$, and  $\kappa$ is a positive constant  characterizing  the effect of the linear dispersion.
The DP equation was first discovered in a study for asymptotically integrable partial differential equations  \cite{AM}.
Afterward it was found that the DP equation  arises for modeling the propagation of shallow water waves over a flat bed in so-called  moderate amplitude regime  \cite{RS,RI1, AD}.  This regime can be characterized as capturing stronger nonlinear effects than dispersive, which, particularly, accommodate wave breaking phenomena.
This is in contrast with the so-called  shallow water regime, where various
integrable systems like  KdV  equation arise by balancing nonlinearity and dispersion \cite{BD}.

Among the models of  moderate amplitude regime,  the Camassa-Holm (CH) equation and the DP equation are two are integrable equation
 admitting bi-Hamiltonian structure and a Lax pair representation   \cite{AD,RI2}.   The CH and DP equations  correspond to $b=2$ and $b=3$ respectively to the   $b$-family   equation
\begin{equation*}
u_t-u_{txx}+b\kappa u_x+(b+1)uu_x=bu_x u_{xx}+uu_{xxx}.
\end{equation*}
Despite many similarities between DP and CH equation, we would like to point out that these two equations are truly different. First, the  DP equation   not only  admits  peakon solitons, but also shock peakons
 \cite{GM,HH,H};  Second, the  DP equation has entirely different form of  conservation laws with the CH equation \cite{RD,ADD}; Third, the CH equation is a re-expression of geodesic flow on the diffeomorphism group or on the Bott-Virasoro group, while no geometric derivation of the DP equation \cite{AB,G}.
  Last, compared with the  CH  equation associated with a  $2\times 2$ matrix spectral \cite{RD},
        the spectral analysis of the corresponding Lax pairs is quite different and becomes very difficult
        due to the fact that the spectral problem of the DP equation is a $3\times 3$ matrix-valued form  \cite{ADD}.
These  differences above result in some essential additional technique difficulties   in  the   analysis of the  inverse scattering transform,  the  RH    problem or long time asymptotics for the DP equation.

In recent years, there has been  attracted a lot of attention on  the DP equation \eqref{DP}
 due to its integrable structure and pretty mathematical
 properties \cite{J1,Y1}.   Liu and Yin proved the global existence and blow-up phenomena for the DP equation  \cite{YZ}.
   Constantin,  Ivanov and Lenells developed  the inverse scattering transform   for the DP equation and the implementation of the dressing method \cite{ARIJ}.
   Lenells proved  that the solution of  the initial-boundary value problem for the DP equation on the half-line
    can be expressed in term of  the solution of a RH problem  \cite{J2}.
Later Boutet de Monvel,  Lenells and Shepelsky  successfully extended the Deift-Zhou  nonlinear steepest descent method \cite{RN6,RN9}
to the  the   DP equation  on the half-line and  obtained  the long time asymptotics  in the similarity region
\cite{AJD}.   Hou, Zhao, Fan and Qiao  constructed the algbro-geometric solutions  for the DP hierarchy
\cite{HZF}.    Feola,   Giuliani and    Procesi developed the KAM theory close to an elliptic fixed point for quasi-linear Hamiltonian perturbations of the
DP  equation on the circle    \cite{FGP}.  Boutet de Monvel  and Shepelsky   developed  the RH  method to   the Cauchy problem of the   DP equation (\ref{DP}), and especially    in   the region $\xi\in(0,3)$ without consideration of solitons they further  obtained  the following   long time asymptotics  \cite{AD2}.
  Despite this progress,   the  long time  asymptotics in other regions  or  the long time  asymptotics in  the case when solitons appear  still have   been   unknown and   remain  to be studied in detail.

In this work,  for  the  Schwarz initial data
 $u_0(x)$  that support    solitons,  we   give  a full and rigorous  description of  the long-time asymptotics for the DP equation in the whole   upper half-plane
   $(x,t)\in \mathbb{R}\times \mathbb{R}^+$.  Due to   the presence of solitons, the long-time asymptotics of solutions for  the DP equation
   are necessary more detailed than in the case without solitons \cite{AD2}.   A  key tool to prove our results
is  the $\bar\partial$-generalization of the nonlinear steepest descent method proposed by McLaughlin and Miller \cite{KTRPD1,KTRPD2}.
This method, which is more convenient  to deal with  the   presence of    discrete spectrum associated with  the initial data,
  has been successfully used to obtain  the long-time asymptotics and
 the soliton resolution conjecture for some integrable systems in recent years   \cite{DM,CJ,BJ,Liu3,CL,YF,CF,ZF}.
More importantly, our results are   different  from \cite{AD2} in two  aspects.  Firstly,
 we divide the upper half $ (x,t)$-plane into two distinct  regions denoted by
\begin{itemize}
\item   \uppercase\expandafter{\romannumeral1}. Solitonic region: $\xi:=\frac{x}{t} >3$ and $\xi<-\frac{3}{8}$, in which we prove the soliton resolution and  asymptotic stability of
    $N$  soliton solutions  for the DP  equation;

\item   \uppercase\expandafter{\romannumeral2}.  Solitonless region.

   \uppercase\expandafter{\romannumeral2}-1:  $-\frac{3}{8}<\xi< 0$.  Zakharov-Manakov  region  and  24 phase points.

     \uppercase\expandafter{\romannumeral2}-2:  $0\leq \xi<3$. Zakharov-Manakov  region  and   12 phase points.




\end{itemize}
as depicted in Figure \ref{xtregions}. Secondly,   the leading order approximation  to  the solution $u(x,t)$ of  the DP equation in the  region    \uppercase\expandafter{\romannumeral1} can be  expressed as  a sum of   single solitons with different velocities.
This  result  is  the  verification of  soliton resolution conjecture  and  the asymptotic stability of $N$-soliton solutions  for the DP equation (\ref{DP}).

\begin{figure}
	\begin{center}
		\begin{tikzpicture}[scale=0.85]
			\draw[blue!20, fill=blue!20] (0,0)--(4.5,0)--(4.5,1.5)--(0, 0);
			\draw[green!20, fill=green!20] (0,0)--(0,3.5)--(4.5,3.5)--(4.5,1.5);
             \draw[yellow!20, fill=yellow!20] (0,0)--(-1.313,3.5)--(0,3.5)--(0,0);
			\draw[blue!20, fill=blue!20] (0,0)--(-1.313,3.5)--(-4.5,3.5)--(-4.5,0);
			\draw [-latex] (-5.5,0)--(5.5,0);
			\draw [ -latex](0,0)--(0,4.5);
			\draw [red,thick](0,0)--(4.5,1.5);
			\draw [red,thick](0,0)--(-1.313,3.501);
			\node    at (0,-0.3)  {$0$};
			\node [right] at (5.6,0)  {$x$};
			\node [right]   at (0,4.6)  {$t$};
			\node  [above]  at (1.6,1.8) {\footnotesize II-2};
            \node  [above]  at (-0.4,2.3) {\footnotesize II-1};
			\node  [below]  at (2.9,0.7) {\footnotesize I};
            \node  [left]  at (-2.5,0.8) {\footnotesize I};
			\node  [right]  at (4.5,1.5) {\scriptsize $ \xi=3 $};
			\node  [above]  at (-1.313,3.5) {\scriptsize $  \xi=-\frac{3}{8} $};
		\end{tikzpicture}
	\end{center}
	\caption{\footnotesize  The asymptotic regions of the solution for  the DP equation. The  upper half $ (x,t)$-plane is divided into
two kinds of   asymptotic regions: I. Solitonic region;  II. Solitonless region which includes two Zakharov-Manakov regions II-1 and II-2.   }
\label{xtregions}
\end{figure}
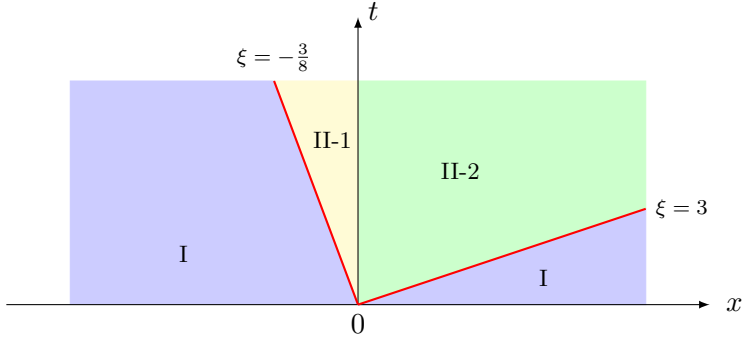

\subsection{Main results}

The main  results of the present paper is stated in the following theorem.
\begin{theorem}\label{th1}
Let $u(x,t)$ be the solution for the Cauchy problem \eqref{DP}--\eqref{intva}, where the initial data $u_{0}\in\mathcal{S}(\mathbb{R})$ is given along with scattering data  $\left\{r(k), \{\zeta_n, c_n\}_{n=1}^{N}\right\}$.

\begin{itemize}
\item  {\rm\bf For the solitonic  region I:} $\xi\in(-\infty,-\frac{3}{8})\cup(3,+\infty)$. Order the discrete spectrums $\zeta_n$ such that
\begin{equation}
{\rm Re} \ \zeta_1>{\rm Re} \ \zeta_2>\cdots>{\rm Re} \ \zeta_{N},
\end{equation}
then the solution $u(x,t)$ satisfies
\begin{equation}\label{uusol}
\Big\lvert u(x,t)-u^{sol,N}(x,t)\Big\rvert\leq Ct^{-1+\rho}, \quad t\rightarrow\infty,
\end{equation}
where $u^{sol,N}(x,t)$ is specified in \eqref{ressol}, representing the $N$-soliton solution characterized by the corresponding scattering data $ \{\tilde{r}(k)\equiv0,\{\zeta_n, \tilde{C}_n\}_{n=1}^{N}\}$ and
\begin{equation}
\tilde{C}_n=c_n\exp\left\{-\mathrm{i}\int_{I(\xi)}\nu(s)\left(\frac{1}{s-\omega^2\zeta_n}+\frac{1}{s-\omega\zeta_n}-\frac{2}{s-\zeta_n}\right)ds\right\}.
\end{equation}
Moreover, the solution $u(x,t)$  can be expressed as a sum of $N$ single  soliton solutions with different velocities
\begin{equation}\label{resol}
u(x,t)=\sum_{n=1}^{N} \mathcal{U}^{sol}(\zeta_n;x,t)+\mathcal{O}(t^{-1+\rho}), \quad t\rightarrow\infty,
\end{equation}
where $\mathcal{U}^{sol}(\zeta_n;x,t)$ denotes a single soliton solution as detailed in \eqref{1U}. The constant $\rho$, which appears in \eqref{uusol} and \eqref{resol}, is a free parameter that ranges within the interval $\left(0, \frac{1}{4}\right)$.
\item  {\rm\bf   For Zakharov-Manakov    regions  II-1: $\xi\in(-\frac{3}{8},0)$ and II-2:} $\xi\in\left[0,3\right)$,
\begin{align}\label{u3/8}
u(x,t)=t^{-\frac{1}{2}}f_1(e^{\frac{\pi}{6}\mathrm{i}};x,t)+\mathcal{O}(t^{-3/4}), \quad t\rightarrow\infty,
\end{align}
where
\begin{align*}
f_1(e^{\frac{\pi}{6}\mathrm{i}};x,t)=\sum_{j=1}^{3}\frac{\partial}{\partial t}\left(\left(H^{(0)}(e^{\frac{\pi}{6}\mathrm{i}})\right)_{j2}-\left(H^{(0)}(e^{\frac{\pi}{6}\mathrm{i}})\right)_{j1}\right),
\end{align*}
$H^{(0)}(e^{\frac{\pi}{6}\mathrm{i}})$ is defined by \eqref{H0}, $\left(H^{(0)}(e^{\frac{\pi}{6}\mathrm{i}})\right)_{jk}$ represents the element in the $j$-th row and $k$-th column.
\end{itemize}
\end{theorem}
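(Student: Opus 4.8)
\section*{Proof proposal for Theorem \ref{th1}}

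The plan is to recast the Cauchy problem \eqref{DP}--\eqref{intva} as a $3\times 3$ matrix Riemann--Hilbert problem, normalized at $k=\infty$, whose jump contour carries the cube-root symmetry $k\mapsto\omega k$ ($\omega=e^{2\pi\mathrm i/3}$) that is intrinsic to the DP Lax pair, and then to run the $\bar\partial$-generalization of the Deift--Zhou nonlinear steepest descent method separately in the two families of space-time regions. The central object is the phase function $\theta(\xi;k)$ appearing in the oscillatory factor $e^{\pm t\theta}$; the first step is to build its signature table and locate the real stationary points: none for $\xi\in(-\infty,-\tfrac38)\cup(3,+\infty)$, twelve for $\xi\in[0,3)$, and twenty-four for $\xi\in(-\tfrac38,0)$. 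The count of phase points is exactly what forces the case split in the statement, and the sign of $\re\theta$ in each sector dictates where the reflection coefficient $r(k)$ (and later its $\bar\partial$-extension) is exponentially small, hence in which directions the contour may be deformed. Genericity of $u_0\in\mathcal S(\mathbb R)$ is used here to guarantee simple discrete spectrum, no spectral singularities, and $1-|r|^2>0$ on the relevant contour.

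Next I would perform the usual chain of invertible transformations $M\to M^{(1)}\to M^{(2)}\to\cdots$. First, a scalar conjugation by $\delta(\xi;k)$, the solution of a scalar RHP with jump built from $1-|r|^2$ and suitably symmetrized over the $\omega$-orbit, which both triangularizes the jump matrix on the pieces where we deform and produces precisely the phase correction $\exp\{-\mathrm i\int_{I(\xi)}\nu(s)\,(\tfrac{1}{s-\omega^2\zeta_n}+\tfrac{1}{s-\omega\zeta_n}-\tfrac{2}{s-\zeta_n})\,ds\}$ that renormalizes $c_n$ to $\tilde C_n$. Second, a continuous (non-analytic) extension of the off-diagonal entries into the sectors bordering the contour, turning the RHP into a mixed $\bar\partial$-RHP whose pure-RHP part has jumps that are exponentially close to $I$ away from the stationary points. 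Third, a factorization $M^{(2)}=M^{(3)}(M^{\mathrm{sol}})^{-1}$ (with an additional local-parametrix factor in region II), so that the solution is assembled from a soliton component, local parametrices, and a small $\bar\partial$-correction $M^{(3)}$. The $\bar\partial$-correction is controlled exactly as in \cite{DM,CJ,YF}: the associated integral operator $I-\mathcal C_{\bar\partial}$ is invertible for $t$ large, and $M^{(3)}-I$ contributes $\mathcal O(t^{-1+\rho})$ in region I and $\mathcal O(t^{-3/4})$ in region II.

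In region I there are no real stationary points, so after the deformation the jump is exponentially negligible and the leading behaviour is carried entirely by the residue conditions at $\{\zeta_n\}$. Solving the residue-only RHP with data $\{\tilde r\equiv 0,\{\zeta_n,\tilde C_n\}\}$ yields the reflectionless $N$-soliton $u^{sol,N}$, which together with the $\bar\partial$ bound gives \eqref{uusol}. To pass to \eqref{resol} I would use the ordering $\re\zeta_1>\cdots>\re\zeta_N$ and note that along any ray $x/t=\xi>3$ (or $\xi<-\tfrac38$) at most one soliton trajectory is not exponentially separated from $(x,t)$; a triangularity/localization argument on the residue-only RHP then shows $u^{sol,N}=\sum_n\mathcal U^{sol}(\zeta_n;x,t)+\mathcal O(t^{-1+\rho})$, the error absorbing the pairwise soliton interactions. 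In region II the discrete spectrum is absent (or asymptotically invisible), and the leading order comes from local parabolic-cylinder (Zakharov--Manakov) parametrices glued at the $12$ (resp.\ $24$) stationary points; summing their $t^{-1/2}$ contributions and feeding the global parametrix into the parametric reconstruction formula for $u$ — the $\partial_t$ of the appropriate entries at the representative point $e^{\pi\mathrm i/6}$ — produces the $t^{-\frac12}f_1(e^{\pi\mathrm i/6};x,t)$ term with $\mathcal O(t^{-3/4})$ remainder.

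The main obstacle throughout is the $3\times 3$ matrix structure and its cube-root symmetry. Unlike the $2\times2$ CH/NLS setting, the recovery of $u(x,t)$ from the RH solution is only implicit: it passes through a $(y,t)\to(x,t)$ change of variables, so one must track several matrix entries and their $x$- and $t$-derivatives rather than a single residue. The signature table for $\theta$ has a far richer geometry that must be organized sector by sector (this is the source of the $24$ versus $12$ phase-point dichotomy), and each local parametrix has to be constructed inside the correct $2\times2$ sub-block of the $3\times3$ problem and then made consistent across the entire $\omega$-orbit of stationary points. Establishing these facts, and in particular the uniformity of every estimate as $\xi$ approaches the boundary rays $\xi=-\tfrac38$, $\xi=0$ and $\xi=3$, is the technical heart of the argument; I expect the $\omega$-symmetric parametrix matching in region II-1 and the clean soliton separation in region I to be the two most delicate points.
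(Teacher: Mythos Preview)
Your proposal is essentially the paper's own argument: the same $\bar\partial$-steepest-descent chain $M\to M^{(1)}\to M^{(2)}$, the same scalar conjugation producing the $\tilde C_n$, the same $\bar\partial$-extension and decomposition into a reflectionless soliton piece plus local parabolic-cylinder models plus a small-norm error, with the same error rates $t^{-1+\rho}$ and $t^{-3/4}$ in regions I and II respectively. One technical device you do not mention but the paper uses is the passage from the $3\times3$ matrix RHP to the row-vector RHP $m(k)=(1,1,1)M(k)$, which is what kills the sixth-order singularities at the points $\varkappa_j=\omega^{j-1}$ and makes the reconstruction formula at $k=e^{\pi\mathrm i/6}$ clean; otherwise your outline matches the paper section by section, and your concern about uniformity as $\xi$ approaches the boundary rays is in fact not addressed in the paper (it works strictly inside the open regions).
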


The results above also confirm the soliton resolution conjecture and asymptotic stability of $N$  soliton solutions for
 the DP  equation respectively.

\subsection{Organization of the paper }\label{plan}

In Section \ref{2}, we get down to the inverse scattering transform. Based on the spectral analysis of Lax pair, a
basic vector RH problem for $m(k)$ associated with  the Cauchy problem  \eqref{DP}-(\ref{intva}) is established.
Moving on to Section \ref{3}, the RH problem for $m(k)$ is normalized  into a RH problem for $m^{(1)}(k)$,
which converts the residue conditions of poles away from the critical lines into the exponentially decay jumps on the contours around the poles.
We have implemented the vector RH technique, which effectively eradicates singularities present in the initial matrix-valued RH problem and significantly aids in our quest to recover the potential.
Section \ref{4} marks a pivotal stride in opening $\bar{\partial}$-lenses to construct a mixed $\bar{\partial}$-RH problem for $m^{(2)}(k)$.
This problem is skillfully decomposed into a pure RH problem for $M^{R}(k)$ and a pure $\bar{\partial}$-problem for $m^{(3)}(k)$ solved in Section \ref{7}. Our exploration reveals that the large-time contribution of $M^{R}(k)$ can be attributed to two key factors. There is a distinct contribution stemming from the discrete spectrum, where we solve a reflectionless RH problem $M^{r}(k)$
for the soliton components in Section \ref{5}. In Section \ref{6}, we encounter another contribution originating from jump contours, which is approximated by a local solvable model $M^{lo}(k)$ near phase points, while the residual error function is from a small norm RH problem outside phase points. Finally, we bring forth the comprehensive proof of Theorem \ref{th1} in Section \ref{8}.

\section{Inverse  scattering transform  and  RH Problem}\label{RHconstruct}\label{2}

 In this section, we  state some  basic
results on  the   inverse scattering transform and the RH problem  associated with the Cauchy problem
  \eqref{DP}-\eqref{intva}  as a preparatory work.   The details   can be found in
\cite{AD2}.

\subsection{The Lax pair and spectral analysis}

It is worth noting that without loss of generality, one can select $\kappa=1$  in the DP equation \eqref{DP}.
Then the DP equation \eqref{DP} admits the Lax pair
\begin{equation}\label{lax pair}
     \Phi_x=U\Phi, \quad \Phi_t=V\Phi,
\end{equation}
where
\begin{equation}
\begin{split}
& U=\left(\begin{array}{ccc} 0 & 1 & 0 \\ 0 & 0 & 1 \\ z^3q^3 & 1 & 0\end{array}\right), \quad
 V=\left(\begin{array}{ccc}  u_x-\frac{2}{3 } z^{-3} & -u & z^{-3}  \\ u+1 &  z^{-3} & -u \\ u_x-z^3uq^3 & 1 & -u_x+ z^{-3}\end{array}\right),\\
\end{split}
\end{equation}
and $q=(1+u-u_{xx})^{1/3}$. With the innovative scale defined as $y(x,t)=x-\int_{x}^{\infty}(q(\varsigma,t)-1)\mathrm{d}\varsigma$, we derive the Volterra  integral equation
\begin{equation}\label{INTM}
M(z)=I+\int_{ \pm \infty}^{x} \mathrm{e}^{Q(x, z)-Q(\varsigma, z)}\left(\hat{U} M(\varsigma, z)\right) \mathrm{e}^{-Q(x, z)+Q(\varsigma, z)} \mathrm{d}\varsigma,
\end{equation}
where
\begin{align}
&Q=y(x,t)\Lambda(z)+t H(z), \\
&\Lambda(z)=\operatorname{diag}\left\{\lambda_1(z), \lambda_2(z), \lambda_3(z)\right\}, \quad H(z)=\frac{1}{3z^3}I+\Lambda^{-1}(z),\\
&\hat{U}=P^{-1}(z)\left(\begin{array}{ccc} \frac{q_x}{q}&0&0 \\ 0&0&0 \\ 0&\frac{1}{q}-q&-\frac{q_x}{q} \end{array}\right)P(z), \quad P(z)= \left(\begin{array}{ccc} 1 & 1 & 1 \\\lambda_{1}(z) & \lambda_{2}(z) & \lambda_{3}(z) \\
\lambda_{1}^{2}(z) & \lambda_{2}^{2}(z) & \lambda_{3}^{2}(z) \end{array}\right).
\end{align}
Here $\lambda_j(z),\ j=1,2,3$ satisfy the algebraic equation
\begin{align}
\lambda^3(z)-\lambda(z)=z^3(k),
\end{align}
where
\begin{align}
z(k)=\frac{1}{\sqrt{3}} k\left(1+\frac{1}{k^{6}}\right)^{1/3}, \quad \lambda_{j}(k)=\frac{1}{\sqrt{3}}\left(\omega^{j} k+\frac{1}{\omega^{j} k}\right), \quad \omega=\mathrm{e}^{\frac{2\pi\mathrm{i}}{3}}.
\end{align}

We define
\begin{equation}
\Sigma=\mathop{\cup}\limits_{j=1}^6l_{j}, \quad l_{j}=\mathrm{e}^{\frac{(j-1)\pi\mathrm{i}}{3}}\mathbb{R}_{+},
\end{equation}
which divides  the $k$-plane into six sectors
\begin{equation}
D_{j}=\left\{k\mid\frac{\pi}{3}(j-1)<\arg k<\frac{\pi}{3}j\right\}, \quad j=1, \ldots, 6.
\end{equation}

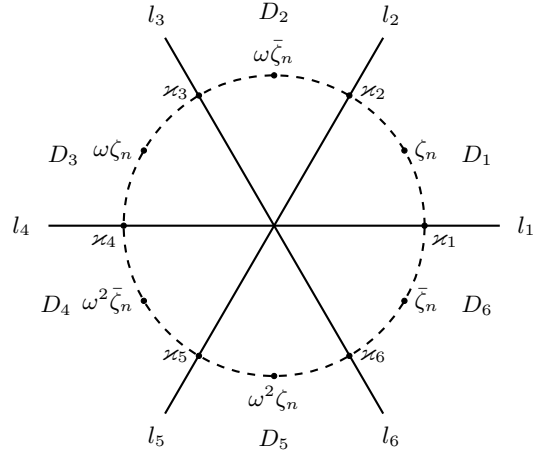
\begin{figure}[H]
\begin{center}
\begin{tikzpicture}[scale=0.5]
\draw [thick](-6,0)--(6,0);
\draw [thick](-2.9,-5)--(2.9,5);
\draw [thick](-2.9,5)--(2.9,-5);
\draw [thick, dashed] (0,0)circle(4cm);
\node  [right]  at (6.2,0) {\footnotesize$l_1$};
\node  [left]  at (-6.2,0) {\footnotesize$\l_4$};
\node  [right]  at (2.6,5.6) {\footnotesize$l_2$};
\node  [left]  at (-2.6,-5.6) {\footnotesize$l_5$};
\node  [left]  at (-2.6,5.6) {\footnotesize$l_3$};
\node  [right]  at (2.6,-5.6) {\footnotesize$\l_6$};
\filldraw[black] (4,0) circle [radius=0.07];
\filldraw[black] (2,3.464) circle [radius=0.07];
\filldraw[black] (-2,3.464) circle [radius=0.07];
\filldraw[black] (-4,0) circle [radius=0.07];
\filldraw[black] (-2,-3.464) circle [radius=0.07];
\filldraw[black] (2,-3.464) circle [radius=0.07];

\node  [right]  at (3.9,-0.4) {\footnotesize$\varkappa_1$};
\node  [right] at (2,3.55) {\footnotesize$\varkappa_2$};
\node  [left]  at (-2,3.55) {\footnotesize$\varkappa_3$};
\node  [left]  at (-3.9,-0.4) {\footnotesize$\varkappa_4$};
\node  [left]  at (-2,-3.464) {\footnotesize$\varkappa_5$};
\node  [right]  at (2,-3.464) {\footnotesize$\varkappa_6$};

\node  [below]  at (5.4,2.4) {\footnotesize$D_1$};
\node  [below]  at (0,6.2) {\footnotesize$D_2$};
\node  [below]  at (-5.6,2.4) {\footnotesize$D_3$};
\node  [below]  at (-5.8,-1.6) {\footnotesize$D_4$};
\node  [above]  at (0,-6.2) {\footnotesize$D_5$};
\node  [below]  at (5.4,-1.6) {\footnotesize$D_6$};

\filldraw[black] (3.464,2) circle [radius=0.07];
\filldraw[black] (0,4) circle [radius=0.07];
\filldraw[black] (-3.464,2) circle [radius=0.07];
\filldraw[black] (-3.464,-2) circle [radius=0.07];
\filldraw[black] (0,-4) circle [radius=0.07];
\filldraw[black] (3.464,-2) circle [radius=0.07];

\node  [right]  at (3.464,2) {\footnotesize$\zeta_n$};
\node  [above]  at (0,4) {\footnotesize$\omega\bar{\zeta}_n$};
\node  [left]  at (-3.464,2) {\footnotesize$\omega\zeta_n$};
\node  [left]  at (-3.464,-2) {\footnotesize$\omega^2\bar{\zeta}_n$};
\node  [below]  at (0,-4)       {\footnotesize$\omega^2\zeta_n$};
\node  [right]  at (3.464,-2) {\footnotesize$\bar{\zeta}_n$};

\end {tikzpicture}
\end{center}
\caption{The critical rays $l_j$,  analytical domains $D_j$, singularity  points $\varkappa_j, j=1,\dots,6$ and discrete spectrums $\zeta_n$, $ \omega^{\ell}\zeta_n, \omega^{\ell}\bar{\zeta}_n, n=1,\dots,N,\ \ell=0,1,2$ in the $k$-plane.}
\label{analregion&spectrumsdis}
\end{figure}


\begin{proposition}
 $M(k):=\Big( M_{ij}(k;x,t)\Big)_{3\times 3}$ satisfies symmetry relations:
\begin{align}\label{S1}
M(k)&=\Gamma_1\overline{M(\bar{k})}\Gamma_1=\Gamma_2\overline{M(\omega^2\bar{k})}\Gamma_2=\Gamma_3\overline{M(\omega\bar{k})}
\Gamma_3\nonumber\\
&=\Gamma_4M(\omega k)\Gamma_4^{-1}=\overline{M(\bar{k}^{-1})},
\end{align}
where
\begin{equation}\label{Gamma}
    \Gamma_1=\left(\begin{array}{ccc} 0 & 1 & 0 \\ 1 & 0 & 0 \\ 0 & 0 & 1\end{array}\right),\
    \Gamma_2=\left(\begin{array}{ccc} 0 & 0 & 1 \\ 0 & 1 & 0 \\ 1 & 0 & 0\end{array}\right), \
    \Gamma_3=\left(\begin{array}{ccc} 1 & 0 & 0 \\ 0 & 0 & 1 \\ 0 & 1 & 0 \end{array}\right), \
    \Gamma_4=\left(\begin{array}{ccc} 0 & 0 & 1 \\ 1 & 0 & 0 \\ 0 & 1 & 0 \\ \end{array}\right).
\end{equation}
\end{proposition}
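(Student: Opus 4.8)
The plan is to derive every identity in \eqref{S1} from an invariance of the data entering the Volterra integral equation \eqref{INTM}, together with uniqueness of its solution. First I would record how the scalar building blocks behave under the five maps $k\mapsto\bar k$, $k\mapsto\omega^{2}\bar k$, $k\mapsto\omega\bar k$, $k\mapsto\omega k$ and $k\mapsto\bar k^{-1}$. Writing $z^{3}(k)=\tfrac{1}{3\sqrt{3}}(k^{3}+k^{-3})$ makes it immediate that $z^{3}(k)$ is invariant under all five of them, since it has real coefficients, depends only on $k^{3}$, and is unchanged by $k\mapsto 1/k$. From $\lambda_{j}(k)=\tfrac{1}{\sqrt{3}}(\omega^{j}k+\omega^{-j}k^{-1})$ one checks directly that the triple $(\lambda_{1},\lambda_{2},\lambda_{3})$ is merely permuted; e.g. $\overline{\lambda_{j}(\bar k)}=\lambda_{-j}(k)$, $\lambda_{j}(\omega k)=\lambda_{j+1}(k)$ and $\overline{\lambda_{j}(\bar k^{-1})}=\lambda_{j}(k)$ (indices mod $3$), with analogous formulas for the two remaining maps. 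Consequently the Vandermonde matrix $P(z)$, whose columns are $(1,\lambda_{j},\lambda_{j}^{2})^{T}$, is transformed into itself up to right multiplication by a permutation matrix, and I would verify that these permutation matrices are precisely $\Gamma_{1},\Gamma_{2},\Gamma_{3},\Gamma_{4}$ (and the identity matrix for $k\mapsto\bar k^{-1}$) listed in \eqref{Gamma}.

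Next I would propagate these relations through $\Lambda(z)=\operatorname{diag}(\lambda_{1},\lambda_{2},\lambda_{3})$, $H(z)=\tfrac{1}{3z^{3}}I+\Lambda^{-1}(z)$ and $\hat U(z)=P^{-1}(z)\,A\,P(z)$, where $A$ is the fixed matrix built from $q_{x}/q$ and $1/q-q$. Because $y(x,t)$, $t$, $q$, $q_{x}$ are all real, $\Lambda$ and $H$, hence the phase $Q(x,z)=y\Lambda(z)+tH(z)$, pick up exactly the corresponding $\Gamma$-conjugation, so that for instance $\Gamma_{4}Q(x,\omega k)\Gamma_{4}^{-1}=Q(x,k)$ and $\Gamma_{1}\,\overline{Q(x,\bar k)}\,\Gamma_{1}=Q(x,k)$, and likewise for the other three maps; while in $\hat U=P^{-1}AP$ the conjugating permutation enters only as a similarity, giving $\Gamma_{4}\hat U(\omega k)\Gamma_{4}^{-1}=\hat U(k)$, $\Gamma_{1}\,\overline{\hat U(\bar k)}\,\Gamma_{1}=\hat U(k)$, etc. The delicate point here, and what I expect to be the main obstacle, is to fix the labelling $\lambda_{j}\leftrightarrow D_{j}$ of Figure \ref{analregion&spectrumsdis} consistently along the contour $\Sigma$ so that the correct permutation matrices actually appear, and to handle the branch points $\varkappa_{j}$ of $z(k)$ — particularly for the reflection $k\mapsto\bar k^{-1}$, which interchanges the behaviour of the problem near $k=\infty$ and near $k=0$.

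Finally I would invoke uniqueness in \eqref{INTM}. Fixing one map, say $k\mapsto\omega k$, I conjugate \eqref{INTM} evaluated at $\omega k$ by $\Gamma_{4}(\cdot)\Gamma_{4}^{-1}$; using $\Gamma_{4}\,\mathrm{e}^{Q(x,\omega k)-Q(\varsigma,\omega k)}\,\Gamma_{4}^{-1}=\mathrm{e}^{Q(x,k)-Q(\varsigma,k)}$, $\Gamma_{4}\hat U(\varsigma,\omega k)\Gamma_{4}^{-1}=\hat U(\varsigma,k)$ and $\Gamma_{4}I\Gamma_{4}^{-1}=I$, one finds that $k\mapsto\Gamma_{4}M(\omega k)\Gamma_{4}^{-1}$ satisfies the very same Volterra equation \eqref{INTM} as $M(k)$. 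Since \eqref{INTM} has an integrable kernel on $(\pm\infty,x)$, its solution in the relevant function class is unique by the standard successive-approximation (Neumann series) argument, so $M(k)=\Gamma_{4}M(\omega k)\Gamma_{4}^{-1}$. Running the identical argument with complex conjugation in place of matrix conjugation for the three antiholomorphic maps $k\mapsto\bar k,\ \omega^{2}\bar k,\ \omega\bar k$ — legitimate because $x$ and $\varsigma$ are real, so $\overline{(\cdot)}$ commutes with $\int_{\pm\infty}^{x}$, and $\overline{I}=I$ — and with the identity matrix for $k\mapsto\bar k^{-1}$ yields the full chain of relations in \eqref{S1}.
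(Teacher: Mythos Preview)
Your approach is correct and is essentially the standard one: the paper itself does not prove this proposition but merely states it, deferring the details to \cite{AD2}, and what you outline is precisely the argument used there --- namely, verify the symmetries at the level of $z^{3}$, $\lambda_{j}$, $P$, $Q$ and $\hat U$, and then appeal to uniqueness of the Volterra solution \eqref{INTM}.

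One small refinement worth making explicit: the integral equation \eqref{INTM} is in fact a shorthand for a \emph{column-wise} (or entry-wise) Volterra system in which the lower limit $+\infty$ or $-\infty$ is chosen according to the sign of $\operatorname{Re}(\lambda_{i}-\lambda_{j})$ in the relevant sector $D_{\nu}$. Under each of the five maps the sectors $D_{\nu}$ are permuted among themselves (rotation by $\omega$ cycles them, reflections $k\mapsto\bar k$, $\omega\bar k$, $\omega^{2}\bar k$ reflect them, and $k\mapsto\bar k^{-1}$ fixes each sector set-wise since $|k|=1$ is preserved), and the corresponding permutation of the $\lambda_{j}$'s is exactly compatible with this, so the column-by-column choice of $\pm\infty$ is carried to itself. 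That compatibility is what makes the ``same Volterra equation'' claim go through globally rather than just locally; you flagged this as a potential obstacle, and indeed it is the only place where care is needed, but it resolves cleanly once one tracks the sector permutation alongside the $\lambda_{j}$ permutation.
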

The limiting values of $M(k)$ satisfies the jump relation
\begin{equation}\label{jump0}
M_+(k)=M_-(k)e^{Q}V_0(k)e^{-Q}, \quad k\in\Sigma,
\end{equation}
where $V_0(k)$ is determined by the initial value $u_0$. Take $k\in\mathbb{R}^{\pm}$ as an example, $V_0(k)$ has  a special matrix structure
\begin{equation*}
V_0(k)=\left(\begin{array}{ccc} 1 & \bar{r}_{\pm}(k) & 0 \\ -r_{\pm}(k) & 1-|r_{\pm}(k)|^2 & 0 \\ 0 & 0 & 1\end{array}\right),
\end{equation*}
$r_{\pm}(k)\in L^{\infty}(\mathbb{R})$ are  scalar functions with $r(k)=\mathcal{O}(k^{-1})$ as $k\to\infty$, which together with the symmetry $r_{\pm}(k)=\overline{r_{\pm}(\bar{k}^{-1})}$ leads to $\lim\limits_{k\to 0}r_{\pm}(k)=0$, and then $r_{\pm}(k)\in L^{2}(\mathbb{R}^{\pm})$. Naturally, we define the reflection coefficient as
\begin{align*}
	r(k)=\left\{ \begin{array}{ll}
		r_\pm(k),   & k\in \mathbb{R}^\pm,\\
		0  , & k=0.
	\end{array}\right.
\end{align*}
It follows that $r(k)\in L^{\infty}(\mathbb{R})\cap L^{2}(\mathbb{R})$. Moreover, $r(\pm1)=0$.

According to \cite{AD2}, the eigenvalues of spectral functions are limited on the unit circle $\{k\in \mathbb{C}: |k|=1\}$, which
are simple and finite. Let us denote two distinct categories of eigenvalues as $k_j \ (j=1,\cdots,N_1)$, $  k_{\ell}^A \ (\ell=1,\cdots,N_1^A)$ within the domain $D_1$, where $N=N_1+N_1^A$. To further elucidate this concept, we introduce the subsequent indicator sets
\begin{equation}\label{index set}
\widetilde{\mathcal{N}}=\{1,\cdot\cdot\cdot,N_1\}, \quad \widetilde{\mathcal{N}}^A=\{N_1+1,\cdot\cdot\cdot,N\}, \quad \mathcal{N}=\widetilde{\mathcal{N}}\cup\widetilde{\mathcal{N}}^A.
\end{equation}
For convenience, we rewrite the $\zeta_n$ instead of $k_n$ to represent the discrete spectrum with
\begin{align}
   \zeta_n=\left\{
   \begin{aligned}
   & k_n,   \quad n\in\widetilde{\mathcal{N}},\\
   &k_{n-N_1}^A, \quad n\in\widetilde{\mathcal{N}}^A.
   \end{aligned}
        \right.
\end{align}
Assisted by the norming constant $c_n$ for $\zeta_n, \ n\in\mathcal{N}$ , the residue conditions can be depicted in two distinctive matrix forms
\begin{align}\label{221}
  \underset{k=\zeta_n}{\rm Res}M(k)=\left\{
   \begin{aligned}
   & \lim_{k\rightarrow \zeta_n}M(k)e^Q\left(\begin{array}{ccc} 0 & -c_n& 0\\ 0 & 0 & 0 \\ 0 & 0 & 0 \end{array}\right)e^{-Q}, \quad n\in\widetilde{\mathcal{N}},\\
   & \lim_{k\rightarrow \zeta_n}M(k)e^Q\left(\begin{array}{ccc} 0 & 0 & 0\\ 0 & 0 & -c_n \\ 0 & 0 & 0 \end{array}\right)e^{-Q}, \quad n\in\widetilde{\mathcal{N}}^A.
   \end{aligned}
        \right.
\end{align}

Consider the discrete spectrums in other domains $D_j, \ j=2,\dots,6$. Based on the symmetries \eqref{S1}, it is apparent that for every $\zeta_n$, $n\in\mathcal{N}$, the eigenvalues of spectral functions also include $\omega\bar{\zeta}_n, \ \omega\zeta_n, \ \omega^2\bar{\zeta}_n, \ \omega^2\zeta_n, \ \bar{\zeta}_n$. Likewise, we can represent
\begin{align*}
&\zeta_{n+N}=\omega\bar{\zeta}_n, \quad \zeta_{n+2N}=\omega\zeta_n, \\
&\zeta_{n+3N}=\omega^2\bar{\zeta}_n, \quad \zeta_{n+4N}=\omega^2\zeta_n, \quad \zeta_{n+5N}=\bar{\zeta}_n,
\end{align*}
and
\begin{equation}
\begin{aligned}
&C_n=c_n, \quad C_{n+N}=\omega\bar{c}_n, \quad C_{n+2N}=\omega c_n, \\
& C_{n+3N}=\omega^2\bar{c}_n, \quad C_{n+4N}=\omega^2c_n, \quad C_{n+5N}=\bar{c}_n.
\end{aligned}
\end{equation}
In summary, the discrete spectrum can be characterized as
\begin{equation}
\mathcal{K}=\{\zeta_n\}_{n=1}^{6N},
\end{equation}
whose distribution on the $k$-plane is shown in Figure \ref{analregion&spectrumsdis}.

\subsection{Set up of RH problem}

To establish the basic RH problem, we consider the phase function
\begin{align}\label{oritheta12}
&\theta_{12}(k)=\left(k-\frac{1}{k}\right)\left(\hat{\xi}-\frac{3}{k^{2}+k^{-2}-1}\right), \quad \hat{\xi}:=\frac{y}{t},\\
&\theta_{13}(k)=-\theta_{12}(\omega^2k), \quad \theta_{23}(k)=\theta_{12}(\omega k).
\end{align}
Consequently, the ensuing RH problem emerges.
\begin{RHP}\label{rhpM}
Find a $3\times 3$ matrix-valued function $M(k):= M(k;y,t)$ such that
\begin{itemize}
\item Analyticity: $M(k)$ is meromorphic in $\mathbb{C}\backslash\Sigma$.
\item Jump relation: $M_{+}(k)=M_{-}(k)V(k)$, $k\in\mathbb{R}\cup\omega\mathbb{R}\cup\omega^2\mathbb{R}$, where
\begin{align}
   V(k)=\left\{
   \begin{aligned}
    &\left(\begin{array}{ccc} 1 & \bar{r}(k) e^{\mathrm{i}t\theta_{12}(k)} & 0 \\ -r(k)e^{-\mathrm{i}t\theta_{12}(k)} & 1-|r(k)|^2 & 0 \\ 0 & 0 & 1 \end{array}\right):= V^0(k), \quad k\in\mathbb{R},\\
    &\Gamma_4^2V^0(\omega^2k)\Gamma_4^{-2}, \quad k\in\omega\mathbb{R},\\
    &\Gamma_4V^0(\omega k)\Gamma_4^{-1}, \quad k\in\omega^2 \mathbb{R}.\\
   \end{aligned}
        \right.
\end{align}
\item Asymptotic behaviors: $M(k)=I+\mathcal{O}(k^{-1}), \quad  k\rightarrow\infty$.
\item Singularities: $M(k)$ has singularity at $\varkappa_{\nu}, \ \nu=1, \ldots, 6$ with
\begin{align}
   M(k)=\left\{
   \begin{aligned}
    &M_{\pm 1}(k)+\mathcal{O}(1), \quad k\rightarrow\pm1,\\
    &\Gamma_4^2M_{\pm 1}(\omega^2k)\Gamma_4^{-2}+\mathcal{O}(1), \quad k\rightarrow\pm\omega,\\
    &\Gamma_4M_{\pm 1}(\omega k)\Gamma_4^{-1}+\mathcal{O}(1), \quad k\rightarrow\pm\omega^2,\\
   \end{aligned}
        \right.
\end{align}
where
\begin{align*}
&M_{\pm 1}(k)=\frac{1}{k\mp 1}\left(\begin{array}{ccc} \alpha_{\pm} & \alpha_{\pm}  & \beta_{\pm} \\ -\alpha_{\pm} & -\alpha_{\pm}  & -\beta_{\pm} \\ 0 & 0 & 0 \end{array}\right),\\
&\alpha_{\pm}=-\bar{\alpha}_{\pm}, \ \beta_{\pm}=-\bar{\beta}_{\pm}.
\end{align*}
\item Residue conditions: For $\zeta_n\in D_1\cap\mathcal{K}$,
\begin{equation}\label{resM11}
\begin{aligned}
&\res\limits_{k=\zeta_n} M(k)=\lim_{k\rightarrow \zeta_n}M(k)B_n,\\
&\res\limits_{k=\omega \bar{\zeta}_n}M(k)=\omega\lim_{k\rightarrow \omega\bar{\zeta}_n}M(k)\Gamma_3 \bar{B}_n\Gamma_3:=\lim_{k\rightarrow \omega\bar{\zeta}_n}M(k)B_{n+N},\\
&\res\limits_{k=\omega \zeta_n}M(k)=\omega\lim_{k\rightarrow \omega \zeta_n}M(k)\Gamma_4^2 B_n\Gamma_4^{-2}:=\lim_{k\rightarrow \omega \zeta_n}M(k)B_{n+2N},\\
&\res\limits_{k=\omega^2\bar{\zeta}_n}M(k)=\omega^2 \lim_{k\rightarrow\omega^2\bar{\zeta}_n}M(k)\Gamma_2\bar{B}_n\Gamma_2:=\lim_{k\rightarrow \omega^2\bar{\zeta}_n}M(k)B_{n+3N},\\
&\res\limits_{k=\omega^2\zeta_n}M(k)=\omega^2\lim_{k\rightarrow\omega^2\zeta_n}M(k)\Gamma_4 B_n\Gamma_4^{-1}:=\lim_{k\rightarrow \omega^2\zeta_n}M(k)B_{n+4N},\\
&\res\limits_{k=\bar{\zeta}_n} M(k)=\lim_{k\rightarrow \bar{\zeta}_n}M(k)\Gamma_1\bar{B}_n\Gamma_1:=\lim_{k\rightarrow \bar{\zeta}_n}M(k)B_{n+5N},
\end{aligned}
\end{equation}
where
\begin{equation}\label{res22}
   B_n=\left\{
   \begin{aligned}
    &\left(\begin{array}{ccc} 0 & -c_ne^{\mathrm{i}t\theta_{12}(\zeta_n)} & 0 \\ 0 & 0 & 0 \\ 0 & 0 & 0 \end{array}\right), \quad n\in\widetilde{\mathcal{N}},\\
    &\left(\begin{array}{ccc} 0 & 0 & 0 \\ 0 & 0 & -c_ne^{\mathrm{i}t\theta_{23}(\zeta_n)} \\ 0 & 0 & 0 \end{array}\right), \quad n\in\widetilde{\mathcal{N}}^A,\\
   \end{aligned}
        \right.
\end{equation}
\end{itemize}
\end{RHP}

\subsection{RH characterization of the solution for the DP equation}

Introducing a vector-valued function
\begin{equation}\label{mtov}
m(k):=(m_1(k), \ m_2(k),\  m_3(k))=(1, \ 1, \ 1)M(k),
\end{equation}
which can be viewed as an elegant shift from the $3\times 3$ matrix RH problem to the $1\times 3$ vector RH problem.
This transition serves to remove the singularities at $\varkappa_j, j=1,\dots,6$, thereby ushering in the subsequent vector RH problem.
\begin{RHP}\label{rhpvm}
Find a row vector-valued function $m(k):= m(k;y,t)$ such that
\begin{itemize}
\item  $m(k)$ is meromorphic in $\mathbb{C}\backslash\Sigma$.
\item Jump relation: $m_{+}(k)=m_{-}(k)V(k)$, $k\in\mathbb{R}\cup\omega\mathbb{R}\cup\omega^2\mathbb{R}$.
\item Asymptotic behaviors: $m(k)=\begin{pmatrix}1,& 1,& 1\end{pmatrix}+\mathcal{O}(k^{-1}), \quad  k\rightarrow\infty.$
\item $m(k)$ has the same form of residue conditions as $M(k)$ in RH problem \ref{rhpM}.
\end{itemize}
\end{RHP}
The solution  for the Cauchy problem \eqref{DP}--\eqref{intva} can be expressed in the following parametric form
\begin{equation}\label{rescon}
\begin{aligned}
&u(y,t)=\frac{\partial}{\partial t}\log{\frac{m_2}{m_1}}(e^{\frac{\pi}{6}\mathrm{i}};y,t),\\
&x(y,t)=y+\log{\frac{m_2}{m_1}}(e^{\frac{\pi}{6}\mathrm{i}};y,t).
\end{aligned}
\end{equation}

\section{Normalization of the RH problem}\label{3}

\subsection{Phase points and signature table}

To deal with the oscillatory components $e^{\pm\mathrm{i}t\theta_{ij}(k)}$ in RH problem \ref{rhpM}, we consider the imaginary part of $\theta_{12}(k)$
\begin{equation}\label{Imthe12}
\begin{aligned}
&{\rm Im}\theta_{12}(k)=\hat{\xi}(1+|k|^{-2})\operatorname{Im}k+3\operatorname{Im}k\\
&\times \frac{|k|^6+2|k|^4+2|k|^2-4(1+|k|^2)\operatorname{Re}^2k+1}{|k|^8-2(1+|k|^4)(\operatorname{Re}^2k-\operatorname{Im}^2k)
+3(\operatorname{Re}^2k-\operatorname{Im}^2k)^2-4\operatorname{Re}^2k\operatorname{Im}^2k+1}.
\end{aligned}
\end{equation}
The signature table of ${\rm Im}\theta_{12}(k)$ is presented in Figure \ref{figtheta}. Further, we conclude the distribution of stationary phase points depends on $\hat{\xi}$ as follows
    \begin{itemize}
       \item  For $\hat{\xi}\in(-\infty,-\frac{3}{8})\cup(3,+\infty)$, there is no stationary phase point on $\mathbb{R}$;
       \item  For $\hat{\xi}\in(-\frac{3}{8},0)$, there are $8$ stationary phase points on $\mathbb{R}$, see Figure \ref{case1};
       \item For $\hat{\xi}\in\left[0,3\right)$, there are $4$ stationary phase points on $\mathbb{R}$, see Figure \ref{case2}.
    \end{itemize}
The number of phase points can be represented as
\begin{align}\label{nop}
{p(\hat{\xi})=\left\{
        \begin{aligned}
    &0, \quad \hat{\xi}\in(-\infty,-\frac{3}{8})\cup(3,+\infty),\\
    &8, \quad \hat{\xi}\in(-\frac{3}{8},0),\\
    &4, \quad \hat{\xi}\in\left[0,3\right).
    \end{aligned}
        \right.}
\end{align}

\begin{figure}[htbp]
     \centering
     \subfigure[$\hat{\xi}<-\frac{3}{8}$]{\label{figure0}
      \begin{minipage}[t]{0.32\linewidth}
       \centering
       \includegraphics[width=1.52in]{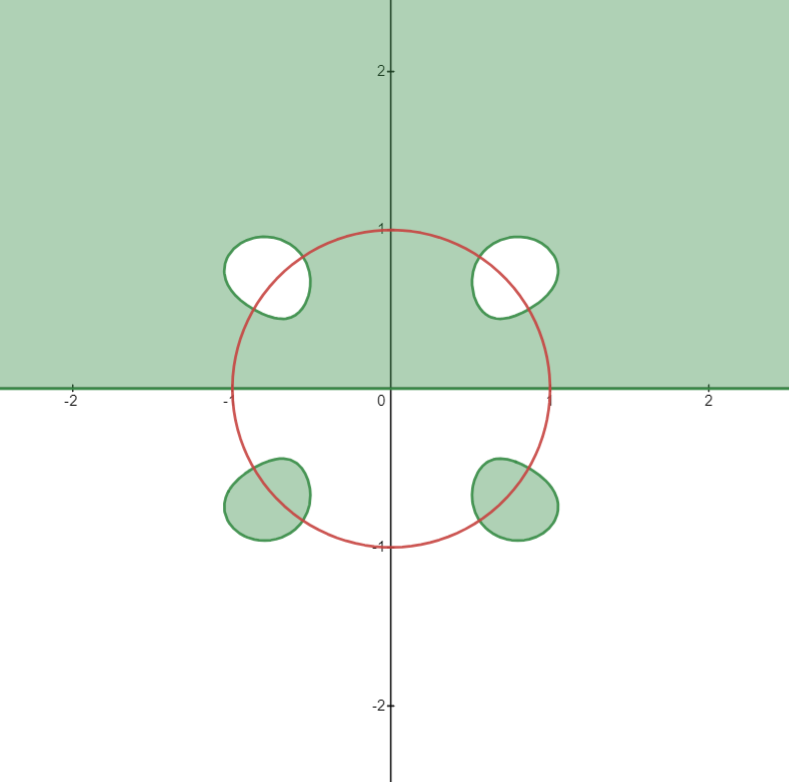}
      \end{minipage}
     }%
    \subfigure[$-\frac{3}{8}<\hat{\xi}<0$]{\label{figurec}
     \begin{minipage}[t]{0.32\linewidth}
      \centering
      \includegraphics[width=1.52in]{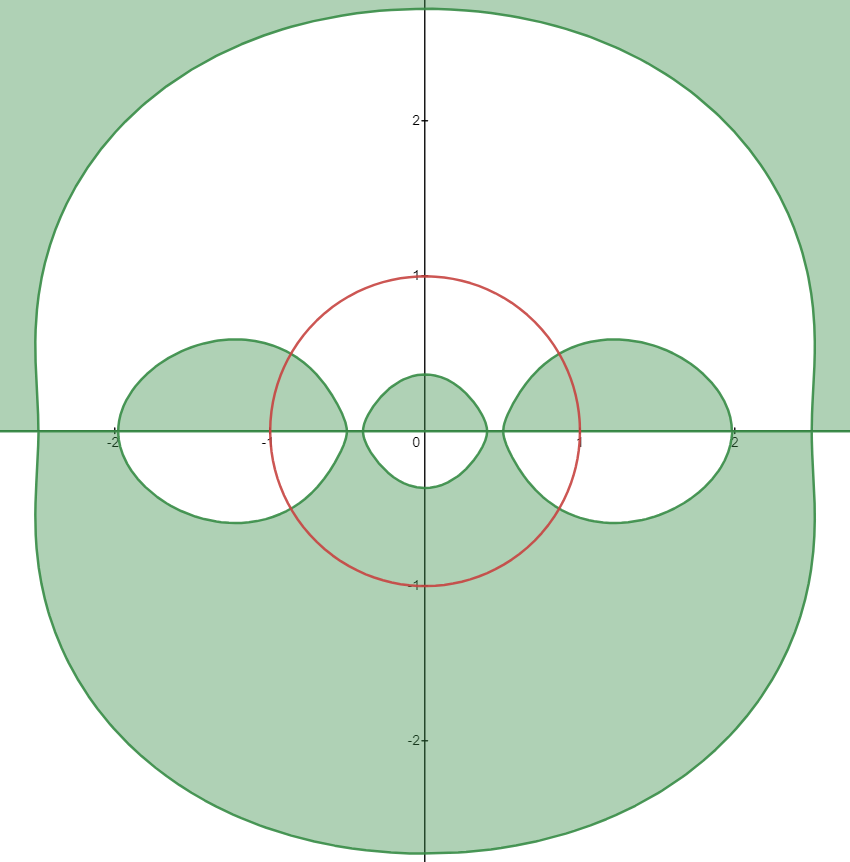}
     \end{minipage}
    }%

    \subfigure[$0\leq\hat{\xi}<3$]{\label{figured}
 \begin{minipage}[t]{0.32\linewidth}
  \centering
  \includegraphics[width=1.52in]{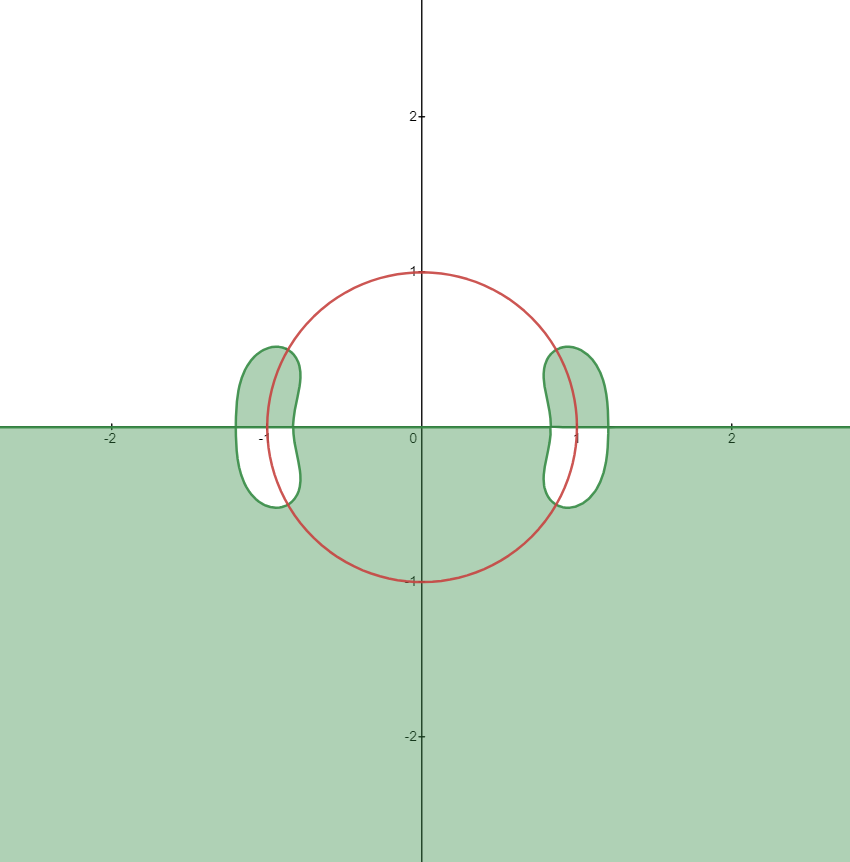}
 \end{minipage}
 }%
  \subfigure[$\hat{\xi}>3$]{\label{figuref}
      \begin{minipage}[t]{0.32\linewidth}
       \centering
       \includegraphics[width=1.52in]{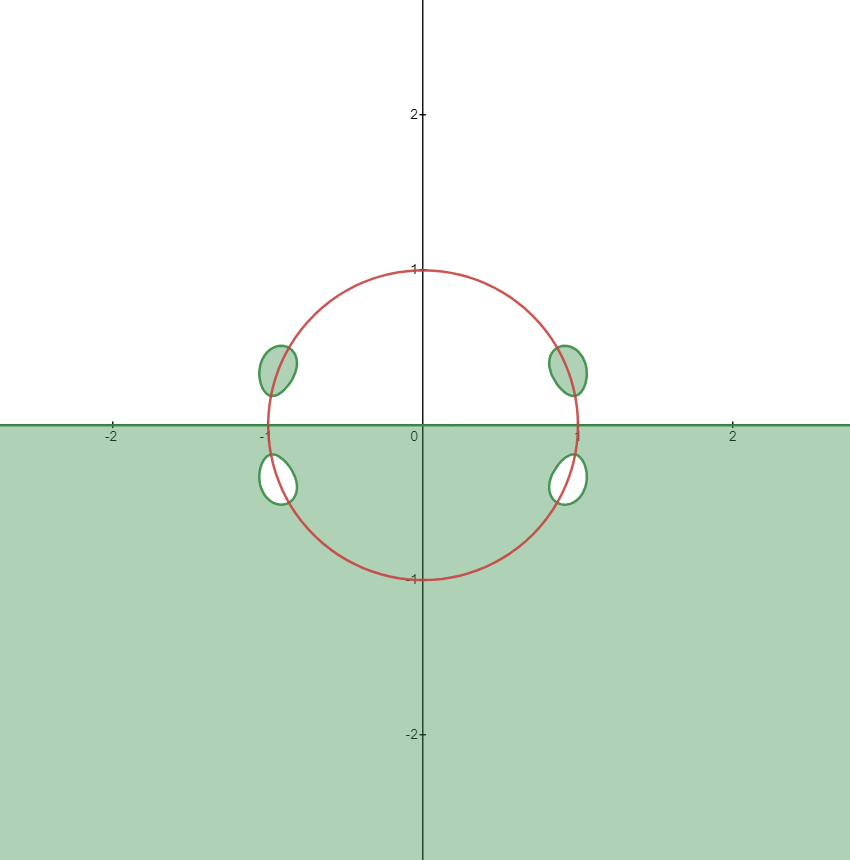}
      \end{minipage}
     }%
     \caption{\footnotesize Signature table of ${\rm Im}\theta_{12}(k)$ with different $\hat{\xi}=y/t$:
     $\textbf{(a)}$ $\hat{\xi}<-\frac{3}{8}$,
    $\textbf{(b)}$ $-\frac{3}{8}<\hat{\xi}<0$,
    $\textbf{(c)}$ $0\leq \hat{\xi}<3$,
    $\textbf{(d)}$ $\hat{\xi}>3$.
  ${\rm Im}\theta_{12}(k)<0$ in the green region and ${\rm Im}\theta_{12}(k)>0$ in the white region. Moreover, the red line is the unit circle.}
     \label{figtheta}
\end{figure}

\subsection{Conjugation}\label{conju}

The $\bar{\partial}$-nonlinear steepest descent method for oscillatory RH problems consists in a series of
transformations, in order to reach a model RH problem which can be explicitly solved. We
aim to construct a regular (holomorphic) RH problem as our basis for the
subsequent analysis, which requires the execution of two fundamental operations:
    \begin{itemize}
       \item[(i)] To investigate the asymptotics of different space-time regions, different factorizations for the
                  jump matrix $V(k)$ should be taken.
        \item[(ii)] The residue conditions for $m(k)$ should be converted into the jump conditions on the
                   auxiliary contours such that the jump matrices vanish rapidly to the identity matrix as
                   the parameter $t\rightarrow\infty$.
    \end{itemize}

The operation (i) is aided by two well known factorizations of the jump matrix
\begin{equation}\label{facjum1}
 V(k)=b(k)^{-\dag}b(k)=B(k)T_0(k)B(k)^{-\dag}, \quad k\in\mathbb{R},
\end{equation}
where
\begin{align*}
&b(k)^{-\dag}=\left(\begin{array}{ccc}  1 & 0 & 0 \\  -r(k)e^{-\mathrm{i}t\theta_{12}(k)} & 1 & 0 \\ 0 & 0 & 1 \end{array}\right), \quad b(k)=\left(\begin{array}{ccc}  1 & \bar{r}(k)e^{\mathrm{i}t\theta_{12}(k)} & 0 \\  0 & 1 & 0 \\ 0 & 0 &1 \end{array}\right),\\
&B(k)^{-\dag}=\left(\begin{array}{ccc}  1 & 0 & 0 \\ -\frac{r(k)}{1-|r(k)|^2}e^{-\mathrm{i}t\theta_{12}(k)} & 1 & 0 \\ 0 & 0 & 1 \end{array}\right), \quad B(k)=\left(\begin{array}{ccc}  1 & \frac{\bar{r}(k)}{1-|r(k)|^2}e^{\mathrm{i}t\theta_{12}(k)} & 0 \\  0 & 1 & 0 \\ 0 & 0 & 1 \end{array}\right),\\
&T_0(k)=\left(\begin{array}{ccc}  \frac{1}{1-|r(k)|^2} & 0 & 0 \\ 0 & 1-|r(k)|^2 & 0 \\ 0 & 0 & 1 \end{array}\right).
\end{align*}

To achieve operations (i) and (ii), we shall define the critical line $L(\hat{\xi}):=\frac{\sqrt{3}}{2}\sqrt{1+1/\hat{\xi}}.$ This will allow us to categorize the discrete spectrums $\zeta_n, \ n\in\mathcal{N}$ into six sets
\begin{align*}
&\Delta_{1}=\{j\in\widetilde{\mathcal{N}}: {\rm Re} \ {\zeta_j}<L(\hat{\xi})\}, \\
&\nabla_{1}=\{j\in\widetilde{\mathcal{N}}: {\rm Re} \ {\zeta_j}>L(\hat{\xi})\},\\
&\Delta_{2}=\{\ell\in\widetilde{\mathcal{N}}^A: {\rm Re} \ {\zeta_{\ell}}<L(\hat{\xi})\}, \\
&\nabla_{2}=\{\ell\in\widetilde{\mathcal{N}}^A: {\rm Re} \ {\zeta_{\ell}}>L(\hat{\xi})\},\\
&\Lambda_{1}=\{j\in\widetilde{\mathcal{N}}: \lvert{\rm Re} \ {\zeta_j}-L(\hat{\xi})\rvert<\delta_{0}\}, \\
& \Lambda_{2}=\{\ell\in\widetilde{\mathcal{N}}^A: \lvert{\rm Re} \ {\zeta_{\ell}}-L(\hat{\xi})\rvert<\delta_{0}\},
\end{align*}
where $\delta_{0}$ is a fixed small enough constant such that the sets $\{\lvert k-\zeta_n\rvert<\delta_0, \ n\in\mathcal{N}\}$ are pairwise disjoint. Furthermore, denote
\begin{equation}
\Delta=\Delta_1\cup\Delta_2, \quad \nabla=\nabla_1\cup\nabla_2, \quad \Lambda=\Lambda_1\cup\Lambda_2. \label{lambda}
\end{equation}

\begin{remark}
Noticing the fact that the discrete spectrum set $\{\zeta_n, \ n\in\mathcal{N}\}$ for $-\frac{3}{8}<\hat{\xi}<3$ is away from the critical line $L(\hat{\xi})$, it follows that $\Lambda=\varnothing$ in the case.
\end{remark}
Introducing the matrix function $T(k)=\text{diag}\{T_1(k),T_2(k),T_3(k)\}$, where
\begin{equation}\label{T}
\begin{aligned}
& T_i(k)=\frac{H(\omega^{i+1}k)}{H(\omega^{i+2}k)}, \quad i=1,2,3,\\
&H(k)=\underset{j\in\Delta_1}\prod\frac{k-\zeta_j}{k-\bar{\zeta}_j}\underset{\ell\in\Delta_2}\prod\frac{k-\omega\zeta_{\ell}}{k-\omega^2\bar{\zeta}_{\ell}}
\exp\left\{\frac{1}{2\pi\mathrm{i}}\int_{I(\hat{\xi})}\frac{\log{(1-|r(s)|^2)}}{s-k}ds\right\},
\end{aligned}
\end{equation}
$I(\hat{\xi})$ is an indicator function given by
\begin{align}
&I(\hat{\xi})=\left\{
        \begin{aligned}
    &\mathbb{R}, \quad \hat{\xi}\in(-\infty,-\frac{3}{8}),\\
    &(-\infty,  k_8)\cup\left(\mathop{\cup}\limits_{i=1}^3(k_{2i+1},k_{2i})\right)\cup(k_1,+\infty), \quad \hat{\xi}\in(-\frac{3}{8},0),\\
    &(k_4,k_3)\cup(k_2,k_1), \quad \hat{\xi}\in\left[0,3\right),\\
    &\emptyset, \quad \hat{\xi}\in(3,+\infty).
    \end{aligned}
        \right.
\end{align}

To implement the operation (ii), we shall introduce the following parameter
\begin{align}
	\varrho=\frac{1}{4}\min&\left\lbrace \min_{n\in \mathcal{N}} |\operatorname{Im}\zeta_n| ,\min_{n\in \mathcal{N},\arg k=\frac{\pi }{3}\mathrm{i}}|\zeta_n-k|,\min_{n\in
     \mathcal{N}\setminus\Lambda,\operatorname{Im}\theta_{12}(k)=0}|\zeta_n-k|,\right.\nonumber\\
	&\left. \min_{n\in\mathcal{N}}|\zeta_n-e^{\frac{\pi}{6}\mathrm{i}}|,\min_{n\neq m\in \mathcal{N}}|\zeta_n-\zeta_m|\right\rbrace .
\end{align}
Subsequently, the small disks $\scriptsize{\mathbb{D}_n:=\mathbb{D}(\zeta_n,\varrho)}$ are pairwise disjoint, also disjoint with critical lines and the contours $\Sigma$. It is crucial to emphasize that $\scriptsize{e^{\frac{\pi}{6}\mathrm{i}}\notin \mathbb{D}_n}$.

For $n=1,\cdots,6N$, let us define
\begin{align}\label{G}
G(k)=\left\{
        \begin{aligned}
    &I-\frac{B_n}{k-\zeta_n}, \quad k\in\mathbb{D}_n, \ n-k_0N\in\nabla, \ k_0\in\{0,\cdots,5\},\\
    &\begin{pmatrix} 1& 0& 0 \\-\frac{k-\zeta_n}{C_{n}e^{\mathrm{i}t\theta_{12}(\zeta_{n})}}& 1 & 0 \\ 0&	0 & 1 \end{pmatrix}, \quad k\in\mathbb{D}_n,n\in\Delta_1 \text{ or }n-2N\in\Delta_2,\\
    &\begin{pmatrix} 1&	0 & 0 \\ 0& 1 & 0\\ 0& -\frac{k-\zeta_n}{C_{n}e^{\mathrm{i}t\theta_{13}(\zeta_{n})}} & 1 \end{pmatrix}, \quad k\in\mathbb{D}_n,n-N\in\Delta_1 \text{ or }n-5N\in\Delta_2,\\
    &\begin{pmatrix} 1&	0 & -\frac{k-\zeta_n}{C_{n}e^{-\mathrm{i}t\theta_{13}(\zeta_{n})}} \\ 0& 1 & 0\\ 0&	0 & 1\end{pmatrix}, \quad k\in\mathbb{D}_n,n-2N\in\Delta_1 \text{ or }n-4N\in\Delta_2,\\
    &\begin{pmatrix} 1&	0 &  0\\ 0& 1 &  -\frac{k-\zeta_n}{C_{n}e^{-\mathrm{i}t\theta_{23}(\zeta_{n})}}\\ 0&	0 & 1\end{pmatrix}, \quad k\in\mathbb{D}_n,n-3N\in\Delta_1 \text{ or }n-N\in\Delta_2,\\
    &\begin{pmatrix} 1&	0 & 0 \\ 0& 1 & 0 \\ 0 &-\frac{k-\zeta_n}{C_{n}e^{\mathrm{i}t\theta_{23}(\zeta_{n})}} & 1 \end{pmatrix}, \quad k\in\mathbb{D}_n,n-4N\in\Delta_1 \text{ or }n\in\Delta_2,\\
    &\begin{pmatrix} 1&	-\frac{k-\zeta_n}{C_{n}e^{-\mathrm{i}t\theta_{12}(\zeta_{n})}} & 0 \\ 0& 1 & 0\\ 0&	0 & 1 \end{pmatrix}, \quad k\in\mathbb{D}_n,n-5N\in\Delta_1\text{ or } n-3N\in\Delta_2,\\
    &I, \quad elsewhere.
    \end{aligned}
        \right.
\end{align}

Consider the following contour
\begin{equation}
\begin{aligned}
&\Sigma^{(1)}=\Sigma\cup\Sigma^{(ci)}, \quad \Sigma^{(ci)}=\mathop{\cup}\limits_{n\in\breve{\Lambda}}\partial\mathbb{D}_n,\\
& \breve{\Lambda}=\big\{n: n-k_0N\in\mathcal{N}/\Lambda, \ k_0\in\{0,1,\cdots,5\}\big\}.
\end{aligned}
\end{equation}
We proceed with the transformation
\begin{equation}\label{defM1}
m^{(1)}(k)=m(k)G(k)T(k),
\end{equation}
which satisfies the following RH problem.

\begin{RHP}\label{rhp1}
Find a $1\times 3$  vector-valued function $m^{(1)}(k):=m^{(1)}(k;y,t)$ such that
\begin{itemize}
    \item $m^{(1)}(k)$ is meromorphic in $\mathbb{C}\backslash\Sigma^{(1)}$.
    \item $m^{(1)}(k)$ satisfies the jump relation
    \begin{equation}
    m_+^{(1)}(k)=m_-^{(1)}(k)V^{(1)}(k),
    \end{equation}
    where
   \begin{align}\label{rhp1jump}
    V^{(1)}(k)=
   & \left\{
        \begin{aligned}
    &\begin{pmatrix} 1& 0& 0 \\-\rho(k)(T_{12})_{+}e^{-\mathrm{i}t\theta_{12}}& 1 & 0 \\ 0&	0 & 1 \end{pmatrix}
    \begin{pmatrix} 1& \bar{\rho}(k)(T_{21})_{-}e^{\mathrm{i}t\theta_{12}}& 0 \\0 & 1 & 0 \\ 0&	0 & 1 \end{pmatrix}, \quad k\in\mathbb{R},\\
    &\begin{pmatrix} -\rho(\omega^2k)(T_{31})_{+}e^{\mathrm{i}t\theta_{13}}& 1& 0 \\0 & 0 & 1 \\ 1&	0 & 0 \end{pmatrix}
    \begin{pmatrix} \bar{\rho}(\omega^2k)(T_{13})_{-}e^{-\mathrm{i}t\theta_{13}}& 0& 1 \\1 & 0 & 0 \\ 0&	1 & 0 \end{pmatrix}, \quad k\in\omega\mathbb{R},\\
    &\begin{pmatrix} 0& 0& 1 \\1 & 0 & 0 \\ -\rho(\omega k)(T_{32})_{+}e^{-\mathrm{i}t\theta_{23}}&	1 & 0 \end{pmatrix}
    \begin{pmatrix} 0& 1& \bar{\rho}(\omega k)(T_{23})_{-}e^{\mathrm{i}t\theta_{23}} \\0 & 0 & 1 \\ 1&	0 & 0 \end{pmatrix}, \quad k\in\omega^2\mathbb{R},\\
    &T^{-1}(k)G(k)T(k), \quad k\in\partial\mathbb{D}_n\cap D_{2\nu-1}, \nu=1,2,3,\\
	&T^{-1}(k)G^{-1}(k)T(k), \quad k\in\partial\mathbb{D}_n\cap D_{2\nu}, \nu=1,2,3,
    \end{aligned}
        \right.
    \end{align}
    with
    \begin{align}
    \rho(k)=
   & \left\{
        \begin{aligned}
    &r(k), \quad k\in\mathbb{R}\backslash I(\hat{\xi}),\\
	&-\frac{r(k)}{1-|r(k)|^2}, \quad k\in I(\hat{\xi}).
    \end{aligned}
        \right.
    \end{align}
    \item $m^{(1)}(k)$ admits the asymptotic behavior
    \begin{equation}\label{rhp1aym}
    m^{(1)}(k)=(1, \ 1, \ 1)+\mathcal{O}(k^{-1}), \quad  k\rightarrow\infty.
    \end{equation}
    \item $m^{(1)}(k)$ has simple poles at $\zeta_n$  $n-k_0N\in\Lambda$ with the following residue condition
    \begin{align}
	\underset{k=\zeta_n}{\rm Res}m^{(1)}(k)=\lim_{k\to\zeta_n}m^{(1)}(k)\left(T^{-1}(k)B_nT(k)\right) .
	\end{align}
\end{itemize}
\end{RHP}

\section{Hybrid $\bar{\partial}$-RH problem }\label{4}

In this section, our objective is to remove the jump from the original jump contour $\mathbb{R}\cup\omega\mathbb{R}\cup\omega^2\mathbb{R}$, in such a way that the new problem takes advantage of the decay of $e^{\pm 2\mathrm{i}t\theta_{ij}(k)}$.

\subsection{opening $\bar{\partial}$-lenses in region $\xi\in(-\infty,-\frac{3}{8})\cup(3,+\infty)$}

Noticing that $\xi\sim \hat{\xi}, \ t\to \infty $ (refer to  Remark \ref{r8}),  we can equivalently  consider  the space-time regions $\hat{\xi}\in(-\infty,-\frac{3}{8})\cup(3,+\infty)$.

Fix an angle $\phi_0 (0<\phi_0<\frac{\pi}{6})$ sufficiently small such that the set $\left\{z\in\mathbb{C}:|\frac{\re z}{z}| > \cos \varphi \right\}$ does not intersect any of the disks $\mathbb{D}_n$.
Define jump contours and regions in Figure \ref{Sju}, let
\begin{align*}
\Sigma^{(ju)}=\mathop{\cup}\limits_{j=1}^6\Sigma_j, \quad \Omega=\mathop{\cup}\limits_{j=1}^{6}\Omega_{j}.
\end{align*}
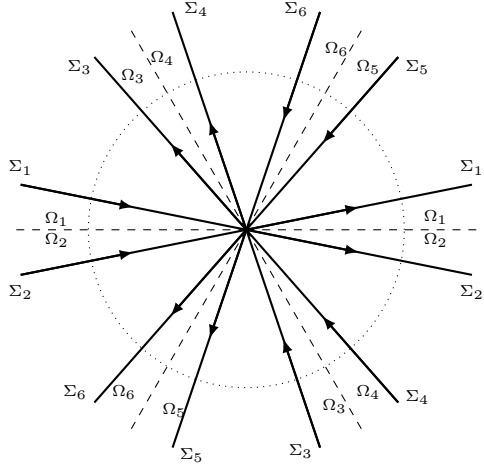
\begin{figure}[h]
\begin{center}
	\begin{tikzpicture}[scale=0.6]								
		\draw[dashed](0,0)--(5.1,0);
		\draw[dashed,rotate=60](0,0)--(5.1,0);
        \draw[dashed,rotate=120](0,0)--(5.1,0);
        \draw[dashed,rotate=180](0,0)--(5.1,0);
        \draw[dashed,rotate=240](0,0)--(5.1,0);
        \draw[dashed,rotate=300](0,0)--(5.1,0);

        \draw[thick](0,0)--(5,1);
        \node[above]at (5,1){\tiny$\Sigma_1$};
        \node[above]at (4.2,-0.1){\tiny$\Omega_1$};
		\draw[thick](0,0)--(-5,1);
        \node[above]at (-5,1) {\tiny$\Sigma_1$};
         \node[above]at (-4.2,-0.15){\tiny$\Omega_1$};
		\draw[thick](0,0)--(-5,-1);
        \node[below]at (-5,-1) {\tiny$\Sigma_2$};
         \node[below]at (-4.2,0.15){\tiny$\Omega_2$};
		\draw[thick](0,0)--(5,-1);
        \node[below]at (5,-1) {\tiny$\Sigma_2$};
        \node[below]at (4.2,0.15){\tiny$\Omega_2$};

        \draw[-latex][thick](-5,1)--(-2.5,0.5);
		\draw[-latex][thick](-5,-1)--(-2.5,-0.5);
		\draw[-latex][thick](0,0)--(2.5,0.5);
		\draw[-latex][thick](0,0)--(2.5,-0.5);

        \draw[thick,rotate=60](0,0)--(5,1);
        \node[left]at (1.6,4.9){\tiny$\Sigma_6$};
         \node[left]at (2.5,4) {\tiny$\Omega_6$};
		\draw[thick,rotate=60](0,0)--(-5,1);
        \node[right]at (-1.7,-5) {\tiny$\Sigma_5$};
        \node[left]at (-1.1,-4) {\tiny$\Omega_5$};
		\draw[thick,rotate=60](0,0)--(-5,-1);
        \node[left]at (-3.3,-3.6) {\tiny$\Sigma_6$};
        \node[right]at (-3.2,-3.6) {\tiny$\Omega_6$};
		\draw[thick,rotate=60](0,0)--(5,-1);
        \node[right]at (3.3,3.6) {\tiny$\Sigma_5$};
        \node[left]at (3.2,3.6) {\tiny$\Omega_5$};	

        \draw[-latex][thick,rotate=240](-5,1)--(-2.5,0.5);
		\draw[-latex][thick,rotate=240](-5,-1)--(-2.5,-0.5);
		\draw[-latex][thick,rotate=240](0,0)--(2.5,0.5);
		\draw[-latex][thick,rotate=240](0,0)--(2.5,-0.5);

        \draw[thick,rotate=120](0,0)--(5,1);
        \node[right]at (-1.6,4.9){\tiny$\Sigma_4$};
        \node[right]at (-2.35,3.8) {\tiny$\Omega_4$};
		\draw[thick,rotate=120](0,0)--(5,-1);
        \node[left]at (-3.2,3.6) {\tiny$\Sigma_3$};
        \node[right]at (-3,3.4) {\tiny$\Omega_3$};
		\draw[thick,rotate=120](0,0)--(-5,1);
        \node[left]at (1.7,-4.9) {\tiny$\Sigma_3$};
        \node[left]at (2.45,-3.9) {\tiny$\Omega_3$};
		\draw[thick,rotate=120](0,0)--(-5,-1);
        \node[right]at (3.3,-3.7) {\tiny$\Sigma_4$};	
        \node[left]at (3.2,-3.6) {\tiny$\Omega_4$};

        \draw[-latex][thick,rotate=120](-5,1)--(-2.5,0.5);
		\draw[-latex][thick,rotate=120](-5,-1)--(-2.5,-0.5);
		\draw[-latex][thick,rotate=120](0,0)--(2.5,0.5);
		\draw[-latex][thick,rotate=120](0,0)--(2.5,-0.5);

	   \draw [dotted] (0,0)circle(3.5cm);
	\end{tikzpicture}
\label{Sju}
\caption{The contours $\Sigma^{(ju)}$ for $\hat{\xi}\in(3,+\infty)$, while the contours $\Sigma^{(ju)}$ for $\hat{\xi}\in(-\infty,-\frac{3}{8})$ have the opposite direction.}
\end{center}
\end{figure}

\begin{lemma}\label{lR1}
Define continuous functions $R_j(k): \ \bar{\Omega}_j\rightarrow\mathbb{C}, \ j=1,\dots,6$, which have continuous first partial derivatives
on $\Omega_j$ and boundary values
  \begin{align}
  R_1(k)&=\left\{
        \begin{aligned}
    &\rho(k)(T_{12})_+(k), \quad k\in\mathbb{R},\\
    &0, \quad k\in\Sigma_1,
    \end{aligned}
        \right.\\
  R_2(k)&=\left\{
        \begin{aligned}
    &\bar{\rho}(k)(T_{21})_-(k),\quad k\in\mathbb{R},\\
    &0, \quad k\in\Sigma_2,
    \end{aligned}
        \right.\\
  R_3(k)&=\left\{
        \begin{aligned}
    &\rho(\omega^2k)(T_{31})_+(k), \quad k\in\omega\mathbb{R},\\
    &0, \quad k\in\Sigma_3,
    \end{aligned}
        \right.\\
  R_4(k)&=\left\{
        \begin{aligned}
    &\bar{\rho}(\omega^2k)(T_{13})_-(k), \quad k\in\omega\mathbb{R},\\
    &0, \quad k\in\Sigma_4,
    \end{aligned}
        \right.\\
    R_5(k)&=\left\{
        \begin{aligned}
    &\rho(\omega k)(T_{32})_+(k), \quad k\in\omega^2\mathbb{R},\\
    &0, \quad k\in\Sigma_5,
    \end{aligned}
        \right.\\
  R_6(k)&=\left\{
        \begin{aligned}
    &\bar{\rho}(\omega k)(T_{23})_-(k), \quad k\in\omega^2\mathbb{R},\\
    &0, \quad k\in\Sigma_6.
    \end{aligned}
        \right.
  \end{align}
Then we have
\begin{align}\label{dbarRj2}
	& |\bar{\partial}R_j(k)\left|\lesssim|\rho'\left({\rm sign}({\rm Re}k)|k|\right)\right|+|k|^{-\frac{1}{2}},  \quad k\in
         \Omega_j,\ j=1,\dots,6,\\
   &  \bar{\partial}R_j(k)=0, \quad \text{elsewhere}.
  \end{align}
\end{lemma}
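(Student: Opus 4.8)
The plan is to construct each $R_j$ explicitly as an interpolation between its prescribed boundary value on the real (or rotated real) axis and zero on the ray $\Sigma_j$, following the standard $\bar\partial$-extension recipe of McLaughlin--Miller. I will treat $R_1$ in detail; the other five follow by applying the symmetry relations \eqref{S1} (equivalently, by replacing $k$ with $\omega^{\pm1}k$ and using the block structure of $V^{(1)}$), so no new ideas are needed there. For $k\in\Omega_1$ write $k=|k|e^{\mathrm{i}\phi}$ with $\phi\in(0,\phi_0)$ near the positive real axis (and symmetrically near the negative real axis), and set
\begin{equation*}
R_1(k)=\bigl[\rho(|k|\,\mathrm{sign}(\re k))(T_{12})_+(|k|\,\mathrm{sign}(\re k))\bigr]\Bigl(1-\frac{\phi}{\phi_0}\Bigr)+\mathcal{X}(k),
\end{equation*}
where the cutoff is arranged so that $R_1$ equals the bracketed boundary data on $\mathbb{R}$ (where $\phi=0$) and vanishes on $\Sigma_1$ (where $\phi=\phi_0$), with a smooth bump modification $\mathcal{X}(k)$ supported away from $\zeta_n$ and from the singular points $\varkappa_\nu$ so that $R_1$ stays continuous on $\bar\Omega_1$ and $C^1$ on $\Omega_1$; the factor $(T_{12})_+$ is bounded and bounded away from zero on the relevant set because the zeros/poles it carries have been placed by the choice of $\phi_0$ outside $\Omega_1$.

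The core of the argument is the bound \eqref{dbarRj2} on $\bar\partial R_1$. Using $\bar\partial=\tfrac12(\partial_{\re k}+\mathrm{i}\partial_{\im k})$ in polar coordinates, $\bar\partial=\tfrac{e^{\mathrm{i}\phi}}{2}\bigl(\partial_{|k|}+\tfrac{\mathrm{i}}{|k|}\partial_\phi\bigr)$, one differentiates the displayed formula. The $\partial_\phi$ hitting the linear factor $(1-\phi/\phi_0)$ produces a term of size $|k|^{-1}|\rho(T_{12})_+|$, which is $\mathcal{O}(|k|^{-1})$ by boundedness of the data and hence absorbed into $|k|^{-1/2}$ for $|k|$ bounded below; near $k=0$ one uses $r(\pm1)=0$, $\lim_{k\to0}r(k)=0$ and the vanishing of $r$ together with the explicit structure of $(T_{12})_+$ to see the factor itself is small, so no blow-up occurs. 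The $\partial_{|k|}$ hitting the data gives $\bigl(\rho'(T_{12})_+ +\rho (T_{12})_+'\bigr)$ evaluated at the projection $\mathrm{sign}(\re k)|k|$; the first summand is exactly the $|\rho'(\mathrm{sign}(\re k)|k|)|$ appearing on the right of \eqref{dbarRj2}, and for the second one invokes that $T_{12}$, being built from the product in \eqref{T} and the Cauchy integral $\exp\{\tfrac{1}{2\pi\mathrm{i}}\int_{I(\hat\xi)}\cdots\}$, has a derivative that is $L^2$ plus bounded — more precisely $|(T_{12})_+'(s)|\lesssim |s|^{-1/2}$ on the support together with $L^\infty$-contributions from the Blaschke factors — yielding the $|k|^{-1/2}$ term. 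The contribution of $\mathcal{X}$ is controlled because it is smooth and compactly supported. Finally $\bar\partial R_1=0$ outside $\Omega_1$ is immediate since $R_1\equiv0$ there.

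The main obstacle is the regularity and decay analysis of $(T_{ij})_\pm$ and their derivatives on $I(\hat\xi)$: one must show that the Cauchy-transform factor in $H(k)$ has nontangential boundary values that are Hölder (hence continuous) and whose derivative obeys the stated $|k|^{-1/2}$-type bound, which rests on $r\in L^\infty(\mathbb{R})\cap L^2(\mathbb{R})$, $r(\pm1)=0$, the symmetry $r_\pm(k)=\overline{r_\pm(\bar k^{-1})}$ forcing $r(0)=0$, and the Schwartz smoothness of $u_0$ propagating to $r$. A secondary (but routine) point is bookkeeping the six cases: checking that the rotation $k\mapsto\omega^{\pm1}k$ sends $\Omega_1\mapsto\Omega_{3},\Omega_5$ etc.\ and that the conjugation-transformed jump entries $(T_{31})_+, (T_{32})_+,\dots$ inherit the same estimates, so that \eqref{dbarRj2} holds uniformly in $j$. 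Everything else — continuity on $\bar\Omega_j$, existence of continuous first partials on $\Omega_j$, and $\bar\partial R_j=0$ off $\Omega_j$ — is built into the construction.
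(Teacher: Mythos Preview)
The paper states this lemma without proof (it is treated as a standard $\bar\partial$-extension construction in the McLaughlin--Miller style that the paper cites), so there is no argument of the paper's own to compare against; your sketch is the expected approach and is essentially correct.

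Two points to tighten. First, the mechanism by which the angular term $|k|^{-1}|\rho(|k|)(T_{12})_+|$ is controlled near $k=0$ is not ``the factor is small'' or ``$r(\pm1)=0$'' (the latter is irrelevant here) but the Cauchy--Schwarz inequality applied to $\rho(0)=0$:
\[
|\rho(s)|=\Bigl|\int_0^{s}\rho'(\tau)\,d\tau\Bigr|\le \|\rho'\|_{L^2}\,|s|^{1/2},
\]
which together with boundedness of $(T_{12})_+$ gives exactly the $|k|^{-1/2}$ on the right of \eqref{dbarRj2}. This is the step that actually produces the $-\tfrac12$ exponent, and it is the same trick used in the paper's (commented-out) proof of the analogous Lemma~\ref{estRij}. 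Second, in the solitonic region the Blaschke part of $T_{12}$ is smooth with bounded derivative on $\mathbb{R}$ and, for $\hat\xi<-\tfrac38$, the Cauchy-integral factor has Schwartz density $\log(1-|r|^2)$, so $(T_{12})_+'$ is bounded rather than $\lesssim|s|^{-1/2}$; the term $|\rho|\,|(T_{12})_+'|$ is then handled by the same Cauchy--Schwarz bound on $|\rho|$ near $0$ and by $r(k)=\mathcal{O}(k^{-1})$ for large $|k|$. Also, since $\phi_0$ is chosen so that no disks $\mathbb{D}_n$ meet $\Omega_j$, the cutoff $\mathcal{X}$ you introduce is unnecessary in this region.
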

Using $R_j(k)$ defined above, we can construct  the new matrix functions $\mathcal{R}^{(2)}(k)$ as
\begin{align}
\mathcal{R}^{(2)}(k)=\left\{
        \begin{aligned}
    &\begin{pmatrix} 1 & R_1(k)e^{\mathrm{i}t\theta_{12}} & 0 \\  0 & 1 & 0 \\ 0 & 0 & 1 \end{pmatrix}^{-1}, \quad k\in\Omega_1,\\
    &\begin{pmatrix} 1 &  0 & 0 \\ R_2(k)e^{-\mathrm{i}t\theta_{12}} & 1 & 0 \\ 0 & 0 & 1 \end{pmatrix}, \quad k\in\Omega_2,\\
    &\begin{pmatrix} 1 & 0 & 0 \\ 0 & 1 & 0 \\ R_{3}(k)e^{-\mathrm{i}t\theta_{13}} & 0 & 1 \end{pmatrix}^{-1}, \quad k\in\Omega_3,\\
    &\begin{pmatrix} 1 & R_4(k)e^{\mathrm{i}t\theta_{13}} & 0 \\ 0 & 1 & 0 \\ 0 & 0 & 1\end{pmatrix}, \quad k\in\Omega_4,\\
    &\begin{pmatrix} 1 & 0 & 0 \\ 0 & 1 & R_5(k)e^{\mathrm{i}t\theta_{23}} \\ 0 & 0 & 1 \end{pmatrix}^{-1}, \quad k\in\Omega_5,\\
    &\begin{pmatrix} 1 & 0 & 0 \\ 0 & 1 & 0 \\ 0 & R_6(k)e^{-\mathrm{i}t\theta_{23}} & 1\end{pmatrix}, \quad k\in\Omega_6,\\
    &I, \quad elsewhere.
    \end{aligned}
        \right.
\label{RR22}
\end{align}

\subsection{opening $\bar{\partial}$-lenses in region $\xi\in(-\frac{3}{8},0)\cup\left[0,3\right)$}

From \eqref{nop}, there are  $24$  and $12$ phase points in regions $\xi\in(-\frac{3}{8},0)$ and $\left[0,3\right)$ respectively. We open the contours  $\mathbb{R}\cup\omega\mathbb{R}\cup\omega^2\mathbb{R}$ at phase points $k_i,\ i=1,\dots,p(\hat{\xi})$. Define the contours $\Sigma_j$ and domains $\Omega_j$ in Figure \ref{Fig5} and let
\begin{equation}\label{defsigma}
\Sigma^{(ju)}= \mathop{\cup}\limits_{j=1}^6\Sigma_j, \quad \Omega=\mathop{\cup}\limits_{j=1}^6\Omega_j.\\
\end{equation}
\begin{figure}
\begin{center}
	\subfigure[  The opened contour $\Sigma^{(ju)}  $ for $\hat{\xi}\in\left[0,3\right)$. ]{
\begin{tikzpicture}[scale=0.8]
\draw [thick]  (-4,-0.6) to [out=0,in=180] (-2,0.6)
to [out=0,in=180] (0,-0.6) to [out=0,in=180] (2,0.6)  to  [out=0,in=180] (4,-0.6);
\draw [thick,rotate=240] (-4,-0.6) to [out=0,in=180] (-2,0.6)
to [out=0,in=180] (0,-0.6) to [out=0,in=180] (2,0.6)  to  [out=0,in=180] (4,-0.6);
\draw [thick,rotate=120] (-4,-0.6) to [out=0,in=180] (-2,0.6)
to [out=0,in=180] (0,-0.6) to [out=0,in=180] (2,0.6)  to  [out=0,in=180] (4,-0.6);

\draw [thick] (-4,0.6) to [out=0,in=180] (-2,-0.6)
to [out=0,in=180] (0,0.6) to [out=0,in=180] (2,-0.6)  to [out=0,in=180] (4,0.6);
\draw [thick,rotate=240](-4,0.6) to [out=0,in=180] (-2,-0.6)
to [out=0,in=180] (0,0.6) to [out=0,in=180] (2,-0.6)  to [out=0,in=180] (4,0.6);
\draw [thick,rotate=120](-4,0.6) to [out=0,in=180] (-2,-0.6)
to [out=0,in=180] (0,0.6) to [out=0,in=180] (2,-0.6)  to [out=0,in=180] (4,0.6);

\filldraw [white] (0,0) circle [radius=1];
\draw[dotted](-5,0)--(5,0);
\draw[dotted,rotate=240](-5,0)--(5,0);
\draw[dotted,rotate=120](-5,0)--(5,0);
\draw [thick] (0,0) circle [radius=1];

\draw[-latex](0.88,0.48)--(0.9,0.44);
\draw[-latex](0.88,-0.48)--(0.9,-0.44);
\draw[-latex](-0.8,-0.6)--(-0.79,-0.613);
\draw[-latex](-0.8,0.6)--(-0.79,0.613);
\draw[-latex](0.1,0.995)--(-0.1,0.995);
\draw[-latex](0.1,-0.995)--(-0.1,-0.995);

\draw[-latex](-3.2,0.21)--(-3.15,0.16);
\draw[-latex](-3.2,-0.21)--(-3.15,-0.16);
\draw[-latex](-2.8,-0.21)--(-2.85,-0.15);
\draw[-latex](-2.8,0.21)--(-2.85,0.15);
\draw[-latex](-1.2,0.21)--(-1.3,0.3);
\draw[-latex](-1.2,-0.21)--(-1.3,-0.3);

\draw[-latex](1.2,-0.21)--(1.15,-0.15);
\draw[-latex](1.2,0.21)--(1.15,0.15);
\draw[-latex](2.8,0.21)--(2.7,0.3);
\draw[-latex](2.8,-0.21)--(2.7,-0.3);
\draw[-latex](3.2,-0.21)--(3.3,-0.3);
\draw[-latex](3.2,0.21)--(3.3,0.3);

\draw[-latex,rotate=240](-3.2,0.21)--(-3.15,0.16);
\draw[-latex,rotate=240](-3.2,-0.21)--(-3.15,-0.16);
\draw[-latex,rotate=240](-2.8,-0.21)--(-2.85,-0.15);
\draw[-latex,rotate=240](-2.8,0.21)--(-2.85,0.15);
\draw[-latex,rotate=240](-1.2,0.21)--(-1.3,0.3);
\draw[-latex,rotate=240](-1.2,-0.21)--(-1.3,-0.3);

\draw[-latex,rotate=120](-3.2,0.21)--(-3.15,0.16);
\draw[-latex,rotate=120](-3.2,-0.21)--(-3.15,-0.16);
\draw[-latex,rotate=120](-2.8,-0.21)--(-2.85,-0.15);
\draw[-latex,rotate=120](-2.8,0.21)--(-2.85,0.15);
\draw[-latex,rotate=120](-1.2,0.21)--(-1.3,0.3);
\draw[-latex,rotate=120](-1.2,-0.21)--(-1.3,-0.3);

\draw[-latex,rotate=240](1.2,-0.21)--(1.15,-0.15);
\draw[-latex,rotate=240](1.2,0.21)--(1.15,0.15);

\draw[-latex,rotate=120](1.2,-0.21)--(1.15,-0.15);
\draw[-latex,rotate=120](1.2,0.21)--(1.15,0.15);

\draw[-latex,rotate=240](2.8,0.21)--(2.7,0.3);
\draw[-latex,rotate=240](2.8,-0.21)--(2.7,-0.3);
\draw[-latex,rotate=240](3.2,-0.21)--(3.3,-0.3);
\draw[-latex,rotate=240](3.2,0.21)--(3.3,0.3);

\draw[-latex,rotate=120](2.8,0.21)--(2.7,0.3);
\draw[-latex,rotate=120](2.8,-0.21)--(2.7,-0.3);
\draw[-latex,rotate=120](3.2,-0.21)--(3.3,-0.3);
\draw[-latex,rotate=120](3.2,0.21)--(3.3,0.3);
\node  at (4.3,0.6) {\scriptsize$\Sigma_1$};
\node  at (4.3,-0.6) {\scriptsize$\Sigma_2$};

\coordinate (A) at (1.6,0);
\fill[red] (A) circle [radius=0.03] node[below] {\tiny$1$};
\coordinate (B) at (-1.6,0);
\fill[red] (B) circle [radius=0.03] node[below] {\tiny$-1$};

\draw [dotted] (0,0) circle [radius=1];
\draw [dotted] (0,0) circle [radius=3];

\filldraw  (1,0) circle [radius=0.05];
\node [below] at  (1.1,-0.1) {\tiny$k_2$};
\filldraw  (3,0) circle [radius=0.05];
\node [below] at  (3,-0.1) {\tiny$k_1$};
\filldraw  (-1,0) circle [radius=0.05];
\node [below] at  (-1.1,-0.1) {\tiny$k_3$};
\filldraw  (-3,0) circle [radius=0.05];
\node [below] at  (-3,-0.1) {\tiny$k_4$};

\coordinate (I) at (0,0);
\fill[red] (I) circle [radius=0.03] node[below] {\tiny$O$};

\node  at (-2.5,3.45) {\scriptsize$\Sigma_3$};
\node  at (-1.5,4.05) {\scriptsize$\Sigma_4$};
\filldraw  (-0.5,0.87) circle [radius=0.05];
\node [above] at   (-0.85,0.9) {\tiny$\omega k_2$};
\filldraw  (-1.5,2.6) circle [radius=0.05];
\node [above] at  (-1.75,2.7) {\tiny$\omega k_1$};
\filldraw  (0.5,-0.87) circle [radius=0.05];
\node [below] at   (0.9,-0.9) {\tiny$\omega k_3$};
\filldraw  (1.5,-2.6) circle [radius=0.05];
\node [below] at  (1.85,-2.7) {\tiny$\omega k_4$};

\node  at (-2.5,-3.45) {\scriptsize$\Sigma_5$};
\node  at (-1.5,-4.05) {\scriptsize$\Sigma_6$};
\filldraw  (-0.5,-0.87) circle [radius=0.05];
\node [below] at   (-0.85,-0.9) {\tiny$\omega^2 k_2$};
\filldraw  (-1.5,-2.6) circle [radius=0.05];
\node [below] at  (-1.75,-2.7) {\tiny$\omega^2 k_1$};
\filldraw  (0.5,0.87) circle [radius=0.05];
\node [above] at   (0.9,0.9) {\tiny$\omega^2 k_3$};
\filldraw  (1.5,2.6) circle [radius=0.05];
\node [above] at  (1.85,2.7) {\tiny$\omega^2 k_4$};
		
\end{tikzpicture}
		\label{case1}}
	\subfigure[ The opened contour $\Sigma^{(ju)}  $ for $\hat{\xi}\in(-\frac{3}{8},0)$]{
\begin{tikzpicture}[scale=0.8]
\draw[dotted](-6.5,0)--(6.8,0);
\draw [thick] (-6,-0.6)to [out=0,in=180](-4.5,0.6)to [out=0,in=180](-3,-0.6) to [out=0,in=180] (-1.5,0.6)
to [out=0,in=180] (0,-0.6) to [out=0,in=180] (1.5,0.6)  to  [out=0,in=180] (3,-0.6) to [out=0,in=180] (4.5,0.6) to
[out=0,in=180] (6,-0.6);
\draw [thick](-6,0.6)to [out=0,in=180](-4.5,-0.6)to [out=0,in=180](-3,0.6) to [out=0,in=180] (-1.5,-0.6)
to [out=0,in=180] (0,0.6) to [out=0,in=180] (1.5,-0.6)  to [out=0,in=180] (3,0.6) to [out=0,in=180] (4.5,-0.6) to  [out=0,in=180] (6,0.6);

\draw[dotted,rotate=240](-6.5,0)--(6.8,0);
\draw [thick,rotate=240] (-6,-0.6)to [out=0,in=180](-4.5,0.6)to [out=0,in=180](-3,-0.6) to [out=0,in=180] (-1.5,0.6)
to [out=0,in=180] (0,-0.6) to [out=0,in=180] (1.5,0.6)  to  [out=0,in=180] (3,-0.6) to [out=0,in=180] (4.5,0.6) to
[out=0,in=180] (6,-0.6);
\draw [thick,rotate=240](-6,0.6)to [out=0,in=180](-4.5,-0.6)to [out=0,in=180](-3,0.6) to [out=0,in=180] (-1.5,-0.6)
to [out=0,in=180] (0,0.6) to [out=0,in=180] (1.5,-0.6)  to [out=0,in=180] (3,0.6) to [out=0,in=180] (4.5,-0.6) to  [out=0,in=180] (6,0.6);

\draw[dotted,rotate=120](-6.5,0)--(6.8,0);
\draw [thick,rotate=120] (-6,-0.6)to [out=0,in=180](-4.5,0.6)to [out=0,in=180](-3,-0.6) to [out=0,in=180] (-1.5,0.6)
to [out=0,in=180] (0,-0.6) to [out=0,in=180] (1.5,0.6)  to  [out=0,in=180] (3,-0.6) to [out=0,in=180] (4.5,0.6) to
[out=0,in=180] (6,-0.6);
\draw [thick,rotate=120](-6,0.6)to [out=0,in=180](-4.5,-0.6)to [out=0,in=180](-3,0.6) to [out=0,in=180] (-1.5,-0.6)
to [out=0,in=180] (0,0.6) to [out=0,in=180] (1.5,-0.6)  to [out=0,in=180] (3,0.6) to [out=0,in=180] (4.5,-0.6) to  [out=0,in=180] (6,0.6);

\filldraw [white] (0,0) circle [radius=0.75];
\draw[dotted](-6.5,0)--(6.8,0);
\draw[dotted,rotate=240](-6.5,0)--(6.8,0);
\draw[dotted,rotate=120](-6.5,0)--(6.8,0);
\draw [thick] (0,0) circle [radius=0.75];

\draw [dotted] (0,0) circle [radius=0.75];
\draw [dotted] (0,0) circle [radius=2.25];
\draw [dotted] (0,0) circle [radius=3.75];
\draw [dotted] (0,0) circle [radius=5.25];

\draw[-latex](0.69,0.29)--(0.7,0.27);
\draw[-latex](0.69,-0.29)--(0.7,-0.27);
\draw[-latex](-0.63,0.4)--(-0.62,0.42);
\draw[-latex](-0.65,-0.37)--(-0.64,-0.39);
\draw[-latex](0.1,0.74)--(-0.1,0.74);
\draw[-latex](0.1,-0.74)--(-0.1,-0.74);

\draw[-latex](-5.5,0.35)--(-5.4,0.23);
\draw[-latex](-5,0.36)--(-5.1,0.23);
\draw[-latex](-3.9,0.23)--(-4,0.35);
\draw[-latex](-3.6,0.23)--(-3.5,0.36);
\draw[-latex](-2.5,0.35)--(-2.4,0.23);
\draw[-latex](-2,0.36)--(-2.1,0.23);
\draw[-latex](-0.9,0.23)--(-1,0.35);

\draw[-latex](5.4,0.23)--(5.5,0.35);
\draw[-latex](5.1,0.23)--(5,0.36);
\draw[-latex](4,0.35)--(3.9,0.23);
\draw[-latex](3.5,0.36)--(3.6,0.23);
\draw[-latex](2.4,0.23)--(2.5,0.35);
\draw[-latex](2.1,0.23)--(2,0.36);
\draw[-latex](1,0.35)--(0.9,0.23);

\draw[-latex](-5.5,-0.35)--(-5.4,-0.23);
\draw[-latex](-5,-0.36)--(-5.1,-0.23);
\draw[-latex](-3.9,-0.23)--(-4,-0.35);
\draw[-latex](-3.6,-0.23)--(-3.5,-0.36);
\draw[-latex](-2.5,-0.35)--(-2.4,-0.23);
\draw[-latex](-2,-0.36)--(-2.1,-0.23);
\draw[-latex](-0.9,-0.23)--(-1,-0.35);

\draw[-latex](5.4,-0.23)--(5.5,-0.35);
\draw[-latex](5.1,-0.23)--(5,-0.36);
\draw[-latex](4,-0.35)--(3.9,-0.23);
\draw[-latex](3.5,-0.36)--(3.6,-0.23);
\draw[-latex](2.4,-0.23)--(2.5,-0.35);
\draw[-latex](2.1,-0.23)--(2,-0.36);
\draw[-latex](1,-0.35)--(0.9,-0.23);

\draw[-latex,rotate=120](-5.5,0.35)--(-5.4,0.23);
\draw[-latex,rotate=120](-5,0.36)--(-5.1,0.23);
\draw[-latex,rotate=120](-3.9,0.23)--(-4,0.35);
\draw[-latex,rotate=120](-3.6,0.23)--(-3.5,0.36);
\draw[-latex,rotate=120](-2.5,0.35)--(-2.4,0.23);
\draw[-latex,rotate=120](-2,0.36)--(-2.1,0.23);
\draw[-latex,rotate=120](-0.9,0.23)--(-1,0.35);

\draw[-latex,rotate=120](5.4,0.23)--(5.5,0.35);
\draw[-latex,rotate=120](5.1,0.23)--(5,0.36);
\draw[-latex,rotate=120](4,0.35)--(3.9,0.23);
\draw[-latex,rotate=120](3.5,0.36)--(3.6,0.23);
\draw[-latex,rotate=120](2.4,0.23)--(2.5,0.35);
\draw[-latex,rotate=120](2.1,0.23)--(2,0.36);
\draw[-latex,rotate=120](1,0.35)--(0.9,0.23);

\draw[-latex,rotate=120](-5.5,-0.35)--(-5.4,-0.23);
\draw[-latex,rotate=120](-5,-0.36)--(-5.1,-0.23);
\draw[-latex,rotate=120](-3.9,-0.23)--(-4,-0.35);
\draw[-latex,rotate=120](-3.6,-0.23)--(-3.5,-0.36);
\draw[-latex,rotate=120](-2.5,-0.35)--(-2.4,-0.23);
\draw[-latex,rotate=120](-2,-0.36)--(-2.1,-0.23);
\draw[-latex,rotate=120](-0.9,-0.23)--(-1,-0.35);

\draw[-latex,rotate=120](5.4,-0.23)--(5.5,-0.35);
\draw[-latex,rotate=120](5.1,-0.23)--(5,-0.36);
\draw[-latex,rotate=120](4,-0.35)--(3.9,-0.23);
\draw[-latex,rotate=120](3.5,-0.36)--(3.6,-0.23);
\draw[-latex,rotate=120](2.4,-0.23)--(2.5,-0.35);
\draw[-latex,rotate=120](2.1,-0.23)--(2,-0.36);
\draw[-latex,rotate=120](1,-0.35)--(0.9,-0.23);

\draw[-latex,rotate=240](-5.5,0.35)--(-5.4,0.23);
\draw[-latex,rotate=240](-5,0.36)--(-5.1,0.23);
\draw[-latex,rotate=240](-3.9,0.23)--(-4,0.35);
\draw[-latex,rotate=240](-3.6,0.23)--(-3.5,0.36);
\draw[-latex,rotate=240](-2.5,0.35)--(-2.4,0.23);
\draw[-latex,rotate=240](-2,0.36)--(-2.1,0.23);
\draw[-latex,rotate=240](-0.9,0.23)--(-1,0.35);

\draw[-latex,rotate=240](5.4,0.23)--(5.5,0.35);
\draw[-latex,rotate=240](5.1,0.23)--(5,0.36);
\draw[-latex,rotate=240](4,0.35)--(3.9,0.23);
\draw[-latex,rotate=240](3.5,0.36)--(3.6,0.23);
\draw[-latex,rotate=240](2.4,0.23)--(2.5,0.35);
\draw[-latex,rotate=240](2.1,0.23)--(2,0.36);
\draw[-latex,rotate=240](1,0.35)--(0.9,0.23);

\draw[-latex,rotate=240](-5.5,-0.35)--(-5.4,-0.23);
\draw[-latex,rotate=240](-5,-0.36)--(-5.1,-0.23);
\draw[-latex,rotate=240](-3.9,-0.23)--(-4,-0.35);
\draw[-latex,rotate=240](-3.6,-0.23)--(-3.5,-0.36);
\draw[-latex,rotate=240](-2.5,-0.35)--(-2.4,-0.23);
\draw[-latex,rotate=240](-2,-0.36)--(-2.1,-0.23);
\draw[-latex,rotate=240](-0.9,-0.23)--(-1,-0.35);

\draw[-latex,rotate=240](5.4,-0.23)--(5.5,-0.35);
\draw[-latex,rotate=240](5.1,-0.23)--(5,-0.36);
\draw[-latex,rotate=240](4,-0.35)--(3.9,-0.23);
\draw[-latex,rotate=240](3.5,-0.36)--(3.6,-0.23);
\draw[-latex,rotate=240](2.4,-0.23)--(2.5,-0.35);
\draw[-latex,rotate=240](2.1,-0.23)--(2,-0.36);
\draw[-latex,rotate=240](1,-0.35)--(0.9,-0.23);

\node  at (6.3,0.6) {\tiny$\Sigma_1$};
\node  at (6.3,-0.6) {\tiny$\Sigma_2$};
\node  at (-3.55,5.1) {\tiny$\Sigma_3$};
\node  at (-2.45,5.75) {\tiny$\Sigma_4$};
\node  at (-3.55,-5.1) {\tiny$\Sigma_5$};
\node  at (-2.45,-5.75) {\tiny$\Sigma_6$};

\coordinate (I) at (0,0);
		\fill[red] (I) circle (1pt) node[below] {\tiny$O$};
\filldraw [red] (2.7,0) circle [radius=0.03];
\node [below,red] at  (2.8,0) {\tiny$ 1$};
\filldraw [red] (-2.7,0) circle [radius=0.03];
\node [below,red] at  (-2.8,0) {\tiny$ -1$};

\filldraw  (3.75,0) circle [radius=0.05];
\node [below] at  (3.75,-0.1) {\tiny$ k_2$};
\filldraw  (5.25,0) circle [radius=0.05];
\node [below] at   (5.27,-0.1) {\tiny$ k_1$};
\filldraw  (2.25,0) circle [radius=0.05];
\node [below] at   (2.28,-0.1) {\tiny$ k_3$};
\filldraw  (0.75,0) circle [radius=0.05];
\node [below] at   (0.83,-0.15) {\tiny$ k_4$};
\filldraw  (-3.75,0) circle [radius=0.05];
\node [below] at   (-3.75,-0.1) {\tiny$ k_7$};
\filldraw  (-5.25,0) circle [radius=0.05];
\node [below] at   (-5.25,-0.1) {\tiny$ k_8$};
\filldraw  (-2.25,0) circle [radius=0.05];
\node [below] at   (-2.25,-0.1) {\tiny$ k_6$};
\filldraw  (-0.75,0) circle [radius=0.05];
\node [below] at   (-0.75,-0.12) {\tiny$k_5$};

\filldraw  (-1.88,3.25) circle [radius=0.05];
\node [above] at    (-2.05,3.25) {\tiny$\omega k_2$};
\filldraw  (-2.63,4.55) circle [radius=0.05];
\node [above] at   (-2.8,4.55) {\tiny$\omega k_1$};
\filldraw  (-1.13,1.95) circle [radius=0.05];
\node [above] at    (-1.35,1.95) {\tiny$\omega k_3$};
\filldraw  (-0.38,0.65) circle [radius=0.05];
\node [above] at  (-0.6,0.65) {\tiny$\omega k_4$};
\filldraw  (1.88,-3.25) circle [radius=0.05];
\node [below] at    (2.15,-3.25) {\tiny$\omega k_7$};
\filldraw  (2.63,-4.55) circle [radius=0.05];
\node [below] at   (2.8,-4.55) {\tiny$\omega k_8$};
\filldraw  (1.13,-1.95) circle [radius=0.05];
\node [below] at    (1.35,-1.95) {\tiny$\omega k_6$};
\filldraw  (0.38,-0.65) circle [radius=0.05];
\node [below] at  (0.6,-0.65) {\tiny$\omega k_5$};

\filldraw  (-1.88,-3.25) circle [radius=0.05];
\node [below] at    (-2.1,-3.25) {\tiny$\omega^2 k_2$};
\filldraw  (-2.63,-4.55) circle [radius=0.05];
\node [below] at   (-2.8,-4.55) {\tiny$\omega^2 k_1$};
\filldraw  (-1.13,-1.95) circle [radius=0.05];
\node [below] at    (-1.35,-1.95) {\tiny$\omega^2 k_3$};
\filldraw  (-0.38,-0.65) circle [radius=0.05];
\node [below] at  (-0.6,-0.65) {\tiny$\omega^2 k_4$};
\filldraw  (1.88,3.25) circle [radius=0.05];
\node [above] at    (2.2,3.25) {\tiny$\omega^2 k_7$};
\filldraw  (2.63,4.55) circle [radius=0.05];
\node [above] at   (2.9,4.55) {\tiny$\omega^2 k_8$};
\filldraw  (1.13,1.95) circle [radius=0.05];
\node [above] at    (1.45,1.95) {\tiny$\omega^2 k_6$};
\filldraw  (0.38,0.65) circle [radius=0.05];
\node [above] at  (0.7,0.65) {\tiny$\omega^2 k_5$};
		\end{tikzpicture}
		\label{case2}}
	\caption{\footnotesize
Opening the original jump contours $\mathbb{R}\cup\omega\mathbb{R}\cup\omega^2\mathbb{R}$ at phase points  $k_i,\ \omega k_i,\ \omega^2 k_i, \ i=1,\cdots, p(\xi)$. }
	\label{Fig5}
\end{center}
\end{figure}
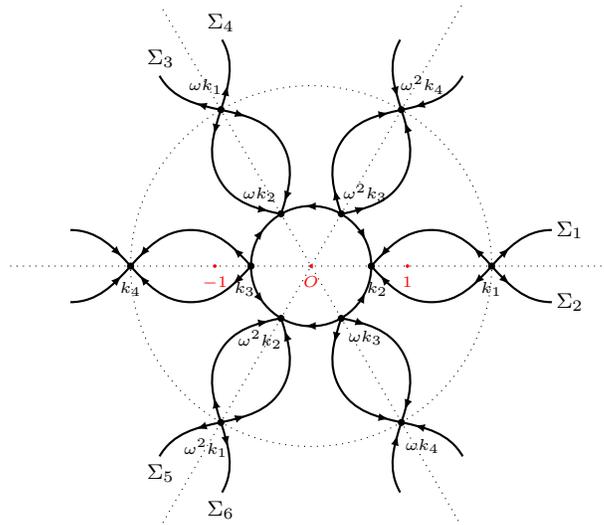
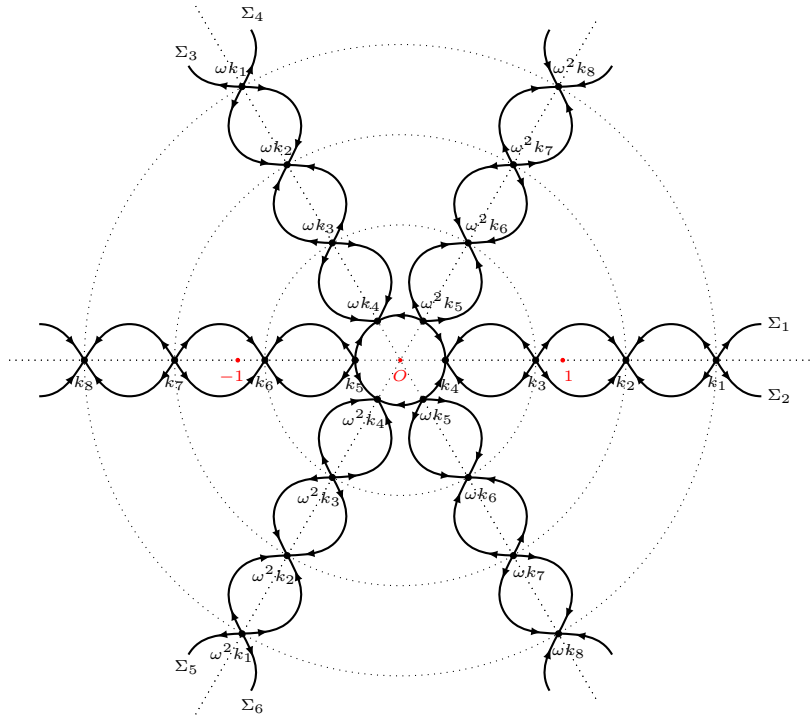

Now we define extension functions as follows.
\begin{lemma}\label{estRij}
Define functions $R_{ij}(k): \ \bar{\Omega}_j\rightarrow\mathbb{C}$ continuous on $\Omega_j$, with continuous first partial derivative
on $\bar\Omega_j, \ j=1,\dots,6$ and boundary values
  \begin{align}
  &R_1(k)=\Bigg\{\begin{array}{ll}
    \rho(k) (T_{12})_+ (k), \ k\in \mathbb{R},\\
    \rho(k_i)\big(T^{(i)}_{12}\big)_+(k_i)\big(\eta(k-k_i)\big)^{2\mathrm{i}\eta\nu(k_i)}\big(1-\mathcal{X}_{\mathcal{K}}(k)\big), \ k\in\Sigma_1,
    \end{array}\label{R21}\\
  &R_2(k)=\Bigg\{\begin{array}{ll}
    \bar{\rho}(k)(T_{12})^{-1}_-(k), \ k\in \mathbb{R},\\
    \bar{\rho}(k_i)\big(T^{(i)}_{12}\big)^{-1}_-(k_i)\big(\eta(k-k_i)\big)^{-2\mathrm{i}\eta\nu(k_i)}\big(1-\mathcal{X}_{\mathcal{K}}(k)\big), \ k\in\Sigma_2,
  \end{array}\\
  &R_3(k)=\Bigg\{\begin{array}{ll}
    \rho(\omega^2 k) (T_{13})^{-1}_+ (k), \ k\in \omega\mathbb{R},\\
    \rho(\omega^2 k_i)\big(T^{(i)}_{13}\big)^{-1}_+(\omega k_i)\big(\eta(k-\omega k_i)\big)^{2\mathrm{i}\eta\nu(\omega k_i)}\big(1-\mathcal{X}_{\mathcal{K}}(k)\big), \ k\in\Sigma_3,
   \end{array}\\
  &R_4(k)=\Bigg\{\begin{array}{ll}
    \bar{\rho}(\omega^2 k)(T_{13})_-(k), \ k\in\omega\mathbb{R},\\
    \bar{\rho}(\omega^2 k_i)\big(T^{(i)}_{13}\big)_-(\omega k_i)\big(\eta(k-\omega k_i)\big)^{-2\mathrm{i}\eta\nu(\omega k_i)}\big(1-\mathcal{X}_{\mathcal{K}}(k)\big), \ k\in\Sigma_4,
   \end{array}\\
   &R_5(k)=\Bigg\{\begin{array}{ll}
    \rho(\omega k) (T_{23})^{-1}_+ (k), \ k\in \omega^2\mathbb{R},\\
    \rho(\omega k_i)\big(T^{(i)}_{23}\big)^{-1}_+(\omega^2 k_i)\big(\eta(k-\omega^2 k_i)\big)^{2\mathrm{i}\eta\nu(\omega^2 k_i)}\big(1-\mathcal{X}_{\mathcal{K}}(k)\big), \ k\in\Sigma_5,
   \end{array}\\
   &R_6(k)=\Bigg\{\begin{array}{ll}
    \bar{\rho}(\omega k)(T_{23})_-(k), \ k\in\omega^2\mathbb{R},\\
    \bar{\rho}(\omega k_i)\big(T^{(i)}_{23}\big)_-(\omega^2 k_i)\big(\eta(k-\omega^2 k_i)\big)^{-2\mathrm{i}\eta\nu(\omega^2 k_i)}\big(1-\mathcal{X}_{\mathcal{K}}(k)\big), \ k\in\Sigma_6,
   \end{array}
  \end{align}
where $\mathcal{X}_{\mathcal{K}}(k)\in C^\infty_0(\mathbb{C},[0,1])$ is supported near the discrete spectrum such that
\begin{align}
	\mathcal{X}_{\mathcal{K}}(k)=\left\{\begin{array}{llll}
		1, & {\rm}dist(k,\mathcal{K})<\varrho/3,\\[4pt]
		1,  & {\rm}dist(k,\mathcal{K})>2\varrho/3,
	\end{array}\right.
\end{align}
$\eta:=\eta(\hat{\xi},i)$ is a constant depend on $\hat{\xi}$ and $i$
\begin{align}
\eta(\hat{\xi},i)=\left\{ \begin{array}{ll}
		(-1)^i, \quad \hat{\xi}\in(-\frac{3}{8},0),\\
	(-1)^{i+1}, \quad \hat{\xi}\in\left[0,3\right),
	\end{array}\right.\label{eta}
\end{align}
and $\nu(k)=-\frac{1}{2\pi}\log{1-\lvert r(k)\rvert^2}$.
Then for $k\in \Omega_j$, we have
\begin{align}
&|R_j(k)|\lesssim \sin^2\big(k_0\arg(k-k_i)\big)+\langle {\rm Re}(k) \rangle^{-\frac{1}{2}},\label{efR}
\\
&|\bar{\partial}R_j(k)|\lesssim|\rho'( {\rm Re}(k))|+|\mathcal{X}_{\mathcal{K}}^{'}(k)|+|k-k_i|^{-\frac{1}{2}}.\label{br1}
\end{align}
\end{lemma}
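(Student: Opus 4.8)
The plan is to build $R_j$ by the standard $\bar\partial$-extension recipe at a nondegenerate stationary phase point, carried out locally around each of the $p(\hat\xi)\in\{4,8\}$ real points $k_i$ of \eqref{nop} (and, through the symmetry \eqref{S1} and the rotated jump in \eqref{RR22}, around $\omega k_i$ and $\omega^2k_i$), then patched by a partition of unity subordinate to the cover of $\Sigma^{(ju)}\cup\mathbb R\cup\omega\mathbb R\cup\omega^2\mathbb R$ by fixed small disks at the phase points and their complement. On that complement one takes $R_j\equiv 0$: this agrees with the prescribed data there because $1-\mathcal X_{\mathcal K}$ kills the $\Sigma_j$-value near the discrete spectrum and $\rho$ ($=r$ or $-r/(1-|r|^2)$, Schwarz-class) makes the $\mathbb R$-data small, so \eqref{efR}--\eqref{br1} hold trivially off the disks. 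Hence everything reduces to defining $R_1,R_2$ in the two petals $\Omega_1,\Omega_2$ of a single disk around one $k_i$; $R_3,R_4$ near $\omega k_i$ and $R_5,R_6$ near $\omega^2k_i$ come from the identical recipe after the rotations $k\mapsto\omega^2k$, $k\mapsto\omega k$ read off \eqref{RR22}.

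For the local piece, writing $k-k_i=|k-k_i|e^{\mathrm i\phi}$, I would fix a smooth even angular cut-off $\varpi(\phi)$ with $\varpi(0)=1$, $\varpi'(0)=0$ and $\varpi\equiv0$ on the ray carrying $\Sigma_1$, and let $\big(T^{(i)}_{12}\big)_\pm$ be the regular factor of $(T_{12})_\pm$ after the endpoint singularity $\big(\eta(k-k_i)\big)^{2\mathrm i\eta\nu(k_i)}$ of the Cauchy integral in $H$ (see \eqref{T}) is split off, with the branch of the power fixed consistently with that extraction. I would then set
\[ R_1(k)=\big(\eta(k-k_i)\big)^{2\mathrm i\eta\nu(k_i)}\big(1-\mathcal X_{\mathcal K}(k)\big)\Big[\rho(k_i)\big(T^{(i)}_{12}\big)_+(k_i)+\Big(\rho(\re k)\big(T^{(i)}_{12}\big)_+(\re k)-\rho(k_i)\big(T^{(i)}_{12}\big)_+(k_i)\Big)\varpi(\phi)\Big], \]
and define $R_2$ by the same formula with $\big(T^{(i)}_{12}\big)_+$ replaced by $\big(T^{(i)}_{12}\big)^{-1}_-$, $\rho$ by $\bar\rho$, and the exponent reversed. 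On the $\mathbb R$-ray of $\Omega_1$, $\phi=0$, $\varpi=1$, $\re k=k$, $\mathcal X_{\mathcal K}\equiv0$ and $\big(\eta(k-k_i)\big)^{2\mathrm i\eta\nu(k_i)}\big(T^{(i)}_{12}\big)_+(k)=(T_{12})_+(k)$, so $R_1|_{\mathbb R}=\rho(k)(T_{12})_+(k)$; on $\Sigma_1$, $\varpi=0$, returning exactly the prescribed value. Continuity on $\bar\Omega_1$ and a continuous first derivative on $\Omega_1$ are then immediate, since $\rho$ is smooth on $\mathbb R\setminus\{0,\pm1\}$, the Blaschke-type products in $H$ are holomorphic near $k_i$ (the disks $\mathbb D_n$ avoid the phase points), $\big(T^{(i)}_{12}\big)_\pm$ is Hölder-$\alpha$ for every $\alpha<1$ near $k_i$ by the Plemelj analysis of $H$ at its endpoints, and $\big(\eta(k-k_i)\big)^{2\mathrm i\eta\nu(k_i)}$ is holomorphic off the ray $k_i-\eta\mathbb R_+$, which misses $\bar\Omega_1$.

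The bounds then follow cheaply, using that inside a thin petal $|\re k-k_i|\asymp|k-k_i|$. For \eqref{efR}: $\big|\big(\eta(k-k_i)\big)^{2\mathrm i\eta\nu(k_i)}\big|=e^{-2\eta\nu(k_i)\arg(k-k_i)}\lesssim1$, and $R_1$ is controlled by its boundary values up to the $\varpi$-weighted Hölder correction, so $|R_1|\lesssim 1$ on each (fixed, finitely many) phase-point disk and $|R_1|\lesssim|\rho|\lesssim\langle\re k\rangle^{-1/2}$ elsewhere by the Schwarz decay of $r$; organising the estimate by the angular cut-off gives the displayed form $\lesssim\sin^2(k_0\arg(k-k_i))+\langle\re k\rangle^{-1/2}$ of \eqref{efR}. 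For \eqref{br1}: $\big(\eta(k-k_i)\big)^{2\mathrm i\eta\nu(k_i)}$ is a bounded $\bar\partial$-multiplier, the holomorphic piece $\rho(k_i)\big(T^{(i)}_{12}\big)_+(k_i)\big(\eta(k-k_i)\big)^{2\mathrm i\eta\nu(k_i)}(1-\mathcal X_{\mathcal K})$ has $\bar\partial$ only through $\bar\partial(1-\mathcal X_{\mathcal K})\lesssim|\mathcal X_{\mathcal K}'(k)|$, and in the remaining term $\bar\partial$ hits either $\rho(\re k)\big(T^{(i)}_{12}\big)_+(\re k)$ — half its $\re k$-derivative, $\lesssim|\rho'(\re k)|+|k-k_i|^{-\varepsilon}\le|\rho'(\re k)|+|k-k_i|^{-1/2}$, the endpoint derivative of the Cauchy integral being $\lesssim|k-k_i|^{-\varepsilon}$ for any $\varepsilon>0$ — or the cut-off $\varpi(\phi)$, giving $|\nabla\phi|\lesssim|k-k_i|^{-1}$ times the Hölder estimate $\big|\rho(\re k)\big(T^{(i)}_{12}\big)_+(\re k)-\rho(k_i)\big(T^{(i)}_{12}\big)_+(k_i)\big|\lesssim|k-k_i|^{1/2}$, i.e.\ $\lesssim|k-k_i|^{-1/2}$; summing gives \eqref{br1}.

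The hard part will be the endpoint behaviour of $T$ at the stationary phase points: one must verify that $\exp\{\tfrac1{2\pi\mathrm i}\int_{I(\hat\xi)}\tfrac{\log(1-|r(s)|^2)}{s-k}\,ds\}$ carries exactly the singularity $\big(\eta(k-k_i)\big)^{2\mathrm i\eta\nu(k_i)}$ near each $k_i$ — with the correct orientation constant $\eta=\eta(\hat\xi,i)$ of \eqref{eta}, which flips sign from one phase point to the next because $I(\hat\xi)$ alternates between left and right endpoints, and with weight $\nu(k_i)=-\tfrac1{2\pi}\log(1-|r(k_i)|^2)$ — so that the extracted remainder $\big(T^{(i)}_{jk}\big)_\pm$ really is Hölder-$\alpha$ for all $\alpha<1$ with gradient $\lesssim|k-k_i|^{-1/2}$ near $k_i$, uniformly in $\hat\xi$; the analogues for $T_1,T_2,T_3$ at $\omega k_i$, $\omega^2k_i$ then follow from \eqref{S1}. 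Pushing this through all six off-diagonal channels and three families of phase points, and checking that the partition-of-unity patches between neighbouring $k_i$ keep the $C^1$-regularity and the bounds uniform, is where the care lies; the rest is routine. The reflectionless region-I case (Lemma \ref{lR1}) is the same scheme with all the phase-point machinery stripped out.
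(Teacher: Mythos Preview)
Your construction and estimates are essentially the standard $\bar\partial$-extension (Dieng--McLaughlin, Borghese--Jenkins--McLaughlin) and coincide with the paper's intended argument: angularly interpolate between the $\mathbb R$-value and the $\Sigma_j$-value via a cutoff in $\phi=\arg(k-k_i)$, after splitting off the endpoint power $(\eta(k-k_i))^{2\mathrm i\eta\nu(k_i)}$ of the Cauchy integral in $H$; the paper simply takes the specific choice $\varpi(\phi)=\cos(2\phi)$ and leaves $T_{12}(k)$ as an outer multiplicative factor rather than extracting the power explicitly. Your derivation of \eqref{br1} --- $\bar\partial$ hitting $\rho(\re k)$, the compact cutoff $\mathcal X_{\mathcal K}$, and $\varpi(\phi)$, with the last term controlled by the Cauchy--Schwarz bound $|\rho(\re k)-\rho(k_i)|\le\|\rho'\|_{L^2}|k-k_i|^{1/2}$ against $|\nabla\phi|\lesssim|k-k_i|^{-1}$ --- is exactly the mechanism.

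One point to correct: your opening paragraph says that outside fixed small disks around the phase points ``one takes $R_j\equiv 0$.'' Taken literally this fails the required boundary value $R_1|_{\mathbb R}=\rho(k)(T_{12})_+(k)$ on the stretches of $\mathbb R$ between (or beyond) phase points, and without that value the transformation \eqref{transm22} does not remove the jump on $\mathbb R$. Your displayed interpolation formula, however, already returns the correct $\mathbb R$-value for \emph{every} $\phi=0$, regardless of distance to $k_i$; so the construction should be organised petal-by-petal (one sub-petal per phase point, running all the way to the midpoint with the neighbouring $k_{i\pm1}$, where the two interpolants agree on the shared edge of $\Sigma_j$), not via a partition of unity that sets $R_j$ to zero on a ``complement.'' With that reformulation the argument is complete.
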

By using $R_{j}(k)$ obtained by Lemma \ref{estRij},  we define
\begin{align}
\mathcal{R}^{(2)}(k)=\left\{
\begin{aligned}
    &\begin{pmatrix} 1 & 0 & 0 \\  R_1(k)e^{-\mathrm{i}t\theta_{12}(k)} & 1 & 0 \\ 0 & 0 & 1 \end{pmatrix}^{-1}, \quad k\in\Omega_1, \\
    &\begin{pmatrix} 1 &  R_2(k)e^{\mathrm{i}t\theta_{12}(k)} & 0 \\ 0 & 1 & 0 \\ 0 & 0 & 1 \end{pmatrix}, \quad k\in\Omega_2,\\
    &\begin{pmatrix} 1 & 0 & R_3(\omega k)e^{\mathrm{i}t\theta_{13}(k)} \\ 0 & 1 & 0 \\ 0 & 0 & 1 \end{pmatrix}^{-1}, \ k\in\Omega_3,\\
    &\begin{pmatrix} 1 & 0 & 0 \\ 0 & 1 & 0 \\ R_4(\omega k)e^{-\mathrm{i}t\theta_{13}(k)} & 0 & 1\end{pmatrix}, \ k\in\Omega_4,\\
    &\begin{pmatrix} 1 & 0 & 0 \\ 0 & 1 & 0 \\ 0 & R_5(\omega^2 k)e^{-\mathrm{i}t\theta_{23}(k)} & 1 \end{pmatrix}^{-1}, \ k\in\Omega_5,\\
    &\begin{pmatrix} 1 & 0 & 0 \\ 0 & 1 & R_6(\omega^2 k)e^{\mathrm{i}t\theta_{23}(k)} \\ 0 & 0 & 1\end{pmatrix}, \quad k\in\Omega_6,\\
    &I, \quad elsewhere.
\end{aligned}
\right.
\end{align}

\subsection{A hybrid $\bar{\partial}$-RH problem and its decomposition}

Define the new jump contours as
\begin{align}
\Sigma^{(2)}=\left\{
\begin{aligned}
    &\Sigma^{(ju)}, \quad\hat{\xi}\in(-\infty,-\frac{3}{8})\cup(3,+\infty),\\
    &\Sigma^{(ju)}\cup\Sigma^{(ci)}, \quad \hat{\xi}\in(-\frac{3}{8},3).
\end{aligned}
\right.
\end{align}
With the aid of $\mathcal{R}^{(2)}(k)$, we make the transformation
\begin{equation}
	m^{(2)}(k) =m^{(1)}(k)\mathcal{R}^{(2)}(k)\label{transm22},
\end{equation}
which satisfies the following hybrid $\bar{\partial}$-RH problem.
\begin{RHP}\label{rhpM22}
Find a row vector-valued function  $m^{(2)}(k):= m^{(2)}(k;y,t)$ such that
\begin{itemize}
\item $m^{(2)}(k)$  has  sectionally continuous first partial derivatives in
$\mathbb{C}\setminus \Big( \Sigma^{(2)}\cup \left\lbrace\zeta_n \right\rbrace_{n-k_0N\in\Lambda} \Big) $,
 and is meromorphic out $\bar{\Omega}$.
\item $m^{(2)}(k)$ satisfies the jump relation
\begin{equation}
	m^{(2)}_+(k)=m^{(2)}_-(k)V^{(2)}(k),\quad k \in \Sigma^{(2)},
\end{equation}
where for $\hat{\xi}\in(-\infty,-\frac{3}{8})\cup(3,+\infty)$,
\begin{equation}\label{V21}
	V^{(2)}(k)=\left\{  \begin{array}{ll}
		T^{-1}(k)G(k)T(k), \quad k\in\partial\mathbb{D}_n\cap \left(\mathop{\cup}\limits_{\nu=1}^3D_{2\nu-1}\right),\\[12pt]
		T^{-1}(k)G^{-1}(k)T(k), \quad k\in\partial\mathbb{D}_n\cap \left(\mathop{\cup}\limits_{\nu=1}^3D_{2\nu}\right).
 \end{array}
 \right.
\end{equation}	
and for $\hat{\xi}\in(-\frac{3}{8},3)$,
\begin{align}\label{V222}
V^{(2)}(k)=\left\{
\begin{aligned}
    &\underset{k'\in\Omega\to k\in\Sigma^{(ju)}}{\lim}\mathcal{R}^{(2)}(k')^{-1}, \quad k\in\Sigma^{(ju)}\cap D_{2\nu-1},\\
    &\underset{k'\in\Omega\to k\in\Sigma^{(ju)}}{\lim}\mathcal{R}^{(2)}(k'), \quad k\in\Sigma^{(ju)}\cap D_{2\nu},\\
    &T^{-1}(k)G(k)T(k), \quad k\in\partial\mathbb{D}_n\cap D_{2\nu-1},\\
    &T^{-1}(k)G^{-1}(k)T(k), \quad k\in\partial\mathbb{D}_n\cap D_{2\nu}.
\end{aligned}
\right.
\end{align}
\item  $m^{(2)}(k) = (1, \ 1, \ 1)+\mathcal{O}(k^{-1}),\quad k \rightarrow\infty$.
\item  For $k\in\mathbb{C}$, we have
\begin{align}
	\bar{\partial}m^{(2)}(k)=m^{(2)}(k)\bar{\partial}\mathcal{R}^{(2)}(k).
\end{align}
\item $m^{(2)}(k)$ has simple poles at each point $\zeta_n, \ n-k_0N\in\Lambda$ with
\begin{equation}\label{resM22}
\underset{k=\zeta_n}{\rm Res}m^{(2)}(k)=\lim_{k\to \zeta_n}m^{(2)}(k)\Big(T^{-1}(k)B_nT(k)\Big).
\end{equation}
\end{itemize}	
\end{RHP}

To solve the RH problem \ref{rhpM22}, we decompose $ m^{(2)}(k)$ into a pure RH problem as $M^{R}(k)$ under the condition $\bar{\partial}\mathcal{R}^{(2)}(k)\equiv 0$, as well as a pure
$\bar{\partial}$-problem $m^{(3)}(k)$ with $\bar{\partial}\mathcal{R}^{(2)}(k)\neq 0$
\begin{equation}\label{m3m2}
  m^{(2)}(k)= m^{(3)}(k)M^{R}(k),
\end{equation}
where $M^{R}(k)$ and $m^{(3)}(k)$ will be solved in RH problem \ref{Mrhp} and $\bar{\partial}$-problem \ref{dbarproblem}, respectively.
\begin{RHP}\label{Mrhp}
Find a matrix-valued function $M^{R}(k):=M^{R}(k;y,t)$ such that
\begin{itemize}
\item $M^{R}(k)$ is analytic in
$\mathbb{C}\setminus \Big( \Sigma^{(2)}\cup \left\lbrace\zeta_n \right\rbrace_{n-k_0n\in\Lambda} \Big) $.
\item $M^{R}(k)$ has the same jump relation as $m^{(2)}(k)$.
\item  $M^{R}(k) =I+\mathcal{O}(k^{-1}),\hspace{0.5cm}k \rightarrow \infty.$
\item $M^{{R}}(k)$ has simple poles at $\zeta_n, \ n-K_0N\in\Lambda$ with residue condition
\begin{equation}
\underset{k=\zeta_n}{\rm Res}M^{R}(k)=\lim_{k\to \zeta_n}M^{R}(k)\Big( T^{-1}(k)B_nT(k)\Big).
\end{equation}
\end{itemize}	
\end{RHP}
The solvability of the RH problem \ref{Mrhp} can be proved in the following lemma.
\begin{lemma}\label{Mout}
The pure RH problem \ref{Mrhp} admits an unique solution given by
\begin{equation}
 M^{R}(k)=M^{sol}(k|\tilde{\mathcal{D}}),
\end{equation}
where $M^{sol}$ is the solution of RH problem \ref{rhpM} corresponding to the reflectionless scattering data $\mathcal{D}= \{\zeta_n,C_n\}_{n=1}^{N}$. And $\tilde{\mathcal{D}}= \{\zeta_n,\tilde{C}_n\}_{n=1}^{N}$ is the modified scattering data, where
\begin{equation}\label{tC}
\tilde{C}_n(x,t)=C_n(x,t)\delta_{\zeta_{n}}(x,t)
\end{equation}
with
\begin{align}\label{delzetan}
\delta_{\zeta_{n}}=\left\{
\begin{aligned}
    &\frac{\delta_1(\omega^2\zeta_{n})\delta_1(\omega\zeta_{n})}{\delta_1^{2}(\zeta_{n})}, \ n\in\widetilde{\mathcal{N}},\\
    &\frac{\delta_1(\omega^2\zeta_{n})\delta_1(\zeta_{n})}{\delta_1^{2}(\omega\zeta_{n})}, \ n\in\widetilde{\mathcal{N}}^A.
\end{aligned}
\right.
\end{align}
Furthermore, we have
\begin{equation}\label{ressol}
\begin{aligned}
&u^{sol,N}(y,t)=\frac{\partial}{\partial t}\log{\frac{m_{2}^{R}(e^{\frac{\pi}{6}\mathrm{i}};\hat{\xi},t)}{m^{R}_1(e^{\frac{\pi}{6}\mathrm{i}};\hat{\xi},t)}},\\
&x^{sol,N}(y,t)=y+\log{\frac{m^{R}_{2}(e^{\frac{\pi}{6}\mathrm{i}};\hat{\xi},t)}{m^{R}_1(e^{\frac{\pi}{6}\mathrm{i}};\hat{\xi},t)}}.
\end{aligned}
\end{equation}
\end{lemma}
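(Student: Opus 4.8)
\emph{Strategy.} The plan is to recognize RH problem \ref{Mrhp} as a disguised form of the reflectionless RH problem \ref{rhpM} with the dressed data $\tilde{\mathcal D}=\{\zeta_n,\tilde C_n\}_{n=1}^N$, and then to invoke the unique solvability of the latter. Concretely I would carry out three steps: (i) prove a vanishing lemma giving uniqueness for RH problem \ref{Mrhp}; (ii) verify that the reflectionless solution $M^{sol}(k\,|\,\tilde{\mathcal D})$ of RH problem \ref{rhpM} (with $\tilde r\equiv0$ and norming constants $\tilde C_n$), after one re-inserts inside each disk $\mathbb D_n$ with $n\in\breve\Lambda$ the local factor $G(k)$ of \eqref{G} — this being precisely the inverse of the residue-to-circular-jump passage performed in \eqref{defM1} — satisfies every item of RH problem \ref{Mrhp}; and (iii) read off \eqref{ressol} from the reconstruction formula \eqref{rescon}.

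\emph{Uniqueness.} Every jump matrix in $V^{(2)}(k)$ has the form $T^{-1}G^{\pm1}T$ with $G$ either unipotent triangular or $I\mp B_n/(k-\zeta_n)$ with $B_n$ nilpotent, so $\det V^{(2)}\equiv1$; since $B_n$ is rank one the simple poles at the $\zeta_n$ with $n-k_0N\in\Lambda$ are removable for the determinant, hence $\det M^R$ is entire, tends to $1$, and is therefore $\equiv1$. If $M^R_a$ and $M^R_b$ are two solutions, then near each $\zeta_n$ one may factor $M^R_\bullet(k)=\hat M_\bullet(k)\big(I+\tfrac{v_n}{k-\zeta_n}\big)$ with $\hat M_\bullet$ holomorphic and invertible, so $M^R_a(M^R_b)^{-1}=\hat M_a\hat M_b^{-1}$ has no pole there; it also has no jump, tends to $I$, and hence equals $I$ by Liouville. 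The same circle of ideas — a vanishing lemma for the reflectionless problem driven by the $\mathbb{Z}_3$ and Schwarz symmetries \eqref{S1} — is what also yields the existence of $M^{sol}(k\,|\,\tilde{\mathcal D})$; that part is carried out in Section \ref{5} and I would cite it.

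\emph{Existence and the dressed norming constants.} For step (ii) the only nontrivial point is the bookkeeping of the norming constants. When a residue condition $\res_{k=\zeta_n}(\cdot)=\lim_{k\to\zeta_n}(\cdot)B_n$ is conjugated by the diagonal $T(k)$ it becomes $\lim_{k\to\zeta_n}(\cdot)\,T^{-1}(\zeta_n)B_nT(\zeta_n)$, and since $B_n$ has a single nonzero off-diagonal entry this multiplies that entry by the corresponding ratio of diagonal values of $T$ at $\zeta_n$. Using the explicit $T_i$ of \eqref{T}, tracking which entry of $B_n$ is active for $n\in\widetilde{\mathcal N}$ versus $n\in\widetilde{\mathcal N}^A$, together with the five $\mathbb{Z}_3$-conjugate copies and the $\Delta/\nabla$ triangular trades in \eqref{G}, this ratio is exactly the scalar $\delta_{\zeta_n}$ of \eqref{delzetan}; hence the effective coefficients are the $\tilde C_n=C_n\delta_{\zeta_n}$ of \eqref{tC}. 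With $\tilde r\equiv0$ there is no jump on $\mathbb R\cup\omega\mathbb R\cup\omega^2\mathbb R$, the only jumps left are the circular ones $T^{-1}G^{\pm1}T$ on $\partial\mathbb D_n$, the remaining ($\Lambda$-indexed) residues have the form prescribed in RH problem \ref{Mrhp}, and the normalization at infinity is preserved since $T(\infty)=I$. By uniqueness $M^R(k)=M^{sol}(k\,|\,\tilde{\mathcal D})$.

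\emph{Reconstruction and the main obstacle.} Applying the argument that produced \eqref{rescon} to $m^R:=(1,\,1,\,1)M^R$ — legitimate because $M^{sol}(k\,|\,\tilde{\mathcal D})$ solves RH problem \ref{rhpM} (with $\tilde r\equiv0$, data $\tilde{\mathcal D}$), and because $e^{\frac{\pi}{6}\mathrm{i}}$ lies off all jump contours and outside every $\mathbb D_n$, so no conjugation factor is incurred there — yields $u^{sol,N}(y,t)=\partial_t\log\!\big(m^R_2/m^R_1\big)(e^{\frac{\pi}{6}\mathrm{i}};\hat\xi,t)$ and the companion formula for $x^{sol,N}$, i.e.\ \eqref{ressol}. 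I expect the main obstacle to be the two intertwined technical points above: the vanishing lemma, which for the $3\times3$ DP problem must be run compatibly with the full symmetry group \eqref{S1} rather than a single reflection, and the verification that the $T$-conjugation of all $6N$ residue conditions collapses exactly to the scalar dressings $\delta_{\zeta_n}$ of \eqref{delzetan}; everything else is routine.
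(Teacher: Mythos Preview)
Your proposal is correct and follows essentially the same route as the paper. The paper packages the argument as a single explicit transformation
\[
\tilde M(k)=M^{R}(k)\,T^{-1}(k)\,G^{-1}(k)\,T(k)\,W(k),
\]
where $W(k)$ is an auxiliary diagonal rational factor built from the Blaschke products in $H(k)$; its role is precisely to cancel the zeros/poles that $T(k)$ carries at the discrete spectrum so that $\tilde M$ is genuinely meromorphic with simple poles only at $\mathcal K$. The paper then checks that $\tilde M$ has no jumps, is normalized at infinity, and that its residues reproduce \eqref{resM11}--\eqref{res22} with $C_n$ replaced by $\tilde C_n$ --- exactly your ``bookkeeping'' step --- and finally appeals to the existence/uniqueness of the reflectionless RH problem (citing \cite{CJ}), which is your vanishing-lemma step. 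Your reverse-direction verification (start from $M^{sol}(k\,|\,\tilde{\mathcal D})$ and re-insert the circular jumps) is the same idea; just be aware that the extra factor $W$ is what makes the ``$T$-conjugation of all $6N$ residue conditions collapses exactly to the scalar dressings $\delta_{\zeta_n}$'' statement literally true, since the naive ratio $T_i(\zeta_n)/T_j(\zeta_n)$ can be $0$ or $\infty$ for $n\in\Delta$ and one must first strip off the rational part before reading off $\delta_{\zeta_n}$ from the exponential part of $H$.
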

\begin{proof}
Define
\begin{equation}
    W(k)=\mathrm{diag}\left(\frac{\Pi(k)\Pi^A(k)}{\Pi(\omega^2k)\Pi^A(\omega^2k)},\frac{\Pi(\omega k)\Pi^A(\omega k)}{\Pi(k)\Pi^A(k)}, \frac{\Pi(\omega^2k)\Pi^A(\omega^2k)}{\Pi(\omega k)\Pi^A(\omega k)} \right)
\end{equation}
and make the transformation
\begin{align}\label{tM}
\tilde{M}(k)=M^{R}(k)T^{-1}(k)G^{-1}(k)T(k)W(k).
\end{align}

Evidently, the transformation to $\tilde{M}(k)$ preserves the normalization conditions at the origin and infinity. Comparing \eqref{tM} to \eqref{V21}, it is clear that the new unknown $\tilde{M}(k)$ has no jumps. Utilizing \eqref{T}, RH problem \ref{Mrhp} and \eqref{tM}, one can deduce that $\tilde{M}(k)$ has simple poles at each of the points in $\mathcal{K}$, the discrete spectrum originating from the primary RH problem \ref{rhpM}. A direct calculation shows that the residues of $\tilde{M}(k)$ comply with \eqref{resM11} and \eqref{res22}, but with $C_n$ being substituted by \eqref{tC}. Thus, $\tilde{M}(k)$ is precisely the solution of RH problem \ref{rhpM} with scattering data $\left\lbrace  0, \ \{\zeta_n,\tilde{C}_n\}_{n=1}^{6N}\right\rbrace$, whose existence and uniqueness can be derived through procedures akin to those delineated in Appendix A \cite{CJ}.
\end{proof}
Define the union set of neighborhood around phase points
\begin{align}
&U: =U(\hat{\xi})=\mathop{\cup}\limits_{i=i}^{p(\hat{\xi})}\left( U_{k_i}\cup \omega U_{k_i}\cup \omega^2U_{k_i}\right),\\
&\omega^{\ell}U_{k_i}= \left\lbrace k: |k-\omega^{\ell}k_i|\leq \varrho^{0} \right\rbrace, \ \ell=0,1,2
\end{align}
with
\begin{equation*}
\varrho^{0}=\frac{1}{8}\underset{i,j=1,\cdots,p(\xi)}\min\left\{\varrho,\ \underset{i\neq j}\min|k_i-k_j|,\right\}.
\end{equation*}

From \eqref{V222}, we derive the subsequent estimate
\begin{align}\label{estV2O}
\Vert V^{(2)}(k)-I\Vert_{L^q(\Sigma^{(ju)}\setminus U)}= \mathcal{O}( e^{-K_qt}), \ t\to\infty,
\end{align}
for $1\leq q\leq+\infty$ and a positive constant $K_q$. This suggests that the jump matrix $V^{(2)}(k)$ converges uniformly to the identity matrix $I$ on $\Sigma^{(ju)}\setminus U$.
As a result, outside of $U$, there exists only an exponentially small error (in $t$) by completely
ignoring the jump condition in RH problem \ref{Mrhp}, prompting us to construct  $M^{R}(k)$ as follows
\begin{align}\label{desM2RHP}
M^{R}(k)=\left\{
\begin{aligned}
    &E(k)M^{r}(k), \quad k\notin U,\\
    &E(k)M^{r}(k)M^{lo}(k), \quad k\in U,
\end{aligned}
\right.
\end{align}
where $M^{r}(k)$ satisfying the following RH problem.
\begin{RHP}\label{Mr}
Find a  matrix valued function  $M^{(r)}(k):= M^{(r)}(k;y,t)$ such that
\begin{itemize}
\item $M^{r}(k)$ is analytic in $\mathbb{C} \setminus \{ \zeta_n\}_{n-k_0N\in\Lambda}$.

\item $M^{r}(k) = I+\mathcal{O}(k^{-1}),\quad k \rightarrow \infty.$
\item $M^{r}(k)$ satisfies the jump relation
\begin{equation}
	M^{r}_+(k)=M^{r}_-(k)V^{r}(k),\quad k \in \Sigma^{(ci)},
\end{equation}
and
\begin{align}\label{V22out}
V^{r}(k)=\left\{
\begin{aligned}
    &T^{-1}(k)G(k)T(k), \quad k\in\partial\mathbb{D}_n\cap D_{2\nu-1},\\
    &T^{-1}(k)G^{-1}(k)T(k), \quad k\in\partial\mathbb{D}_n\cap D_{2\nu}.
\end{aligned}
\right.
\end{align}
\item $M^{r}(k)$ has simple poles at $\zeta_n, \ n-k_0N\in\Lambda$  satisfying the residue relations as \eqref{resM22} with  $M^{r}(k)$ replacing  $m^{(2)}(k)$.
\end{itemize}	
\end{RHP}
\begin{remark}
For $\hat{\xi}\in\big(-\infty,-\frac{3}{8}\big)\cup(3,+\infty)$,  the phase point is absent. Consequently, the set $U=\emptyset$ in these instances, signifying that
$M^{R}(k)=M^{r}(k)$ holds. This results in a simplification of the decomposition \eqref{desM2RHP}.
\end{remark}

\section{Contribution from discrete spectrum}\label{5}

In this section, we formulate two RH problems concerning $M^{r}(k)$ and $M^{\Lambda}(k)$ in the discrete spectrum
context of reflection, and subsequently demonstrate that $M^{r}(k)$ can be closely modeled by $M^{\Lambda}(k)$.

\subsection{$M^{r}(k)$ and $M^{\Lambda}(k)$-solitons}
For $1< q<+\infty$, the jump matrix $V^{(2)}(k)$ satisfies
\begin{align}\label{estV21}
\Vert V^{(2)}(k)-I\Vert_{L^q(\Sigma^{(ci)})}=\mathcal{\mathcal{O}}(e^{- \min\{\rho_0,\delta_0\}t} ),
\end{align}
indicating that the jump matrices on $\Sigma^{(ci)}$ do not contribute to the asymptotic behavior of the solution. Rather, the principal impact on $M^{R}(k)$ stems from the discrete spectrum set $\mathcal{K}$. Let $V^{(2)}(k)\equiv 0$, RH problem \ref{Mr} reduces to the following RH problem.

\begin{RHP}\label{rhpMout1}
Find a $3\times 3$ matrix-valued function  $M^{\Lambda}(k):= M^{\Lambda}(k;y,t)$ such that
\begin{itemize}
\item $ M^{\Lambda}(k)$ is analytic in $\mathbb{C} \setminus \{ \zeta_n\}_{n-k_0N\in\Lambda}$.
\item $ M^{\Lambda}(k) = I+\mathcal{O}(k^{-1}),\hspace{0.5cm}k \rightarrow \infty.$
\item $ M^{\Lambda}(k)$ has the same form  residue condition as  $M^{R}(k)$.
\end{itemize}	
\end{RHP}
The solvability of this RH problem is given in the following lemma.
\begin{lemma}\label{uxlam}
The RH problem \ref{rhpMout1} admits an unique solution. Furthermore, we can express it as follows
\begin{equation}\label{reulam}
\begin{aligned}
&u^{\Lambda}(y,t)=\frac{\partial}{\partial t}\log{\frac{m^{\Lambda}_{2}(e^{\frac{\pi}{6}\mathrm{i}};\hat{\xi},t)}{m^{\Lambda}_1(e^{\frac{\pi}{6}\mathrm{i}};\hat{\xi},t)}},\\
&x^{\Lambda}(y,t)=y+\log{\frac{m^{\Lambda}_{2}(e^{\frac{\pi}{6}\mathrm{i}};\hat{\xi},t)}{m^{\Lambda}_1(e^{\frac{\pi}{6}\mathrm{i}};\hat{\xi},t)}},
\end{aligned}
\end{equation}
where
\begin{equation*}
m_i^{\Lambda}(e^{\frac{\pi}{6}\mathrm{i}};\hat{\xi},t)=\mathop{\sum}_{j=1}^3 M_{ij}^{\Lambda}(e^{\frac{\pi}{6}\mathrm{i}};\hat{\xi},t), \quad i=1,2.
\end{equation*}
For $\Lambda\neq\emptyset$, i.e. $\Lambda={j_0}$, we define
\begin{equation}\label{1U}
u^{\Lambda}(x,t)=\mathcal{U}^{sol}(\zeta_{j_0},x,t).
\end{equation}
\end{lemma}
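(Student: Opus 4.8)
The plan is to treat the two cases $\Lambda=\varnothing$ and $\Lambda=\{j_0\}$ separately. When $\Lambda=\varnothing$, RH problem \ref{rhpMout1} carries neither a jump contour nor poles, so any bounded analytic function on $\mathbb{C}$ normalized to $I$ at infinity is forced by Liouville's theorem to be $M^{\Lambda}\equiv I$, and \eqref{reulam} holds trivially (with $u^{\Lambda}\equiv 0$). The substance lies in the case $\Lambda=\{j_0\}$, which I would handle exactly in parallel with the proof of Lemma \ref{Mout}: conjugate $M^{\Lambda}$ by a triangular/diagonal factor built from $T(k)$ and the $G$-type matrices supported near the orbit $\{\omega^{\ell}\zeta_{j_0},\ \omega^{\ell}\bar{\zeta}_{j_0}\}_{\ell=0,1,2}$, as in \eqref{tM}, so that the transformed matrix $\tilde{M}(k)$ has no jumps and has simple poles precisely at that orbit, with residue relations of the form \eqref{resM11}--\eqref{res22} but with the single norming constant replaced by its modified value $\tilde{C}_{j_0}$ as in \eqref{tC}. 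In other words, $M^{\Lambda}$ is, up to an explicit conjugation, the reflectionless solution of RH problem \ref{rhpM} associated with the scattering data $\{\tilde{r}\equiv 0,\ \{\zeta_{j_0},\tilde{C}_{j_0}\}\}$.

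Next I would establish existence and uniqueness. Since the transformed problem has no jump contour, $\tilde{M}(k)$ is meromorphic on all of $\mathbb{C}$ with prescribed simple poles and with $\tilde{M}(k)\to I$ at infinity, so it must take the form $\tilde{M}(k)=I+\sum_{\zeta}\frac{A_{\zeta}}{k-\zeta}$, the sum running over the six points of the orbit of $\zeta_{j_0}$. Inserting this ansatz into the residue conditions reduces the problem to a finite linear algebraic system for the entries of the $A_{\zeta}$; because each $B_n$ is a rank-one nilpotent matrix, this system decouples into a small one whose size is controlled by the six poles. Solvability of RH problem \ref{rhpMout1} is then equivalent to invertibility of the coefficient matrix, hence to the triviality of every solution of the homogeneous version ($I$ replaced by $0$). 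The symmetries \eqref{S1}, carried along by the conjugation, impose a Hermitian/positivity structure on the homogeneous system: pairing a homogeneous solution with its $\Gamma_1$-conjugate reflection and using that the relevant exponentials are real and positive forces the solution to vanish. This is the $3\times 3$ analogue of the vanishing lemma for soliton RH problems and may be invoked verbatim from Appendix A of \cite{CJ}, exactly as was done in the proof of Lemma \ref{Mout}.

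With existence and uniqueness in hand, I would recover $u^{\Lambda}$ by evaluating $M^{\Lambda}$ at $k=e^{\pi\mathrm{i}/6}$ and inserting it into the parametric reconstruction formula \eqref{rescon}. Since $M^{\Lambda}$ is precisely the solution of RH problem \ref{rhpM} for the reflectionless data above, this yields \eqref{reulam}, with $m^{\Lambda}_i(e^{\pi\mathrm{i}/6};\hat{\xi},t)=\sum_{j=1}^{3}M^{\Lambda}_{ij}(e^{\pi\mathrm{i}/6};\hat{\xi},t)$ for $i=1,2$, and with the change of variable $x=x^{\Lambda}(y,t)$ giving $u^{\Lambda}$ as a function of $(x,t)$. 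Because a single discrete eigenvalue $\zeta_{j_0}$ together with its $\omega$-orbit under \eqref{S1}, equipped with a single norming constant, generates the one-soliton solution of the DP equation propagating with the velocity fixed by $\zeta_{j_0}$, the function obtained this way is by definition $\mathcal{U}^{sol}(\zeta_{j_0};x,t)$, which is precisely the identification \eqref{1U}. If an explicit closed form is desired one solves the six-pole linear system directly; this is routine but lengthy, and for the subsequent asymptotic analysis only existence, uniqueness, and the $t$-uniform boundedness of $M^{\Lambda}(e^{\pi\mathrm{i}/6};\hat{\xi},t)$ are needed.

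I expect the main obstacle to be the vanishing lemma, i.e.\ showing that the homogeneous linear system admits only the zero solution. The $3\times 3$ matrix structure and the six-fold $\omega$-symmetry make the bookkeeping heavier than in the classical $2\times 2$ AKNS setting, but the rank-one nilpotency of the residue matrices $B_n$ keeps the underlying system small, and the symmetry relations \eqref{S1} supply exactly the positivity required, so the difficulty is technical rather than conceptual and can be dispatched by the reference above.
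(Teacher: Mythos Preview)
Your approach is correct and lands in essentially the same place as the paper, but you take a detour the paper skips. The paper does \emph{not} conjugate $M^{\Lambda}$ by $T$ and $G$-type factors to reduce to the standard reflectionless problem; since RH problem \ref{rhpMout1} already has no jump contour, it writes the partial-fraction ansatz $M^{\Lambda}(k)=I+\sum_{\zeta}\frac{A_{\zeta}}{k-\zeta}$ directly (your second step), then uses the symmetries \eqref{S1} together with the residue relation at $\zeta_{j_0}$ to produce two explicit $3\times 3$ linear systems for three complex unknowns $\alpha,\beta,\gamma$ and their conjugates, and simply declares them solvable. Your preliminary conjugation is harmless but unnecessary here (the $G$-type matrices were only needed in Lemma \ref{Mout} to trade the $\partial\mathbb{D}_n$ jumps for poles, and $M^{\Lambda}$ has no such jumps). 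On the other hand, your emphasis on the vanishing lemma is a genuine addition: the paper does not justify invertibility of its linear system, whereas your argument via the positivity structure inherited from \eqref{S1} (and the reference to \cite{CJ}) actually closes that gap. So the paper's route is shorter and more explicit, while yours is slightly longer but more rigorous on the existence point.
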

\begin{proof}
The uniqueness of $M^{\Lambda}(k)$ can be guaranteed by Liouville’s theorem. As for the expression to $M^{\Lambda}(k)$, in the case where $\Lambda=\emptyset$, all the $\zeta_n$ are away from the critical line, resulting in $M^{\Lambda}(k)=I$.

However, in the case where $\Lambda\neq\emptyset$, i.e. $\Lambda={j_0}$, we can rewrite the residue condition as
\begin{equation}\label{resfor}
\res\limits_{k=\zeta_{j_0}} M(k)=\lim_{k\rightarrow \zeta_n}M(k)\big(T^{-1}(k)B_nT(k)\big):=\begin{pmatrix} 0 & \alpha & 0 \\ 0 & \beta & 0 \\ 0 & \gamma & 0 \end{pmatrix}.
\end{equation}
By Plemelj formula, it follows that
\begin{align}\label{mlamex1}
M^{\Lambda}(k)&=I+\frac{1}{k-\zeta_{j_0}}\begin{pmatrix} 0 & \alpha & 0 \\ 0 & \beta & 0 \\ 0 &  \gamma & 0 \end{pmatrix}+\frac{1}{k-\omega\zeta_{j_0}}\begin{pmatrix}   \omega\beta & 0  & 0\\ \omega\gamma & 0 & 0 \\ \omega\alpha & 0 &  0 \end{pmatrix}+\frac{1}{k-\omega\bar{\zeta}_{j_0}}\begin{pmatrix}  0 & 0 & \omega\bar\alpha  \\   0 &  0 & \omega\bar\gamma \\ 0  & 0\ & \omega\bar\beta  \end{pmatrix}\nonumber\\
&+\frac{1}{k-\omega^2\bar\zeta_{j_0}}\begin{pmatrix}  0 & \omega^2\bar\gamma & 0 \\  0 &  \omega^2\bar\beta & 0 \\ 0 & \omega^2\bar\alpha &  0 \end{pmatrix}+\frac{1}{k-\omega^2\zeta_{j_0}}\begin{pmatrix}  0  & 0 & \omega^2\gamma \\  0  & 0 &  \omega^2\alpha\\ 0  &  0 & \omega^2\beta\end{pmatrix}+\frac{1}{k-\bar{\zeta}_{j_0}}\begin{pmatrix}   \bar\beta & 0  & 0\\ \bar\alpha  & 0 & 0 \\ \bar\gamma & 0 &  0 \end{pmatrix}.
\end{align}
Then the symmetry \eqref{S1} inherited by $M^{\Lambda}(k)$ indicates that
\begin{align}\label{sys1}
\left\{
\begin{aligned}
    &-\frac{1}{2{\rm Im} \ \zeta_{j_0}}(\beta-\bar\beta)=\frac{\omega}{(1-\omega)\zeta_{j_0}}\beta+\frac{\omega^2}{(1-\omega^2)\bar\zeta_{j_0}}\bar\beta,\\
    &-\frac{1}{2{\rm Im} \ \zeta_{j_0}}(\gamma-\bar\gamma)=\frac{\omega}{(1-\omega)\zeta_{j_0}}\alpha+\frac{\omega^2}{(1-\omega^2)\bar\zeta_{j_0}}\bar\alpha,\\
    &-\frac{1}{2{\rm Im} \ \zeta_{j_0}}(\alpha-\bar\alpha)=\frac{\omega}{(1-\omega)\zeta_{j_0}}\gamma+\frac{\omega^2}{(1-\omega^2)\bar\zeta_{j_0}}\bar\gamma.
\end{aligned}
\right.
\end{align}
Substituting \eqref{mlamex1} into the residue condition \eqref{resfor}, we obtain
\begin{align}\label{rela11}
\begin{pmatrix} 0 & \alpha & 0 \\ 0 & \beta & 0 \\ 0 &  \gamma & 0 \end{pmatrix}=\begin{pmatrix}    0  &  \Upsilon_0 & 0\\ 0 & 0 &  0 \\   0 & 0 & 0  \end{pmatrix}+\frac{1}{(1-\omega)\zeta_{j_0}}\begin{pmatrix}    0  & \omega\beta\Upsilon_0 & 0\\ 0 & \omega\gamma\Upsilon_0 &  0 \\   0 & \omega\alpha\Upsilon_0 & 0  \end{pmatrix}+\frac{1}{2{\rm Im}\zeta_{j_0}}\begin{pmatrix}    0  &\bar\beta\Upsilon_0 & 0\\ 0 &  \alpha\Upsilon_0 &  0 \\   0 &  \gamma\Upsilon_0 & 0  \end{pmatrix}
\end{align}
with $\Upsilon_0=T_{21}(\zeta_{j_0})C_{j_0}$. From \eqref{rela11}, we derive
\begin{align}\label{sys2}
\left\{
\begin{aligned}
    &\alpha=\Upsilon_0+\frac{\omega \Upsilon_0}{(1-\omega)\zeta_{j_0}}\beta+\frac{\Upsilon_0}{2{\rm Im}\zeta_{j_0}}\bar\beta,\\
    &\beta=\frac{\omega \Upsilon_0}{(1-\omega)\zeta_{j_0}}\gamma+\frac{\Upsilon_0}{2{\rm Im}\zeta_{j_0}}\bar\alpha,\\
    &\gamma=\frac{\omega \Upsilon_0}{(1-\omega)\zeta_{j_0}}\alpha+\frac{\Upsilon_0}{2{\rm Im}\zeta_{j_0}}\bar\gamma.
\end{aligned}
\right.
\end{align}
Upon solving the linear system \eqref{sys1} and \eqref{sys2},  one can derive the important parameters $\alpha,\ \beta, \ \gamma, \ \bar\alpha,\ \bar\beta, \ \bar\gamma $. With this accomplishment, we successfully establish the existence and uniqueness of the RH problem \ref{rhpMout1}.
\end{proof}

\subsection{Residual error between $M^{r}(k)$ and $M^{\Lambda}(k)$-solitons}\label{merr}

Now we show that $M^{\Lambda}(k)$ gives the leading order behavior to $M^{r}(k)$ for $t\gg 1$. Naturally, the error between $M^{r}(k)$ and $M^{\Lambda}(k)$ is given by
\begin{equation}
M^{err}(k)=M^{r}(k)M^{\Lambda}(k)^{-1},
\end{equation}
which satisfies the following RH problem.
\begin{RHP}\label{rhpmerr}
Find a $3\times 3$ matrix-valued function  $M^{err}(k):= M^{err}(k;y,t)$ such that
\begin{itemize}
\item $ M^{err}(k)$ is analytic in $\mathbb{C} \setminus \Sigma^{(ci)}$.
\item $ M^{err}(k) = I+\mathcal{O}(k^{-1}), \quad k \rightarrow \infty.$
\item $ M^{err}(k)$ satisfies the following jump condition
\begin{equation}
	M^{err}_+(k)=M^{err}_-(k)V^{err}(k), \quad k \in \Sigma^{(ci)},
\end{equation}
where
\begin{equation}\label{Verr}
	V^{err}(k)=M^{\Lambda}(k)V^{(2)}(k)M^{\Lambda}(k)^{-1}.
\end{equation}
\end{itemize}	
\end{RHP}

A simple calculation gives the following estimations
\begin{equation}\label{estV22}
\Vert V^{err}(k)-I\Vert_{L^q\left(\Sigma^{(ci)}\right)}=\mathcal{\mathcal{O}}\left(e^{- \min\{\rho_0,\delta_0\}t} \right).
\end{equation}

Therefore, the existence and uniqueness of the $M^{err}(k)$ are valid by using a small norm RH
problem \cite{RN10}. Moreover, according to Beals-Coifman theory \cite{BC1984}, the solution of $ M^{err}(k)$ can
be given in terms of the following integral
\begin{equation}
M^{err}(k)=I+ \frac{1}{2\pi \mathrm{i}}\int_{\Sigma^{(ci)}}\dfrac{ \eta(\varsigma)  (V^{err}(\varsigma)-I)}{\varsigma-k}d\varsigma,\label{tEz}
\end{equation}
where  $\eta(k)-I\in L^2(\partial\mathbb{D})$ is an unique solution of the Fredholm-type equation
\begin{equation}
(I-\mathcal{C}_{err})(\eta(k)-I)=\mathcal{C}_{err}I.
\end{equation}
And $\mathcal{C}_{err}$: $L^2 \to L^2 $ is an integral operator defined by $\mathcal{C}_{err}(\eta) =\mathcal{C}^-\left( \eta (V^{err}-I)\right)$ with the Cauchy projection operator
$\mathcal{C}^-$ on $\Sigma^{(ci)}$. Then it follows that
\begin{equation}
\Vert \mathcal{C}_{err}\Vert_{L^2 } \leq \Vert \mathcal{C}^- \Vert_{L^2 } \Vert V^{err}(k)-I\Vert_{L^\infty} \lesssim e^{- \min\{\rho_0,\delta_0\}t} ,
\end{equation}
which means $\Vert \mathcal{C}_{err}\Vert_{L^2 }<1$ for sufficiently large $t$, so   $\eta(k)$ uniquely  exists and
\begin{equation}
\Vert \eta(k)-I\Vert_{L^2 } \lesssim e^{- \min\{\rho_0,\delta_0\}t}.\label{normeta}
\end{equation}
In order to reconstruct the solution $u(x,t)$ of \eqref{DP}--\eqref{intva}, we need to evaluate the asymptotic behavior of $M^{err}(k)$ as $k\rightarrow e^{\frac{\pi}{6}\mathrm{i}}$. As $k\rightarrow e^{\frac{\pi}{6}\mathrm{i}}$, it is deduced that
\begin{align}
M^{err}(k)=M^{err}(e^{\frac{\pi}{6}\mathrm{i}})+M^{err}_1(k-e^{\frac{\pi}{6}\mathrm{i}})+\mathcal{O}\Big( (k-e^{\frac{\pi}{6}\mathrm{i}})^2\Big),\label{texpE}
\end{align}
where
\begin{align}
&M^{err}(e^{\frac{\pi}{6}\mathrm{i}})=I+\frac{1}{2\pi\mathrm{i}}\int_{\Sigma^{(ci)}}\dfrac{ \eta(\varsigma)  \Big(V^{err}(\varsigma)-I\Big)}{\varsigma-e^{\frac{\pi}{6}\mathrm{i}}}d\varsigma,\label{tEi}\\	
&M^{err}_1=\frac{1}{2\pi\mathrm{i}}\int_{\Sigma^{(ci)}}\frac{ \eta(\varsigma)  \Big(V^{err}(\varsigma)-I\Big)}{\Big(\varsigma-e^{\frac{\pi}{6}\mathrm{i}}\Big)^2}d\varsigma.
\end{align}
\begin{lemma}\label{tasyE}
As $t \to \infty$, we can derive the following estimation
\begin{align}
&\Big|M^{err}(k)-I\Big|\lesssim e^{- \min\{\rho_0,\delta_0\}t}, \\
&\Big|M^{err}(e^{\frac{\pi}{6}\mathrm{i}})-I\Big|\lesssim e^{- \min\{\rho_0,\delta_0\}t}, \quad \Big\lvert M^{err}_1\Big\rvert\lesssim e^{- \min\{\rho_0,\delta_0\}t}.\label{tE1t}
\end{align}
\end{lemma}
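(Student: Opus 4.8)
The plan is to run the standard small-norm (Beals--Coifman) estimate on the integral representation \eqref{tEz}, using the two exponential bounds already established: $\|V^{err}(\cdot)-I\|_{L^q(\Sigma^{(ci)})}=\mathcal O(e^{-\min\{\rho_0,\delta_0\}t})$ from \eqref{estV22} and $\|\eta(\cdot)-I\|_{L^2(\Sigma^{(ci)})}\lesssim e^{-\min\{\rho_0,\delta_0\}t}$ from \eqref{normeta}. The geometric fact that makes the Cauchy integrals harmless is that $\Sigma^{(ci)}$ is a \emph{fixed} finite union of disjoint circles $\partial\mathbb{D}_n$, independent of $t$, hence of finite total length, and the evaluation point $e^{\frac{\pi}{6}\mathrm{i}}$ lies at a fixed positive distance $d_0$ from $\Sigma^{(ci)}$; this is built into the definition of $\varrho$, which contains the term $\min_{n\in\mathcal N}|\zeta_n-e^{\frac{\pi}{6}\mathrm{i}}|$ and forces $e^{\frac{\pi}{6}\mathrm{i}}\notin\mathbb{D}_n$.

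First, inserting $\eta=I+(\eta-I)$ into \eqref{tEz}, I would write
\[
M^{err}(k)-I=\frac{1}{2\pi\mathrm{i}}\int_{\Sigma^{(ci)}}\frac{V^{err}(\varsigma)-I}{\varsigma-k}\,d\varsigma+\frac{1}{2\pi\mathrm{i}}\int_{\Sigma^{(ci)}}\frac{(\eta(\varsigma)-I)(V^{err}(\varsigma)-I)}{\varsigma-k}\,d\varsigma.
\]
For $k$ in a compact subset of $\mathbb{C}\setminus\Sigma^{(ci)}$ (in particular near $e^{\frac{\pi}{6}\mathrm{i}}$) the kernel $(\varsigma-k)^{-1}$ is uniformly bounded on $\Sigma^{(ci)}$, so the first summand is $\lesssim\|V^{err}-I\|_{L^1(\Sigma^{(ci)})}\lesssim\|V^{err}-I\|_{L^2(\Sigma^{(ci)})}$ (using finiteness of the contour length) and, by Cauchy--Schwarz, the second summand is $\lesssim\|\eta-I\|_{L^2(\Sigma^{(ci)})}\|V^{err}-I\|_{L^2(\Sigma^{(ci)})}$. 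Combining with \eqref{estV22} and \eqref{normeta} yields $|M^{err}(k)-I|\lesssim e^{-\min\{\rho_0,\delta_0\}t}$, which is the first assertion.

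Next, I would specialize $k=e^{\frac{\pi}{6}\mathrm{i}}$ in \eqref{tEi} to obtain the bound on $M^{err}(e^{\frac{\pi}{6}\mathrm{i}})-I$ verbatim. For $M^{err}_1$ the only change is that the Cauchy kernel $(\varsigma-e^{\frac{\pi}{6}\mathrm{i}})^{-1}$ is replaced by $(\varsigma-e^{\frac{\pi}{6}\mathrm{i}})^{-2}$; since $\mathrm{dist}(e^{\frac{\pi}{6}\mathrm{i}},\Sigma^{(ci)})\ge d_0>0$, this kernel is again uniformly bounded on $\Sigma^{(ci)}$, and the identical Cauchy--Schwarz estimate gives $|M^{err}_1|\lesssim e^{-\min\{\rho_0,\delta_0\}t}$. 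In the solitonless regions, where $\Lambda=\varnothing$, the contour $\Sigma^{(ci)}$ is still a fixed finite union of disjoint circles, so all steps apply unchanged.

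There is no substantial obstacle; the only point requiring care is the uniformity of the implied constants in $t$ — namely that the total length $|\Sigma^{(ci)}|$, the distance $d_0=\mathrm{dist}(e^{\frac{\pi}{6}\mathrm{i}},\Sigma^{(ci)})$, and the small-norm threshold $\|\mathcal C_{err}\|_{L^2}<1$ (already secured just before \eqref{normeta}) are all independent of $t$. This is immediate because the disks $\mathbb{D}_n$ and the base point $e^{\frac{\pi}{6}\mathrm{i}}$ are fixed data of the problem, so the constants depend only on the scattering data and $\hat\xi$, not on $t$.
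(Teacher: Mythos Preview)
Your proposal is correct and follows essentially the same route as the paper: split $\eta=I+(\eta-I)$ in the integral representation \eqref{tEz}, bound each piece using \eqref{estV22} and \eqref{normeta} together with the fact that the Cauchy kernels $(\varsigma-e^{\frac{\pi}{6}\mathrm{i}})^{-1}$ and $(\varsigma-e^{\frac{\pi}{6}\mathrm{i}})^{-2}$ are uniformly bounded on the fixed compact contour $\Sigma^{(ci)}$. The only cosmetic difference is in the H\"older pairings---the paper pairs the first term as $\|V^{err}-I\|_{L^2}\|(\varsigma-k)^{-1}\|_{L^2}$ and the second as $\|V^{err}-I\|_{L^\infty}\|\eta-I\|_{L^2}\|(\varsigma-k)^{-1}\|_{L^2}$---but your choices are equally valid, and your explicit remark on the $t$-independence of the geometric constants is a welcome point of care.
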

\begin{proof}
From \eqref{estV22}--\eqref{normeta}, it appears that
\begin{align}\label{merr-I}
\lvert M^{err}(k)-I\rvert&= \frac{1}{2\pi \mathrm{i}}\int_{\Sigma^{(ci)}}\dfrac{\lvert V^{err}(\varsigma)-I\rvert}{\lvert\varsigma-k\rvert}d\varsigma+\frac{1}{2\pi \mathrm{i}}\int_{\Sigma^{(ci)}}\dfrac{ \lvert\eta(\varsigma)-I\rvert\lvert V^{err}(\varsigma)-I\rvert}{\rvert\varsigma-k\lvert}d\varsigma,\nonumber\\
&\leq\Vert V^{err}-I\Vert_{L^2}\left\lVert \frac{1}{\varsigma-k}\right\rVert_{L^2}+\Vert V^{err}-I\Vert_{L^\infty}\Vert \eta-I\Vert_{L^2}\left\lVert \frac{1}{\varsigma-k}\right\rVert_{L^2}\nonumber\\
&\lesssim e^{- \min\{\rho_0,\delta_0\}t}.
\end{align}
Let $k=e^{\frac{\pi}{6}\mathrm{i}}$, we obtain the first estimate in \eqref{tE1t}. Noting that  $|s-e^{\frac{\pi}{6}\mathrm{i}}|^{-2}$ is bounded on $\Sigma^{(ci)}$, we obtain the second estimate in \eqref{tE1t}.
\end{proof}
\begin{proposition} \label{mout}
As $t \to \infty$, the relation can be expressed as
\begin{align}
&M^{r}(k)=M^{\Lambda}(k)\left[I+\mathcal{O}(e^{- \min\{\rho_0,\delta_0\}t})\right], \label{asymsol}\\
&u^{sol,N}(y,t)=u^{\Lambda}(y,t)\left[I+\mathcal{O}(e^{- \min\{\rho_0,\delta_0\}t})\right],\label{uuLam}
\end{align}
where $u^{sol,N}(y,t)$ and $u^{\Lambda}(y,t)$ are defined in \eqref{ressol} and \eqref{reulam}, respectively.
\end{proposition}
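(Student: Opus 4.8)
The plan is to read off \eqref{asymsol} directly from the small-norm analysis of $M^{err}$ already carried out, and then transfer it to the reconstruction formulas \eqref{ressol} and \eqref{reulam}. First I would use the very definition $M^{err}(k)=M^{r}(k)M^{\Lambda}(k)^{-1}$ to write $M^{r}(k)=M^{\Lambda}(k)+\big(M^{err}(k)-I\big)M^{\Lambda}(k)$. On a fixed neighbourhood of $e^{\frac{\pi}{6}\mathrm{i}}$ — which is disjoint from $\Sigma^{(ci)}=\cup_n\partial\mathbb{D}_n$ and from the poles $\{\zeta_n\}_{n-k_0N\in\Lambda}$ by the choice of $\varrho$ — the function $M^{\Lambda}(k)$ is analytic, and by Lemma \ref{uxlam} together with the explicit formula \eqref{mlamex1} it is uniformly bounded there, with determinant bounded away from zero (indeed $\det M^{\Lambda}\equiv 1$ because the residue matrices $T^{-1}B_nT$ are nilpotent), so that $M^{\Lambda}(k)^{-1}$ is uniformly bounded as well. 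Factoring $M^{r}(k)=M^{\Lambda}(k)\big[I+M^{\Lambda}(k)^{-1}(M^{err}(k)-I)M^{\Lambda}(k)\big]$ and inserting the bound $|M^{err}(k)-I|\lesssim e^{-\min\{\rho_0,\delta_0\}t}$ from Lemma \ref{tasyE} gives \eqref{asymsol}.

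Next I would feed this into the potential. In region I the phase points are absent, so $U=\varnothing$ and $M^{R}(k)=M^{r}(k)$ by the Remark following RH problem \ref{Mr}; hence the quantity $u^{sol,N}$ of \eqref{ressol} equals $\partial_t\log\frac{m_2^{r}}{m_1^{r}}(e^{\frac{\pi}{6}\mathrm{i}};\hat\xi,t)$ with $m_i^{r}:=\sum_{j=1}^3 M^{r}_{ij}$. Summing the rows of \eqref{asymsol} and using once more the boundedness of $M^{\Lambda}$ near $e^{\frac{\pi}{6}\mathrm{i}}$ yields $m_i^{r}(e^{\frac{\pi}{6}\mathrm{i}})=m_i^{\Lambda}(e^{\frac{\pi}{6}\mathrm{i}})\big(1+\mathcal{O}(e^{-\min\{\rho_0,\delta_0\}t})\big)$ for $i=1,2$, where I use that $m_1^{\Lambda}(e^{\frac{\pi}{6}\mathrm{i}})$ does not vanish — this is the nondegeneracy that makes $u^{\Lambda},x^{\Lambda}$ in \eqref{reulam} well defined. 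Consequently $\log\frac{m_2^{r}}{m_1^{r}}(e^{\frac{\pi}{6}\mathrm{i}})=\log\frac{m_2^{\Lambda}}{m_1^{\Lambda}}(e^{\frac{\pi}{6}\mathrm{i}})+\mathcal{O}(e^{-\min\{\rho_0,\delta_0\}t})$, and likewise for the $x$-equation.

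The hard part will be to differentiate this identity in $t$ while keeping the error exponentially small, since the reconstruction formula involves $\partial_t$. The $t$-dependence of $M^{err}$ enters only through the oscillatory factors $e^{\pm\mathrm{i}t\theta_{ij}}$ in $V^{(2)}$ restricted to $\Sigma^{(ci)}$, where $|\mathrm{Im}\,\theta_{ij}|$ is bounded below by a fixed positive constant (this is exactly what underlies \eqref{estV22}); hence each $\partial_t$ applied to $V^{err}(k)-I$, to the Beals--Coifman density $\eta(k)-I$, and to the integral representations \eqref{tEz} and \eqref{texpE}, costs only a polynomial factor in $t$ and leaves the bound $\mathcal{O}(e^{-\min\{\rho_0,\delta_0\}t})$ intact after an arbitrarily small reduction of the rate. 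This gives $\partial_t M^{err}(e^{\frac{\pi}{6}\mathrm{i}})=\mathcal{O}(e^{-\min\{\rho_0,\delta_0\}t})$, so that $u^{sol,N}(y,t)=u^{\Lambda}(y,t)+\mathcal{O}(e^{-\min\{\rho_0,\delta_0\}t})$; since $u^{\Lambda}$ is bounded and, for $\Lambda\neq\varnothing$, nonzero, this is \eqref{uuLam} (when $\Lambda=\varnothing$ the identity simply reads $u^{sol,N}(y,t)=\mathcal{O}(e^{-\min\{\rho_0,\delta_0\}t})$, with $u^{\Lambda}\equiv 0$). The only other point needing care — the uniform-in-$t$ invertibility and boundedness of $M^{\Lambda}$ near $e^{\frac{\pi}{6}\mathrm{i}}$ along rays of region I — is supplied by the solvability statement of Lemma \ref{uxlam}.
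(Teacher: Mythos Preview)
Your approach is correct and is exactly what the paper intends: the proposition is stated there without a proof of its own, as an immediate corollary of the small-norm analysis of $M^{err}(k)=M^{r}(k)M^{\Lambda}(k)^{-1}$ culminating in Lemma~\ref{tasyE}. Your factorization $M^{r}=M^{\Lambda}\bigl[I+(M^{\Lambda})^{-1}(M^{err}-I)M^{\Lambda}\bigr]$ together with the boundedness of $M^{\Lambda}$ and $(M^{\Lambda})^{-1}$ away from the poles is precisely the route the paper has in mind, and the passage to $u^{sol,N}$ via the row-sum and the reconstruction formula is the same as the paper's. The one place where you go beyond the paper is the $\partial_t$ step: the paper does not justify it here but later, in Section~\ref{8}, simply asserts that the asymptotic formulas retain their form under time differentiation and cites Theorem~5.1 of \cite{AJD} for the details; your sketch of why the exponential decay survives differentiation (bounded $|\mathrm{Im}\,\theta_{ij}|$ on $\Sigma^{(ci)}$, hence only polynomial losses) is a more explicit version of the same idea.
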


\section{Contribution from jump contours}\label{6}
\subsection{Local model near phase points}

We solve the local model near phase points. Denote some new contours
\begin{align}
\Sigma^{lo}=\mathop{\cup}\limits_{\ell=0,1,2}\Sigma^{\ell}, \quad
\Sigma^{\ell}=\left(\mathop{\cup}\limits_{j=1}^6\omega^{\ell}\Sigma_j\right)\cap U,
\end{align}
see Figure \ref{Sigmalo}. We consider the following RH problem:
\begin{RHP}\label{rhpMlo}
Find a matrix-valued function $M^{lo}(k):=M^{lo}(k;y,t)$ such that
\begin{itemize}
\item $M^{lo}(k)$ is analytic in $U\backslash\Sigma^{lo}$.
\item $M^{lo}(k)$ satisfies the jump relation
\begin{equation}
M^{lo}_+(k)=M^{lo}_-(k)V^{lo}(k),\quad k \in \Sigma^{lo},\label{jumlo}
\end{equation}
where $ V^{lo}(k)= V^{(2)}(k)\big|_{\Sigma^{lo}}$.
\item Asymptotic behaviors: $M^{lo}(k)M^{pc}(\zeta(k))^{-1}\rightarrow I, \quad k\in\partial U$.
\end{itemize}
\end{RHP}
    The RH problem \ref{rhpMlo} has jump relations but no poles, with   the Beals-Coifman theory, we can obtain  the   $M^{lo}(k)$ by the sum of  all  local model   $M^{lo}_{i,\ell}(k)$,  where $M^{lo}_{i,\ell}(k)$ is the local RH problem at phase point $\omega^{\ell} k_i$ with jump $V^{lo}_{i,\ell}(k)$, and its solution can be constructed with  parabolic cylinder equation. $V^{lo}_{i,\ell}(k)$ admits a factorization
\begin{equation}
   V^{lo}_{i,\ell}(k)=\left(I-w_{i-}^{\ell}\right)^{-1}\left(I+w_{i+}^{\ell}\right),
\end{equation}
\begin{equation}\label{wjk}
    w_{i-}^{\ell}=V^{lo}_{i\ell}(k)-I, \quad w_{i+}^{\ell}=0,
\end{equation}
and the superscript $\pm$ indicate the analyticity in the positive/negative neighborhood of the contour.
\begin{figure}
\begin{center}
	\subfigure[]{
\begin{tikzpicture}[scale=0.8]
\draw [thick]  (-4,-0.6) to [out=0,in=180] (-2,0.6)
to [out=0,in=180] (0,-0.6) to [out=0,in=180] (2,0.6)  to  [out=0,in=180] (4,-0.6);
\draw [thick,rotate=240] (-4,-0.6) to [out=0,in=180] (-2,0.6)
to [out=0,in=180] (0,-0.6) to [out=0,in=180] (2,0.6)  to  [out=0,in=180] (4,-0.6);
\draw [thick,rotate=120] (-4,-0.6) to [out=0,in=180] (-2,0.6)
to [out=0,in=180] (0,-0.6) to [out=0,in=180] (2,0.6)  to  [out=0,in=180] (4,-0.6);

\draw [thick] (-4,0.6) to [out=0,in=180] (-2,-0.6)
to [out=0,in=180] (0,0.6) to [out=0,in=180] (2,-0.6)  to [out=0,in=180] (4,0.6);
\draw [thick,rotate=240](-4,0.6) to [out=0,in=180] (-2,-0.6)
to [out=0,in=180] (0,0.6) to [out=0,in=180] (2,-0.6)  to [out=0,in=180] (4,0.6);
\draw [thick,rotate=120](-4,0.6) to [out=0,in=180] (-2,-0.6)
to [out=0,in=180] (0,0.6) to [out=0,in=180] (2,-0.6)  to [out=0,in=180] (4,0.6);

\coordinate (A) at (1.6,0);
\fill[red] (A) circle [radius=0.03] node[below] {\tiny$1$};
\coordinate (B) at (-1.6,0);
\fill[red] (B) circle [radius=0.03] node[below] {\tiny$-1$};

\draw [dotted] (0,0) circle [radius=1];
\draw [dotted] (0,0) circle [radius=3];

\filldraw [white] (0,0) circle [radius=0.8];
\filldraw[white,even odd rule] (0,0) circle (2.8)
                       (0,0) circle (1.3);
\filldraw[white,even odd rule] (0,0) circle (4.5)
                       (0,0) circle (3.2);

\draw[dotted](-5,0)--(5,0);
\draw[dotted,rotate=240](-5,0)--(5,0);
\draw[dotted,rotate=120](-5,0)--(5,0);

\filldraw  (1,0) circle [radius=0.05];
\node [below] at  (1,-0.1) {\tiny$k_2$};
\filldraw  (3,0) circle [radius=0.05];
\node [below] at  (3,-0.1) {\tiny$k_1$};
\filldraw  (-1,0) circle [radius=0.05];
\node [below] at  (-1,-0.1) {\tiny$k_3$};
\filldraw  (-3,0) circle [radius=0.05];
\node [below] at  (-3,-0.1) {\tiny$k_4$};

\coordinate (I) at (0,0);
\fill[red] (I) circle [radius=0.03] node[below] {\tiny$O$};
\coordinate (A) at (1.6,0);
\fill[red] (A) circle [radius=0.03] node[below] {\tiny$1$};
\coordinate (B) at (-1.6,0);
\fill[red] (B) circle [radius=0.03] node[below] {\tiny$-1$};

\filldraw  (-0.5,0.87) circle [radius=0.05];
\node [above] at   (-0.85,0.9) {\tiny$\omega k_2$};
\filldraw  (-1.5,2.6) circle [radius=0.05];
\node [above] at  (-1.75,2.7) {\tiny$\omega k_1$};
\filldraw  (0.5,-0.87) circle [radius=0.05];
\node [below] at   (0.9,-0.9) {\tiny$\omega k_3$};
\filldraw  (1.5,-2.6) circle [radius=0.05];
\node [below] at  (1.85,-2.7) {\tiny$\omega k_4$};

\filldraw  (-0.5,-0.87) circle [radius=0.05];
\node [below] at   (-0.85,-0.9) {\tiny$\omega^2 k_2$};
\filldraw  (-1.5,-2.6) circle [radius=0.05];
\node [below] at  (-1.75,-2.7) {\tiny$\omega^2 k_1$};
\filldraw  (0.5,0.87) circle [radius=0.05];
\node [above] at   (0.9,0.9) {\tiny$\omega^2 k_3$};
\filldraw  (1.5,2.6) circle [radius=0.05];
\node [above] at  (1.85,2.7) {\tiny$\omega^2 k_4$};
		
\end{tikzpicture}
		\label{case1}}
	\subfigure[]{
\begin{tikzpicture}[scale=0.8]
\draw [thick] (-6,-0.6)to [out=0,in=180](-4.5,0.6)to [out=0,in=180](-3,-0.6) to [out=0,in=180] (-1.5,0.6)
to [out=0,in=180] (0,-0.6) to [out=0,in=180] (1.5,0.6)  to  [out=0,in=180] (3,-0.6) to [out=0,in=180] (4.5,0.6) to
[out=0,in=180] (6,-0.6);
\draw [thick](-6,0.6)to [out=0,in=180](-4.5,-0.6)to [out=0,in=180](-3,0.6) to [out=0,in=180] (-1.5,-0.6)
to [out=0,in=180] (0,0.6) to [out=0,in=180] (1.5,-0.6)  to [out=0,in=180] (3,0.6) to [out=0,in=180] (4.5,-0.6) to  [out=0,in=180] (6,0.6);

\draw [thick,rotate=240] (-6,-0.6)to [out=0,in=180](-4.5,0.6)to [out=0,in=180](-3,-0.6) to [out=0,in=180] (-1.5,0.6)
to [out=0,in=180] (0,-0.6) to [out=0,in=180] (1.5,0.6)  to  [out=0,in=180] (3,-0.6) to [out=0,in=180] (4.5,0.6) to
[out=0,in=180] (6,-0.6);
\draw [thick,rotate=240](-6,0.6)to [out=0,in=180](-4.5,-0.6)to [out=0,in=180](-3,0.6) to [out=0,in=180] (-1.5,-0.6)
to [out=0,in=180] (0,0.6) to [out=0,in=180] (1.5,-0.6)  to [out=0,in=180] (3,0.6) to [out=0,in=180] (4.5,-0.6) to  [out=0,in=180] (6,0.6);

\draw [thick,rotate=120] (-6,-0.6)to [out=0,in=180](-4.5,0.6)to [out=0,in=180](-3,-0.6) to [out=0,in=180] (-1.5,0.6)
to [out=0,in=180] (0,-0.6) to [out=0,in=180] (1.5,0.6)  to  [out=0,in=180] (3,-0.6) to [out=0,in=180] (4.5,0.6) to
[out=0,in=180] (6,-0.6);
\draw [thick,rotate=120](-6,0.6)to [out=0,in=180](-4.5,-0.6)to [out=0,in=180](-3,0.6) to [out=0,in=180] (-1.5,-0.6)
to [out=0,in=180] (0,0.6) to [out=0,in=180] (1.5,-0.6)  to [out=0,in=180] (3,0.6) to [out=0,in=180] (4.5,-0.6) to  [out=0,in=180] (6,0.6);

\filldraw [white] (0,0) circle [radius=0.64];
\filldraw[white,even odd rule] (0,0) circle (2.05)
                        (0,0) circle (0.95);
\filldraw[white,even odd rule] (0,0) circle (3.55)
                        (0,0) circle (2.45);
\filldraw[white,even odd rule] (0,0) circle (5.05)
                        (0,0) circle (3.95);
\filldraw[white,even odd rule] (0,0) circle (6.2)
                        (0,0) circle (5.45);

\draw[dotted](-6.5,0)--(6.8,0);
\draw[dotted,rotate=240](-6.5,0)--(6.8,0);
\draw[dotted,rotate=120](-6.5,0)--(6.8,0);

\draw [dotted] (0,0) circle [radius=0.75];
\draw [dotted] (0,0) circle [radius=2.25];
\draw [dotted] (0,0) circle [radius=3.75];
\draw [dotted] (0,0) circle [radius=5.25];

\coordinate (I) at (0,0);
		\fill[red] (I) circle (1pt) node[below] {\tiny$O$};
\filldraw [red] (2.7,0) circle [radius=0.03];
\node [below,red] at  (2.8,0) {\tiny$ 1$};
\filldraw [red] (-2.7,0) circle [radius=0.03];
\node [below,red] at  (-2.8,0) {\tiny$ -1$};

\filldraw  (3.75,0) circle [radius=0.05];
\node [below] at  (3.75,-0.1) {\tiny$ k_2$};
\filldraw  (5.25,0) circle [radius=0.05];
\node [below] at   (5.25,-0.1) {\tiny$ k_1$};
\filldraw  (2.25,0) circle [radius=0.05];
\node [below] at   (2.25,-0.1) {\tiny$ k_3$};
\filldraw  (0.75,0) circle [radius=0.05];
\node [below] at   (0.75,-0.1) {\tiny$ k_4$};
\filldraw  (-3.75,0) circle [radius=0.05];
\node [below] at   (-3.75,-0.1) {\tiny$ k_7$};
\filldraw  (-5.25,0) circle [radius=0.05];
\node [below] at   (-5.25,-0.1) {\tiny$ k_8$};
\filldraw  (-2.25,0) circle [radius=0.05];
\node [below] at   (-2.25,-0.1) {\tiny$ k_6$};
\filldraw  (-0.75,0) circle [radius=0.05];
\node [below] at   (-0.75,-0.1) {\tiny$k_5$};

\filldraw  (-1.88,3.25) circle [radius=0.05];
\node [above] at    (-2.05,3.25) {\tiny$\omega k_2$};
\filldraw  (-2.63,4.55) circle [radius=0.05];
\node [above] at   (-2.8,4.55) {\tiny$\omega k_1$};
\filldraw  (-1.13,1.95) circle [radius=0.05];
\node [above] at    (-1.35,1.95) {\tiny$\omega k_3$};
\filldraw  (-0.38,0.65) circle [radius=0.05];
\node [above] at  (-0.6,0.65) {\tiny$\omega k_4$};
\filldraw  (1.88,-3.25) circle [radius=0.05];
\node [below] at    (2.15,-3.25) {\tiny$\omega k_7$};
\filldraw  (2.63,-4.55) circle [radius=0.05];
\node [below] at   (2.8,-4.55) {\tiny$\omega k_8$};
\filldraw  (1.13,-1.95) circle [radius=0.05];
\node [below] at    (1.35,-1.95) {\tiny$\omega k_6$};
\filldraw  (0.38,-0.65) circle [radius=0.05];
\node [below] at  (0.6,-0.65) {\tiny$\omega k_5$};

\filldraw  (-1.88,-3.25) circle [radius=0.05];
\node [below] at    (-2.1,-3.25) {\tiny$\omega^2 k_2$};
\filldraw  (-2.63,-4.55) circle [radius=0.05];
\node [below] at   (-2.8,-4.55) {\tiny$\omega^2 k_1$};
\filldraw  (-1.13,-1.95) circle [radius=0.05];
\node [below] at    (-1.35,-1.95) {\tiny$\omega^2 k_3$};
\filldraw  (-0.38,-0.65) circle [radius=0.05];
\node [below] at  (-0.6,-0.65) {\tiny$\omega^2 k_4$};
\filldraw  (1.88,3.25) circle [radius=0.05];
\node [above] at    (2.2,3.25) {\tiny$\omega^2 k_7$};
\filldraw  (2.63,4.55) circle [radius=0.05];
\node [above] at   (2.9,4.55) {\tiny$\omega^2 k_8$};
\filldraw  (1.13,1.95) circle [radius=0.05];
\node [above] at    (1.45,1.95) {\tiny$\omega^2 k_6$};
\filldraw  (0.38,0.65) circle [radius=0.05];
\node [above] at  (0.7,0.65) {\tiny$\omega^2 k_5$};
		\end{tikzpicture}
		\label{case2}}
\caption{\footnotesize Figure (a) is the local jump contour $\Sigma^{lo}$ consisting of  12  crosses  for the case
   $\hat{\xi}\in\left[0,3\right)$; Figure (b) is the jump contour $\Sigma^{lo}$ consisting of 24 crosses for the case $\hat{\xi}\in(-\frac{3}{8},0)$.}
	\label{Sigmalo}
\end{center}
\end{figure}
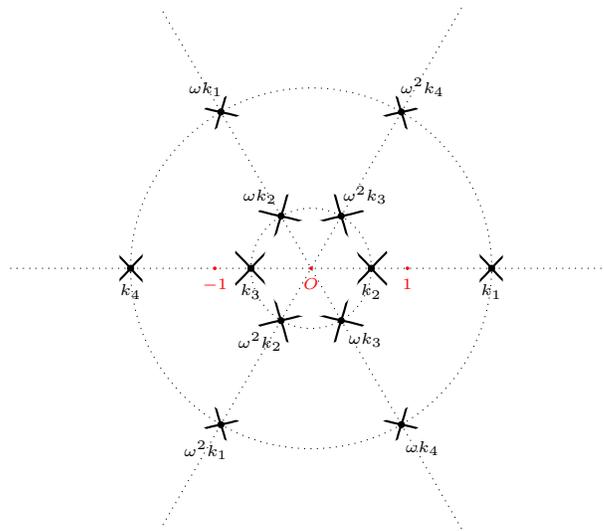
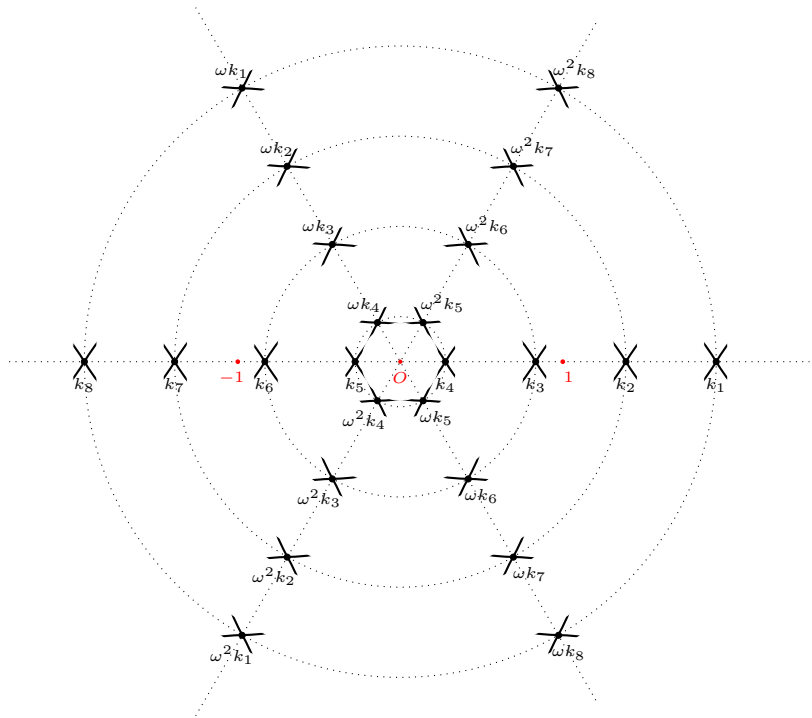
Recall the Cauchy projection operator, we can define the following Beals-Coifman operator
\begin{equation}
    \mathcal{C}_{w_i^{\ell}}(f):=\mathcal{C}_{+}(fw_{i-}^{\ell})+\mathcal{C}_{-}(fw_{i+}^{\ell}).
\end{equation}
Let
\begin{equation}
    w=\mathop{\sum}_{\ell=0,1,2}\Big(\mathop{\sum}\limits_{j=1}^{p(\hat{\xi})}w_i^{\ell}\Big), \quad  \mathcal{C}_w=\mathop{\sum}\limits_{\ell=0,1,2}\Big(\mathop{\sum}\limits_{j=1}^{p(\hat{\xi})}\mathcal{C}_{w_i^{\ell}}\Big).
\end{equation}
A simple calculation gives the following lemma.
\begin{lemma}\label{wjkest}
The matrix functions $w $ and $w_i^{\ell}$ defined above admit
    \begin{equation}
    \Vert w \Vert_{L^{2}(\Sigma^{lo})}=\mathcal{O}(t^{-\frac{1}{2}}), \quad \Vert w_i^{\ell} \Vert =\mathcal{O}(t^{-\frac{1}{2}}).
    \end{equation}
\end{lemma}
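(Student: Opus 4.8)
The plan is to break $\Sigma^{lo}$ into the finitely many small crosses centred at the phase points $\omega^{\ell}k_i$ and, on each leg of each cross, to bound $w$ pointwise by a decaying Gaussian, thereby reducing the statement to elementary one‑variable integral estimates. Since $\Sigma^{lo}=\bigcup_{\ell=0,1,2}\bigcup_{i=1}^{p(\hat\xi)}\big(\omega^{\ell}\Sigma_j\cap U\big)$ is a disjoint union, $\|w\|_{L^2(\Sigma^{lo})}^2=\sum_{\ell,i}\|w_i^{\ell}\|_{L^2}^2$, so it is enough to control a single $w_i^{\ell}$; using $\theta_{13}(k)=-\theta_{12}(\omega^2k)$, $\theta_{23}(k)=\theta_{12}(\omega k)$ and the induced symmetries of $T(k)$ and $\mathcal{R}^{(2)}(k)$ I may take $\ell=0$. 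By \eqref{transm22}--\eqref{V222} the jump $w_{i-}^{0}=V^{lo}_{i0}(k)-I$ on the cross through $k_i$ is assembled from the entries $R_1(k)e^{\mathrm{i}t\theta_{12}(k)}$, $R_2(k)e^{-\mathrm{i}t\theta_{12}(k)}$ (and their $\omega$-rotations), and on each of the four legs $\Sigma_j\cap U$ I parametrise $k=k_i+u\,e^{\mathrm{i}\varphi_j}$, $0\le u\le\varrho^{0}$, with $\varphi_j$ the fixed opening angle of $\Sigma_j$.

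Two pointwise inputs then close the argument. First, estimate \eqref{efR} of Lemma~\ref{estRij} gives $|R_j(k)|\lesssim1$ uniformly on $\bar\Omega_j$, and moreover near $k_i$ the part of $R_j$ not reproduced exactly by the local parabolic‑cylinder model $M^{pc}$ is $\mathcal{O}(|k-k_i|^{1/2})$, by the Hölder‑$1/2$ continuity of $\rho$ inherited from $r\in H^1(\mathbb{R})$. Second, the signature table of $\mathrm{Im}\,\theta_{12}$ (Figure~\ref{figtheta}), together with the fact that the lens $\mathcal{R}^{(2)}$ was opened so that the exponential attached to $w_i^{0}$ on a given leg is the decaying one, gives $\big|e^{\pm\mathrm{i}t\theta_{12}(k)}\big|=e^{-t\lvert\mathrm{Im}\,\theta_{12}(k)\rvert}$ with $\lvert\mathrm{Im}\,\theta_{12}(k_i+ue^{\mathrm{i}\varphi_j})\rvert\gtrsim u^2$ for $u$ small and $\gtrsim u$ for $u$ bounded away from $0$ inside $U$, the constants being uniform over the relevant $\hat\xi$-interval. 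Hence $|w_i^{0}(k)|\lesssim |k-k_i|^{1/2}e^{-ctu^2}$ along the legs (and $\lesssim e^{-ctu^2}$ in general).

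Putting these together, the change of variable $v=\sqrt{t}\,u$ gives $\|w_i^{0}\|_{L^2}^2\lesssim\int_0^{\varrho^{0}}u\,e^{-2ctu^2}\,du\lesssim t^{-1}$, i.e.\ $\|w_i^{0}\|_{L^2}=\mathcal{O}(t^{-1/2})$ (and likewise $\|w_i^{0}\|_{L^\infty}=\mathcal{O}(1)$, $\|w_i^{0}\|_{L^1}=\mathcal{O}(t^{-1/2})$); summing the $3p(\hat\xi)$ contributions yields $\|w\|_{L^2(\Sigma^{lo})}=\mathcal{O}(t^{-1/2})$, and the case $\hat\xi\in(-\infty,-\frac{3}{8})\cup(3,+\infty)$ is vacuous since there $p(\hat\xi)=0$ and $\Sigma^{lo}=\varnothing$. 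The step I expect to be the main obstacle is securing input (b) \emph{with the correct sign} on each leg of each cross: this requires a region‑by‑region analysis of $\mathrm{Im}\,\theta_{12}$ near its $p(\hat\xi)$ stationary points that produces both the quadratic lower bound near $k_i$ and the linear one further out, uniformly for $\hat\xi\in(-\frac{3}{8},0)$ and $\hat\xi\in[0,3)$; once that is in hand the remaining bookkeeping over the three rotations $\ell$ and the $p(\hat\xi)$ phase points is routine.
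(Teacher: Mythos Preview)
The paper does not actually prove this lemma---it simply asserts that ``a simple calculation'' yields it---so there is no argument of the paper's to compare against. Your strategy (decompose $\Sigma^{lo}$ into the finitely many crosses, reduce by the $\omega$-symmetry to $\ell=0$, parametrise each leg, and use the quadratic lower bound on $|\mathrm{Im}\,\theta_{12}|$ near the stationary point to get Gaussian decay of the oscillatory factor) is the standard steepest-descent one and is correct in outline.

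The gap is in your pointwise bound for $w_i^{0}$. You invoke an $\mathcal{O}(|k-k_i|^{1/2})$ vanishing, justified by the H\"older-$\tfrac12$ continuity of $\rho$, but that controls the \emph{difference} between $R_j$ and its value frozen at $k_i$---a bound relevant when comparing $V^{lo}$ to the parabolic-cylinder jump $V^{pc}$, not here. The quantity $w_i^{\ell}=V^{lo}_{i,\ell}-I$ has entries $R_j(k)e^{\pm\mathrm{i}t\theta_{12}(k)}$, and on $\Sigma_j\cap U_{k_i}$ the boundary values in Lemma~\ref{estRij} read $R_j(k)=\rho(k_i)\bigl(T^{(i)}_{12}\bigr)_{\pm}(k_i)\bigl(\eta(k-k_i)\bigr)^{2\mathrm{i}\eta\nu(k_i)}(1-\mathcal{X}_{\mathcal{K}}(k))$, whose modulus is bounded \emph{away from zero} at $k=k_i$ whenever $r(k_i)\neq0$ (the factor $(k-k_i)^{2\mathrm{i}\eta\nu}$ has bounded modulus since the exponent is purely imaginary). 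Hence only $|w_i^{0}(k)|\lesssim e^{-ctu^{2}}$ is available, and the honest computation gives
\[
\|w_i^{0}\|_{L^{2}}^{2}\ \lesssim\ \int_{0}^{\varrho^{0}}e^{-2ctu^{2}}\,du\ \lesssim\ t^{-1/2},\qquad\text{i.e.}\qquad \|w_i^{0}\|_{L^{2}}=\mathcal{O}(t^{-1/4}),
\]
not $\mathcal{O}(t^{-1/2})$. This $t^{-1/4}$ rate is the classical Deift--Zhou rate for the $L^{2}$ norm on a cross at a nondegenerate stationary point; the rate $t^{-1/2}$ is what one obtains for $\|w\|_{L^{1}}$. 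Either the paper's stated exponent is a slip or the unlabelled norm on $w_i^{\ell}$ is intended to be $L^{1}$; in any case you cannot manufacture an unearned $|k-k_i|^{1/2}$ factor to reach $t^{-1/2}$ in $L^{2}$, and the $t^{-1/4}$ bound is already sufficient for the uses made of the lemma (smallness of $\mathcal{C}_w$ and the decoupling estimate in Lemma~\ref{cwjcwk}).
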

This lemma suggests the existence of $(1- \mathcal{C}_{w})^{-1}$ and  $(1- \mathcal{C}_{w_i^{\ell}})^{-1}$, thus indicating that the RH problem \ref{rhpMlo} exists an unique solution
\begin{equation}
   M^{lo}(k)=I+\frac{1}{2\pi \mathrm{i}}\int_{\Sigma^{lo}}\frac{(1- \mathcal{C}_w)^{-1}Iw}{s-k}ds.
\end{equation}
Following the established methodology outlined by Deift and Zhou \cite{RN6}, we proceed to derive the subsequent lemma.
\begin{lemma}\label{cwjcwk}
As $t\rightarrow +\infty$, for $i\neq j$, $\ell\neq m$, it holds that
\begin{equation}
\Vert \mathcal{C}_{w_i^{\ell}}\mathcal{C}_{w_j^{\ell}} \Vert_{L^{2}(\Sigma^{l})}=\mathcal{O}(t^{-1}), \quad \Vert \mathcal{C}_{w_i^{\ell}}\mathcal{C}_{w_i^m} \Vert_{ L^{2}(\Sigma^{l})}=\mathcal{O}(t^{-1}).
\end{equation}
\begin{align}
	\int_{\Sigma^{lo} }\frac{(I-\mathcal{C}_w)^{-1}I\ w}{\varsigma-k}d\varsigma
=\sum_{\ell=0,1,2}\left(\mathop{\sum}\limits_{j=1}^{p(\hat{\xi})}\int\frac{(I-\mathcal{C}_{w_i^{\ell}})^{-1}I\ w_i^{\ell}}{\varsigma-k}d\varsigma\right)+\mathcal{O}(t^{-\frac{3}{2}}).
	\end{align}
\end{lemma}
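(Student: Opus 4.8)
The plan is to reduce everything to one structural fact: the local jump matrices $w_i^{\ell}=V^{lo}_{i\ell}(k)-I$ are supported on the finitely many crosses $\omega^{\ell}\Sigma_j\cap U$ centred at the \emph{distinct} points $\omega^{\ell}k_i$, and by the choice of $\varrho^{0}$ these crosses are pairwise disjoint with mutual distance bounded below by some $d=d(\hat\xi)>0$. Write $n=3p(\hat\xi)$ for the number of phase points and, for an index $\alpha=(i,\ell)$, abbreviate the cross by $\Sigma_\alpha$, the matrix by $w_\alpha$, and the operator by $\mathcal{C}_\alpha=\mathcal{C}_{w_\alpha}$; since $w_{\alpha+}=0$ one has $\mathcal{C}_\alpha(f)=\mathcal{C}_+(fw_{\alpha-})$, and $\mathbb{1}\in L^2(\Sigma^{lo})$ because $\Sigma^{lo}$ has finite length.

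\textbf{The key estimate (part 1).} First we would prove the off-diagonal bound. For $\alpha\neq\beta$ expand $\mathcal{C}_\alpha\mathcal{C}_\beta(f)=\mathcal{C}_+\big((\mathcal{C}_+(fw_{\beta-}))\,w_{\alpha-}\big)$. Because $w_{\beta-}$ lives on $\Sigma_\beta$ while the outer factor $w_{\alpha-}$ lives on the disjoint set $\Sigma_\alpha$, the inner Cauchy transform restricted to $\operatorname{supp}w_{\alpha-}$ involves only the kernel $(\varsigma-s)^{-1}$ with $|\varsigma-s|\ge d$; hence $\|\mathcal{C}_+(fw_{\beta-})\|_{L^\infty(\Sigma_\alpha)}\lesssim\|f\|_{L^2}\|w_{\beta-}\|_{L^2}$. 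Multiplying by $w_{\alpha-}$ and using the $L^2$-boundedness of $\mathcal{C}_+$ gives $\|\mathcal{C}_\alpha\mathcal{C}_\beta\|_{L^2\to L^2}\lesssim\|w_{\alpha-}\|_{L^2}\|w_{\beta-}\|_{L^2}=\mathcal{O}(t^{-1})$ by Lemma~\ref{wjkest}. The two displayed assertions are the special cases $\beta=(j,\ell)$, $j\neq i$, and $\beta=(i,m)$, $m\neq\ell$.

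\textbf{The decomposition (part 2).} Each model $M^{lo}_{i,\ell}$ is explicitly solvable via parabolic cylinder functions, so $(I-\mathcal{C}_\alpha)^{-1}$ exists and is bounded uniformly in $t$. We would build the parametrix $\mathcal{R}:=I+\sum_\alpha\big((I-\mathcal{C}_\alpha)^{-1}-I\big)$; a direct computation shows the ``diagonal'' pieces telescope and $(I-\mathcal{C}_w)\mathcal{R}=I-\mathcal{E}$ with $\mathcal{E}=\sum_{\alpha\neq\beta}\mathcal{C}_\beta\mathcal{C}_\alpha(I-\mathcal{C}_\alpha)^{-1}$, so $\|\mathcal{E}\|_{L^2\to L^2}=\mathcal{O}(t^{-1})$ by the key estimate and $(I-\mathcal{C}_w)^{-1}=\mathcal{R}(I-\mathcal{E})^{-1}$ exists and is $\mathcal{O}(1)$ for large $t$. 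Setting $\mu:=(I-\mathcal{C}_w)^{-1}I$, $\mu_\alpha:=(I-\mathcal{C}_\alpha)^{-1}I$ and $\epsilon:=\mu-\sum_\alpha\mu_\alpha+(n-1)I$, the relations $\mu=I+\sum_\alpha\mathcal{C}_\alpha\mu$ and $\mu_\alpha=I+\mathcal{C}_\alpha\mu_\alpha$ yield $(I-\mathcal{C}_w)\epsilon=\sum_{\alpha\neq\beta}\mathcal{C}_\alpha\mathcal{C}_\beta\mu_\beta$, hence $\|\epsilon\|_{L^2}=\mathcal{O}(t^{-1})$. Substituting $\mu=\sum_\alpha\mu_\alpha-(n-1)I+\epsilon$ into $\int_{\Sigma^{lo}}\frac{\mu w}{\varsigma-k}d\varsigma=\sum_\alpha\int_{\Sigma_\alpha}\frac{\mu w_\alpha}{\varsigma-k}d\varsigma$ (disjoint supports), the $\epsilon$-term is $\lesssim\|\epsilon\|_{L^2}\|w\|_{L^2}=\mathcal{O}(t^{-3/2})$, the diagonal part reassembles to $\sum_\alpha\int_{\Sigma_\alpha}\frac{\mu_\alpha w_\alpha}{\varsigma-k}d\varsigma$, and the remaining cross terms $\int_{\Sigma_\alpha}\frac{(\mu_\beta-I)w_\alpha}{\varsigma-k}d\varsigma$, $\alpha\neq\beta$, are $\mathcal{O}(t^{-3/2})$ since $\mu_\beta-I=\mathcal{C}_+(\mu_\beta w_{\beta-})$ is smooth and small on the disjoint $\Sigma_\alpha$ by Lemma~\ref{wjkest}; all kernels are bounded because the point of evaluation $k=e^{\pi\mathrm{i}/6}$ lies off $\Sigma^{lo}$. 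This is the stated identity.

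The main obstacle is the key estimate: isolating the interaction between two distinct local parametrices and showing it is $\mathcal{O}(t^{-1})$, which forces one to exploit the geometric separation of the crosses together with the $L^2$-smallness of $w$ from Lemma~\ref{wjkest} rather than any crude operator-norm bound. Once that is in hand, the parametrix construction and the integral decomposition are routine resolvent algebra, carried out exactly along the lines of Deift--Zhou \cite{RN6}.
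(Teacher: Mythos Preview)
Your proposal is correct and takes essentially the same approach as the paper: the paper does not give its own proof but simply invokes ``the established methodology outlined by Deift and Zhou \cite{RN6}'', and what you have sketched is precisely that methodology---exploit the pairwise geometric separation of the local crosses to convert the inner Cauchy transform into a bounded kernel, pick up two factors of $\|w_\alpha\|_{L^2}$ from Lemma~\ref{wjkest} for the off-diagonal products, and then run the standard resolvent-parametrix algebra to decouple the integral. Your identification of the key obstacle (the $\mathcal{O}(t^{-1})$ interaction estimate) and the routine nature of the remainder is exactly in line with how the paper treats this step.
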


This Lemma infers that $M^{lo}(k)$ can be expressed as the summation of the separate contributions from $M^{lo}_{i,\ell}(k)$. As an illustrative example, we shall focus on the local model corresponding to phase point $k_1$.
\begin{RHP}\label{rhpMlo1}
Find a matrix-valued function $M^{lo}_{1,0}(k):=M^{lo}_{1,0}(k;y,t)$ such that
\begin{itemize}
\item $M^{lo}_{1,0}(k)$ is analytic in $U_{k_1}\backslash\big((\Sigma_1\cup\Sigma_2)\cap U_{k_1}\big)$.
\item $M^{lo}_{1,0}(k)$ satisfies the jump relation
\begin{equation}
\big(M^{lo}_{1,0}\big)_+(k)=\big(M^{lo}_{1,0}\big)_-(k)V^{lo}_{1,0}(k),\quad k \in (\Sigma_1\cup\Sigma_2)\cap U_{k_1},\label{jumlo1}
\end{equation}
where $ V^{lo}_{1,0}(k)= V^{(2)}(k)\big|_{(\Sigma_1\cup\Sigma_2)\cap U_{k_1}}$.
\item Asymptotic behaviors: $M^{lo}_{1,0}(k)M^{pc}_{1,0}(\zeta)^{-1}\rightarrow I, \quad k\in\partial U_{k_1}$.
\end{itemize}
\end{RHP}
In the above RH problem \ref{rhpMlo1}, $\zeta:=\zeta(k)$  denote the rescaled local variable
\begin{align}\label{resva}
\zeta(k)=t^{\frac{1}{2}}\sqrt{-4\eta(\hat{\xi},1)\theta''_{12}(k_1)}(k-k_1).
\end{align}
This change of variable maps $U_{k_1}$ to an neighborhood of $\zeta=0$. Additionally, let
\begin{align}
r_{k_1}=r(k_1)T_{12}^{(1)}(k_1)e^{-2\mathrm{i}t\theta(k_1)}\zeta^{-2\mathrm{i}\eta\nu(k_1)}\exp\left\lbrace -\mathrm{i}\eta\nu(k_1)\log \left( 4t\theta''_{12}(k_1)\tilde{\eta}(k_1)\right) \right\rbrace ,
\end{align}
with $\tilde{\eta}(\hat{\xi}, 1)=-1$ for $\hat{\xi}\in(-\frac{3}{8},0)$ and $\tilde{\eta}(\hat{\xi}, 1)=1$ for $ \hat{\xi}\in\left[0,3\right)$.

By transformation \eqref{resva}, the jump matrix $V^{lo}_{1,0}(k)$ approximates to the jump matrix of a parabolic cylinder model problem as follows.
Moreover, $M^{pc}_{1,0}\big(\zeta(k)\big)$ satisfies the following RH problem.
\begin{RHP}\label{rhpMpc1}
Find a matrix-valued function $M^{pc}_{1,0}(\zeta):=M^{pc}_{1,0}(\zeta;\hat{\xi},t)$ such that
\begin{itemize}
\item $M^{pc}_{1,0}(\zeta)$ is analytic in $U\backslash\Sigma_{1,0}^{pc}$ with $\Sigma_{1,0}^{pc}=\mathop{\cup}\limits_{j=1}^4\left(e^{\frac{(2j-1)\pi\mathrm{i}}{4}}\mathbb{R}_+\right)$.
\item $M^{pc}_{1,0}(\zeta)$ satisfies the jump relation
\begin{equation}
\big(M^{pc}_{1,0}\big)_+(\zeta)=\big(M^{pc}_{1,0}\big)_-(\zeta)V^{pc}_{1,0}(\zeta),\quad k \in \Sigma_{1,0}^{pc},\label{jumpc1}
\end{equation}
for $\hat{\xi}\in(-\frac{3}{8},0)$,
\begin{align}
V^{pc}_{1,0}(\zeta)=\left\{
\begin{aligned}
    &\begin{pmatrix} 1 & 0 & 0 \\ \frac{r_{k_1}}{1-|r_{k_1}|^2}\zeta^{2\mathrm{i}\nu(k_1)}e^{-\frac{\mathrm{i}}{2}\zeta^2} & 1 & 0 \\ 0 & 0 & 1 \end{pmatrix}, \quad \zeta\in e^{\frac{\pi\mathrm{i}}{4}}\mathbb{R}_+,\\
   &\begin{pmatrix} 1 & \bar{r}_{k_1}\zeta^{-2\mathrm{i}\nu(k_1)}e^{\frac{\mathrm{i}}{2}\zeta^2} & 0 \\ 0 & 1 & 0 \\ 0 & 0 & 1 \end{pmatrix}, \quad \zeta\in e^{\frac{3\pi\mathrm{i}}{4}}\mathbb{R}_+,\\
   &\begin{pmatrix} 1 & 0 & 0 \\ r_{k_1}\zeta^{2\mathrm{i}\nu(k_1)}e^{-\frac{\mathrm{i}}{2}\zeta^2} & 1 & 0 \\ 0 & 0 & 1 \end{pmatrix}, \quad \zeta\in e^{\frac{5\pi\mathrm{i}}{4}}\mathbb{R}_+,\\
   &\begin{pmatrix} 1 & \frac{\bar{r}_{k_1}}{1-|r_{k_1}|^2}\zeta^{-2\mathrm{i}\nu(k_1)}e^{\frac{\mathrm{i}}{2}\zeta^2} & 0 \\ 0 & 1 & 0 \\ 0 & 0 & 1 \end{pmatrix}, \quad \zeta\in e^{\frac{7\pi\mathrm{i}}{4}}\mathbb{R}_+,
\end{aligned}
\right.
\end{align}
and for $\hat{\xi}\in\left[0,3\right)$,
\begin{align}
V^{pc}_{1,0}(\zeta)=\left\{
\begin{aligned}
   &\begin{pmatrix} 1 & \bar{r}_{k_1}\zeta^{-2\mathrm{i}\nu(k_1)}e^{\frac{\mathrm{i}}{2}\zeta^2} & 0 \\ 0 & 1 & 0 \\ 0 & 0 & 1 \end{pmatrix}, \quad \zeta\in e^{\frac{\pi\mathrm{i}}{4}}\mathbb{R}_+,\\
   &\begin{pmatrix} 1 & 0 & 0 \\ \frac{r_{k_1}}{1-|r_{k_1}|^2}\zeta^{2\mathrm{i}\nu(k_1)}e^{-\frac{\mathrm{i}}{2}\zeta^2} & 1 & 0 \\ 0 & 0 & 1 \end{pmatrix}, \quad \zeta\in e^{\frac{3\pi\mathrm{i}}{4}}\mathbb{R}_+,\\
   &\begin{pmatrix} 1 & \frac{\bar{r}_{k_1}}{1-|r_{k_1}|^2}\zeta^{-2\mathrm{i}\nu(k_1)}e^{\frac{\mathrm{i}}{2}\zeta^2} & 0 \\ 0 & 1 & 0 \\ 0 & 0 & 1 \end{pmatrix}, \quad \zeta\in e^{\frac{5\pi\mathrm{i}}{4}}\mathbb{R}_+,\\
   &\begin{pmatrix} 1 & 0 & 0 \\ r_{k_1}\zeta^{2\mathrm{i}\nu(k_1)}e^{-\frac{\mathrm{i}}{2}\zeta^2} & 1 & 0 \\ 0 & 0 & 1 \end{pmatrix}, \quad \zeta\in e^{\frac{7\pi\mathrm{i}}{4}}\mathbb{R}_+,
\end{aligned}
\right.
\end{align}
\item Asymptotic behaviors: $M^{pc}_{1,0}(\zeta)=I+(M^{pc}_{1,0})^{(1)}\zeta^{-1}+\mathcal{O}(\zeta^{-2}), \quad \zeta\rightarrow\infty$.
\end{itemize}
\end{RHP}

The RH problem \ref{rhpMpc1} has an explicit solution $M^{pc}_{1,0}(\zeta)$, which can be expressed in terms of solutions
of the parabolic cylinder equation \cite{YF2}. Further, through the application of the variable transformation \eqref{resva}, we are able to derive the subsequent outcomes.
\begin{lemma}
As $t\rightarrow \infty$, we have the following relation
\begin{equation}
M^{lo}_{1,0}(k)=M^{pc,}_{1,0}(\zeta)+\mathcal{O}(t^{-1}).
\end{equation}
\end{lemma}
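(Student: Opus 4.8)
The plan is to exhibit $M^{lo}_{1,0}$ as a small, explicitly bounded perturbation of the parabolic--cylinder parametrix $M^{pc}_{1,0}(\zeta(k))$ and to control that perturbation by a small--norm Riemann--Hilbert argument, exactly in the spirit of the treatment of $M^{err}$ in Section~\ref{merr}. First I would set $E(k):=M^{lo}_{1,0}(k)\,M^{pc}_{1,0}(\zeta(k))^{-1}$ on $U_{k_1}$. By the matching condition of RH problem~\ref{rhpMlo1}, $E$ is analytic in $U_{k_1}$ off the local cross $(\Sigma_1\cup\Sigma_2)\cap U_{k_1}$ (which the rescaling $\zeta=\zeta(k)$ carries onto the four rays of $\Sigma^{pc}_{1,0}$), it tends to $I$ at $\partial U_{k_1}$, and it jumps on the cross by $V_E(k)=M^{pc}_{1,0}(\zeta(k))_-\bigl(V^{lo}_{1,0}(k)V^{pc}_{1,0}(\zeta(k))^{-1}\bigr)M^{pc}_{1,0}(\zeta(k))_-^{-1}$. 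Since $M^{pc}_{1,0}$ and its inverse are uniformly bounded on the cross, everything reduces to estimating $V^{lo}_{1,0}(k)V^{pc}_{1,0}(\zeta(k))^{-1}-I$.

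The second step is the key one: a Taylor expansion at the stationary phase point $k_1$. From $\theta'_{12}(k_1)=0$ one has $t\theta_{12}(k)=t\theta_{12}(k_1)+\tfrac12 t\theta''_{12}(k_1)(k-k_1)^2+\mathcal{O}(t|k-k_1|^3)$, and the definition \eqref{resva} of $\zeta$ is tuned precisely so that after rescaling the quadratic part becomes $\pm\tfrac{\mathrm{i}}{2}\zeta^2$, the signs being dictated by $\eta(\hat{\xi},i)$ and $\tilde{\eta}$, hence by whether $\hat{\xi}\in(-\tfrac38,0)$ or $\hat{\xi}\in[0,3)$. On $\Sigma_1\cap U_{k_1}$, where $\mathcal{X}_{\mathcal{K}}\equiv0$, the coefficient in $\mathcal{R}^{(2)}$ is the \emph{frozen} value $R_1(k)=\rho(k_1)(T^{(1)}_{12})_+(k_1)(\eta(k-k_1))^{2\mathrm{i}\eta\nu(k_1)}$; its rescaling produces $\zeta^{2\mathrm{i}\nu(k_1)}$ times a unimodular power of $t$, which together with $e^{-2\mathrm{i}t\theta_{12}(k_1)}$ and $\exp\{-\mathrm{i}\eta\nu(k_1)\log(4t\theta''_{12}(k_1)\tilde{\eta}(k_1))\}$ is absorbed exactly into the definition of $r_{k_1}$, so that the leading part of $V^{lo}_{1,0}$ coincides with $V^{pc}_{1,0}(\zeta)$ on the nose; the same bookkeeping on $\Sigma_2$ handles the lower factor. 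What survives is only the cubic phase error, giving $|V^{lo}_{1,0}(k)V^{pc}_{1,0}(\zeta(k))^{-1}-I|\lesssim t|k-k_1|^3 e^{-c|\zeta(k)|^2}\lesssim t^{-1/2}|\zeta(k)|^3 e^{-c|\zeta(k)|^2}$ along each ray, the Gaussian factor coming from $|e^{\pm\frac{\mathrm{i}}{2}\zeta^2}|$ on that ray. This factor--chasing, rather than any analytic difficulty, is the main obstacle.

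Finally I would close with the standard small--norm estimates. Parametrizing the cross by $\zeta$, where $dk$ and $d\zeta$ differ by the fixed factor $t^{1/2}\sqrt{-4\eta(\hat{\xi},1)\theta''_{12}(k_1)}$, the bound above yields $\|V_E-I\|_{L^\infty}\lesssim t^{-1/2}$, $\|V_E-I\|_{L^2}\lesssim t^{-3/4}$ and $\|V_E-I\|_{L^1}\lesssim t^{-1}$. Since the $L^\infty$ norm tends to $0$, the operator $I-\mathcal{C}_{V_E}$ is invertible for large $t$, one has $E(k)=I+\frac{1}{2\pi\mathrm{i}}\int_{\mathrm{cross}}\frac{\mu(s)(V_E(s)-I)}{s-k}\,ds$ with $\|\mu-I\|_{L^2}\lesssim\|V_E-I\|_{L^2}$, and evaluating at $k\in\partial U_{k_1}$, where the non--exponentially--small part of $V_E-I$ (concentrated within $\mathcal{O}(t^{-1/2}\log t)$ of $k_1$) lies at distance $\gtrsim\varrho^0$, gives $|E(k)-I|\lesssim\|V_E-I\|_{L^1}+\|\mu-I\|_{L^2}\|V_E-I\|_{L^2}\lesssim t^{-1}$. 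Therefore $M^{lo}_{1,0}(k)=E(k)M^{pc}_{1,0}(\zeta(k))=M^{pc}_{1,0}(\zeta(k))+\mathcal{O}(t^{-1})$, using that $M^{pc}_{1,0}(\zeta)=I+\mathcal{O}(\zeta^{-1})$ is bounded on $\partial U_{k_1}$ where $\zeta$ is large; the remaining small--norm bookkeeping runs parallel to Lemma~\ref{tasyE} and relies only on the solvability of RH problem~\ref{rhpMpc1} in terms of parabolic cylinder functions.
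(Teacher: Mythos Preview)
Your approach is correct and is the standard small-norm argument for comparing a local model with the parabolic cylinder parametrix. The paper does not actually prove this lemma: after stating RH problem~\ref{rhpMpc1} it remarks that $M^{pc}_{1,0}$ can be expressed via parabolic cylinder functions (citing \cite{YF2}) and then simply asserts the lemma ``through the application of the variable transformation \eqref{resva}'', with no further details. Your write-up fills in precisely what the paper leaves implicit, and the structure---form the ratio $E=M^{lo}_{1,0}(M^{pc}_{1,0})^{-1}$, observe that the frozen boundary values of $R_j$ on $\Sigma_j$ from Lemma~\ref{estRij} together with the definition of $r_{k_1}$ are tuned so that the leading parts of the two jumps agree exactly, identify the cubic Taylor remainder of $t\theta_{12}$ as the sole source of discrepancy, and close with the $L^\infty/L^2/L^1$ small-norm estimates on the local cross---is the canonical one (cf.\ \cite{RN9,BJ,CL,YF2}). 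Your localization remark, that the non-exponentially-small part of $V_E-I$ sits within $\mathcal{O}(t^{-1/2}\log t)$ of $k_1$ and hence at distance $\gtrsim\varrho^0$ from $\partial U_{k_1}$, is exactly what makes the $L^1$ bound deliver the stated $\mathcal{O}(t^{-1})$.
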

\begin{proposition}\label{Mlo0i}
As $t\rightarrow +\infty$, $M^{lo}_{1,0}(k)$ can be expressed as
\begin{equation}
M^{l}_{1,0}(k)=I+\frac{t^{-\frac{1}{2}}}{2(k-k_1)\sqrt{\eta(\hat{\xi},1)\theta''_{12}(k_1)}}\begin{pmatrix} 0 & \tilde{\beta}_{12}^1 & 0 \\ \tilde{\beta}_{21}^1 & 1 & 0 \\ 0 & 0 & 1 \end{pmatrix}+\mathcal{O}(t^{-1}).
\end{equation}	
Similarly, for local models at other phase points $k_i, \ i=1,\dots,p(\hat{\xi}),$
\begin{equation}
M^{lo}_{i,0}(k)=I+\frac{t^{-\frac{1}{2}}}{2(k-k_i)\sqrt{\eta(\hat{\xi},i)\theta''_{12}(k_i)}}\begin{pmatrix} 0 & \tilde{\beta}_{12}^i & 0 \\ \tilde{\beta}_{21}^i & 1 & 0 \\ 0 & 0 & 1 \end{pmatrix}+\mathcal{O}(t^{-1}),
\end{equation}
where
\begin{align}
\tilde{\beta}_{12}^i &=\left\{
\begin{aligned}
		&\frac{\sqrt{2\pi}}{\bar{r}_{k_i}\Gamma(\mathrm{i}\nu(k_i))}e^{\frac{\pi\nu(k_i)}{2}}e^{-\frac{\pi}{4}\mathrm{i}}, \quad
       \hat{\xi}\in(-\frac{3}{8},0),\\
		&-\frac{\sqrt{2\pi}}{\bar{r}_{k_i}\Gamma(-\mathrm{i}\nu(k_i))}e^{\frac{5\pi\nu(k_i)}{2}}e^{-\frac{7\pi}{4}\mathrm{i}},
       \quad  \hat{\xi}\in\left[0,3\right),\label{beta1}
\end{aligned}
\right.\\
\lvert\tilde{\beta}_{21}^i\rvert &=\left\{
\begin{aligned}
       &-\frac{\nu(k_i)}{1-|r_{k_i}|^2}, \quad \hat{\xi}\in(-\frac{3}{8},0),\\
		&\frac{\nu(k_i)}{(1-|r_{k_i}|^2)^3}, \quad \hat{\xi}\in\left[0,3\right),\label{beta2}\\
\end{aligned}
\right.\\
\arg{\tilde{\beta}_{21}^i}&=\left\{
\begin{aligned}
   &\frac{\pi}{2}\nu(k_i)-\frac{\pi}{4}\mathrm{i}-\arg(-\bar{r}_{k_i})-\arg{\Gamma(\mathrm{i})\nu(k_i)}, \quad \hat{\xi}\in(-\frac{3}{8},0),\\
   &\frac{5\pi}{2}\nu(k_i)-\frac{7\pi}{4}\mathrm{i}-\arg(-\bar{r}_{k_i})-\arg{\Gamma(-\mathrm{i})\nu(k_i)}, \quad \hat{\xi}\in\left[0,3\right).\label{beta3}
\end{aligned}
\right.
\end{align}

\end{proposition}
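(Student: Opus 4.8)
The plan is to reduce the local RH problem \ref{rhpMlo1} to the exactly solvable parabolic cylinder model RH problem \ref{rhpMpc1} and then extract the sub-leading term of its large-$\zeta$ expansion. First I would apply the rescaling \eqref{resva}, $\zeta=t^{1/2}\sqrt{-4\eta(\hat\xi,1)\theta''_{12}(k_1)}\,(k-k_1)$, which maps $U_{k_1}$ onto a fixed neighbourhood of $\zeta=0$ and, because $r$ is H\"older near $k_1$ and the $\mathcal{X}_{\mathcal{K}}$-truncation built into the extension functions $R_j$ (Lemma \ref{estRij}) removes the discrete-spectrum contributions, transforms $V^{lo}_{1,0}(k)$ into $V^{pc}_{1,0}(\zeta)$ up to a jump error that vanishes as $t\to\infty$; the preceding Lemma then upgrades this, by a small-norm argument, to the modelling bound $M^{lo}_{1,0}(k)=M^{pc}_{1,0}(\zeta)+\mathcal{O}(t^{-1})$, uniformly on $\partial U_{k_1}$.

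Next I would use the explicit solution of RH problem \ref{rhpMpc1} in terms of parabolic cylinder functions $D_a(\cdot)$ as in \cite{YF2}. Writing $M^{pc}_{1,0}(\zeta)=I+(M^{pc}_{1,0})^{(1)}\zeta^{-1}+\mathcal{O}(\zeta^{-2})$, the off-diagonal entries of $(M^{pc}_{1,0})^{(1)}$ are given by the standard connection formulas, yielding $(M^{pc}_{1,0})^{(1)}_{12}=\tilde\beta^1_{12}$ with the ratio $\sqrt{2\pi}/(\bar r_{k_1}\Gamma(\pm\mathrm{i}\nu(k_1)))$ times the phase $e^{\pi\nu(k_1)/2}e^{-\pi\mathrm{i}/4}$ for $\hat\xi\in(-\tfrac38,0)$ (resp. $e^{5\pi\nu(k_1)/2}e^{-7\pi\mathrm{i}/4}$ for $\hat\xi\in[0,3)$). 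The only difference between the two regions is the reversal of the orientation of the four rays in RH problem \ref{rhpMpc1} — equivalently the flip of $\eta(\hat\xi,1)$ and $\tilde\eta(\hat\xi,1)$ in \eqref{eta} — which swaps $\Gamma(\mathrm{i}\nu)\leftrightarrow\Gamma(-\mathrm{i}\nu)$ and accounts for the different powers of $1-|r_{k_1}|^2$; the entry $(M^{pc}_{1,0})^{(1)}_{21}=\tilde\beta^1_{21}$ is then pinned down from the relation on the product $\tilde\beta^1_{12}\tilde\beta^1_{21}$ imposed by $\det M^{pc}_{1,0}=1$, which after simplifying $|\Gamma(\mathrm{i}\nu)|$ via $|\Gamma(\mathrm{i}\nu)|^2=\pi/(\nu\sinh\pi\nu)$ and using $e^{2\pi\nu(k_1)}=(1-|r_{k_1}|^2)^{-1}$ gives the modulus and argument recorded in \eqref{beta2}--\eqref{beta3}. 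Throughout this step I would keep careful track of the constant factors hidden in $r_{k_1}$ — the oscillatory $e^{-2\mathrm{i}t\theta_{12}(k_1)}$, the branch of $\zeta^{\pm2\mathrm{i}\nu}$, and the $\log\big(4t\theta''_{12}(k_1)\tilde\eta(k_1)\big)$ term — so that after conjugation the residual oscillation cancels and the surviving $t$-independent constant is precisely \eqref{beta1}.

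Finally, undoing the change of variable turns $\zeta^{-1}$ into $t^{-1/2}\big(\sqrt{-4\eta(\hat\xi,1)\theta''_{12}(k_1)}\,(k-k_1)\big)^{-1}$, so that $(M^{pc}_{1,0})^{(1)}\zeta^{-1}$ becomes the displayed $3\times3$ matrix in the statement multiplied by $\dfrac{t^{-1/2}}{2(k-k_1)\sqrt{\eta(\hat\xi,1)\theta''_{12}(k_1)}}$, with the factor $4$ and the sign absorbed into the square root and into the normalization of the entries $\tilde\beta^1_{12},\tilde\beta^1_{21}$; combining with the $\mathcal{O}(t^{-1})$ modelling error from the first step gives the claimed expansion at $k_1$. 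The general phase point $k_i$ follows verbatim after replacing $\theta''_{12}(k_1),\,r_{k_1},\,T^{(1)}_{12},\,\eta(\hat\xi,1)$ by their counterparts at $k_i$, since Lemma \ref{cwjcwk} guarantees that the distinct local models decouple to leading order. The main obstacle is the bookkeeping in the second step: correctly propagating all the phases and branch cuts through the successive conjugations so that the residual oscillation cancels and the two contour orientations (via the signs in \eqref{eta}) are handled in parallel — a slip there changes the argument of the $\Gamma$-function or the power of $1-|r|^2$ and hence corrupts \eqref{beta1}--\eqref{beta3}.
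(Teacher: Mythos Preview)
Your proposal is correct and follows essentially the same route as the paper: the paper reduces $M^{lo}_{1,0}$ to the parabolic cylinder model $M^{pc}_{1,0}$ via the rescaling \eqref{resva}, invokes the explicit solution from \cite{YF2} together with the preceding Lemma ($M^{lo}_{1,0}(k)=M^{pc}_{1,0}(\zeta)+\mathcal{O}(t^{-1})$), and then reads off the $\zeta^{-1}$ coefficient, undoing the change of variable to produce the stated expansion. Your additional discussion of the orientation flip through $\eta(\hat\xi,i)$, the use of $|\Gamma(\mathrm{i}\nu)|^2=\pi/(\nu\sinh\pi\nu)$, and the decoupling via Lemma \ref{cwjcwk} simply makes explicit the bookkeeping that the paper leaves to the reference \cite{YF2}.
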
\label{asyMlo1}
Combining Lemma \ref{cwjcwk} and Proposition \ref{Mlo0i}, along with the symmetry \eqref{S1}, gives rise to the subsequent proposition.
\begin{proposition}\label{asyMlo}
As $t\to+\infty$,
\begin{align}
M^{lo}(k)=&I+\frac{1}{2}t^{-\frac{1}{2}}\sum_{ i=1 }^{p(\xi)}\mathcal{F}_i(k) +\mathcal{O}(t^{-1}),
\end{align}
where
\begin{align}\label{Fi}
\mathcal{F}_i(k)=\frac{\mathcal{A}_i(\hat{\xi})}{\sqrt{|\theta_{12}''(k_i)|}(k-k_i)}+\frac{\omega\Gamma_3\overline{\mathcal{A}_i(\hat{\xi})}\Gamma_3}
{\sqrt{|\theta_{12}''(\omega k_i)|}(k-\omega k_i)}
+\frac{\omega^2\Gamma_2\overline{\mathcal{A}_i(\hat{\xi})}\Gamma_2}{\sqrt{|\theta_{12}''(\omega^2k_i)|}(k-\omega^2k_i)},
\end{align}
and
\begin{align}
\mathcal{A}_i(\hat{\xi})=\left(\begin{array}{ccc}
0 & \tilde{\beta}^i_{12} &0\\
\tilde{\beta}^i_{21} & 0 &0\\
0&0&0
\end{array}\right),
\end{align}
$\tilde{\beta}^i_{12}$, $\tilde{\beta}^i_{21}$ are difined in \eqref{beta1}--\eqref{beta3}.
\end{proposition}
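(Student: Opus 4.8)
The plan is to combine three ingredients that are already in place: the explicit Beals--Coifman representation of $M^{lo}(k)$, the decoupling estimates of Lemma \ref{cwjcwk}, and the local expansion of Proposition \ref{Mlo0i}, transported by the symmetry \eqref{S1}. First I would start from
\begin{equation*}
M^{lo}(k)=I+\frac{1}{2\pi\mathrm{i}}\int_{\Sigma^{lo}}\frac{(1-\mathcal{C}_w)^{-1}I\,w}{\varsigma-k}\,d\varsigma
\end{equation*}
and invoke the second identity in Lemma \ref{cwjcwk} to break the integral into the $3p(\hat\xi)$ pieces supported on the individual crosses,
\begin{equation*}
M^{lo}(k)=I+\sum_{\ell=0,1,2}\sum_{i=1}^{p(\hat\xi)}\big(M^{lo}_{i,\ell}(k)-I\big)+\mathcal{O}(t^{-3/2}),
\end{equation*}
where $M^{lo}_{i,\ell}(k)-I=\frac{1}{2\pi\mathrm{i}}\int(1-\mathcal{C}_{w_i^\ell})^{-1}I\,w_i^\ell/(\varsigma-k)\,d\varsigma$ is the localized model at $\omega^\ell k_i$. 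Hence it suffices to extract the $t^{-1/2}$-term of each $M^{lo}_{i,\ell}$ with remainder $\mathcal{O}(t^{-1})$ and sum.

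For $\ell=0$ this is exactly Proposition \ref{Mlo0i}, which gives $M^{lo}_{i,0}(k)=I+\frac{1}{2}t^{-1/2}(k-k_i)^{-1}|\theta''_{12}(k_i)|^{-1/2}\mathcal{A}_i(\hat\xi)+\mathcal{O}(t^{-1})$; here I use that $\eta(\hat\xi,i)\theta''_{12}(k_i)=|\theta''_{12}(k_i)|>0$ by the definition \eqref{eta} of $\eta$ together with the signature table of Figure \ref{figtheta}, so that $\sqrt{\eta\theta''_{12}(k_i)}=\sqrt{|\theta''_{12}(k_i)|}$, and the matrix $\mathcal{A}_i(\hat\xi)$ in \eqref{Fi} records the leading off-diagonal block of the matrix appearing in Proposition \ref{Mlo0i}. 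For $\ell=1,2$ I would not repeat the parabolic-cylinder construction but transport the result by the discrete symmetries. Since $k_i$ is real, the relations in \eqref{S1} give $M(\omega k_i)=\Gamma_3\overline{M(k_i)}\,\Gamma_3$ and $M(\omega^2 k_i)=\Gamma_2\overline{M(k_i)}\,\Gamma_2$; the jump $V^{lo}_{i,\ell}$ is obtained from $V^{lo}_{i,0}$ by the corresponding change of variable $k\mapsto\omega^\ell\bar k$ (the reflection coefficient, the $T_{ij}$-factors, and the phases $\theta_{13}(k)=-\theta_{12}(\omega^2k)$, $\theta_{23}(k)=\theta_{12}(\omega k)$ all respect \eqref{S1}), so $M^{lo}_{i,\ell}(k)$ inherits the same symmetry. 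Applying $\Gamma_3$-conjugation plus complex conjugation to the $\ell=0$ expansion (and $\Gamma_2$ for $\ell=2$), with $|\theta''_{12}(\omega^\ell k_i)|$ in the denominator, then yields precisely the second and third blocks of $\mathcal{F}_i(k)$ in \eqref{Fi}, including the prefactors $\omega$ and $\omega^2$ coming from the symmetry action on the norming data at $\omega^\ell k_i$. Summing over $i$ and $\ell$ and absorbing the individual $\mathcal{O}(t^{-1})$ errors together with the $\mathcal{O}(t^{-3/2})$ of Lemma \ref{cwjcwk} gives the asserted expansion.

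The main obstacle is the bookkeeping in this last symmetry transport: one must verify that the substitution $k\mapsto\omega^\ell\bar k$ acting on the explicit data of the $\ell=0$ model --- the local reflection coefficient $r_{k_i}$, the factor $T^{(i)}_{12}$, the oscillatory terms $e^{-2\mathrm{i}t\theta(k_i)}\zeta^{\mp2\mathrm{i}\eta\nu(k_i)}$, and the Gamma-function constants entering $\tilde\beta^i_{12}$, $\tilde\beta^i_{21}$ in \eqref{beta1}--\eqref{beta3} --- reproduces exactly the conjugated matrices $\omega\,\Gamma_3\overline{\mathcal{A}_i(\hat\xi)}\,\Gamma_3$ and $\omega^2\,\Gamma_2\overline{\mathcal{A}_i(\hat\xi)}\,\Gamma_2$ with no extraneous phase, and that the resulting remainders stay uniformly $\mathcal{O}(t^{-1})$ for $k$ near the phase points. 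Once that compatibility is checked, the remainder of the argument is the routine combination of the already established estimates.
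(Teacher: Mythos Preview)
Your proposal is correct and follows essentially the same approach as the paper: the paper's entire proof consists of the single sentence ``Combining Lemma \ref{cwjcwk} and Proposition \ref{Mlo0i}, along with the symmetry \eqref{S1}, gives rise to the subsequent proposition,'' and you have simply unpacked that sentence in detail. Your additional care in identifying $\sqrt{\eta(\hat\xi,i)\theta''_{12}(k_i)}=\sqrt{|\theta''_{12}(k_i)|}$ and in flagging the symmetry-transport bookkeeping as the only nontrivial verification is exactly right.
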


\subsection{The small-norm RH problem for $E(k)$}

Consider the small-norm RH problem for $E(k)$, which is analytic in $\mathbb{C}\setminus \Sigma^{E}$, where
\begin{align}
\Sigma^{E}=\Big(\Sigma^{(2)}\setminus U\Big)\cup\partial U,
\end{align}
as illustrated in Figure\ref{FIGE3}. And, $E(k)$ satisfies the following RH problem.
\begin{RHP}\label{E}
Find a matrix-valued function $E(k):=E(k;y,t)$ such that
\begin{itemize}
\item $E(k)=I+\mathcal{O}(k^{-1}), \quad k\rightarrow \infty $.
\item Jump relations: $E_+(k)=E_-(k)V^{E}(k), \ k\in\Sigma^E$ with
\begin{align}\label{VE}
V^{E}(k)=\left\{
\begin{aligned}
   &M^{r}(k)V^{(2)}(k)M^{r}(k)^{-1}, \quad k\in\Sigma^{(2)}\setminus U,\\
   &M^{r}(k)M^{lo}(k)M^{r}(k)^{-1}, \quad k\in\partial U.
\end{aligned}
\right.
\end{align}
\end{itemize}
\end{RHP}

\begin{figure}
\begin{center}
	\subfigure[]{
\begin{tikzpicture}[scale=0.8]
\draw [thick]  (-4,-0.6) to [out=0,in=180] (-2,0.6)
to [out=0,in=180] (0,-0.6) to [out=0,in=180] (2,0.6)  to  [out=0,in=180] (4,-0.6);
\draw [thick,rotate=240] (-4,-0.6) to [out=0,in=180] (-2,0.6)
to [out=0,in=180] (0,-0.6) to [out=0,in=180] (2,0.6)  to  [out=0,in=180] (4,-0.6);
\draw [thick,rotate=120] (-4,-0.6) to [out=0,in=180] (-2,0.6)
to [out=0,in=180] (0,-0.6) to [out=0,in=180] (2,0.6)  to  [out=0,in=180] (4,-0.6);

\draw [thick] (-4,0.6) to [out=0,in=180] (-2,-0.6)
to [out=0,in=180] (0,0.6) to [out=0,in=180] (2,-0.6)  to [out=0,in=180] (4,0.6);
\draw [thick,rotate=240](-4,0.6) to [out=0,in=180] (-2,-0.6)
to [out=0,in=180] (0,0.6) to [out=0,in=180] (2,-0.6)  to [out=0,in=180] (4,0.6);
\draw [thick,rotate=120](-4,0.6) to [out=0,in=180] (-2,-0.6)
to [out=0,in=180] (0,0.6) to [out=0,in=180] (2,-0.6)  to [out=0,in=180] (4,0.6);

\filldraw [white] (0,0) circle [radius=1];
\draw[dotted](-5,0)--(5,0);
\draw[dotted,rotate=240](-5,0)--(5,0);
\draw[dotted,rotate=120](-5,0)--(5,0);

\node  at (4.3,0.6) {\scriptsize$\Sigma_1$};
\node  at (4.3,-0.6) {\scriptsize$\Sigma_2$};

\draw [thick] (0,0) circle [radius=1];

\coordinate (A) at (1.6,0);
\fill[red] (A) circle [radius=0.03] node[below] {\tiny$1$};
\coordinate (B) at (-1.6,0);
\fill[red] (B) circle [radius=0.03] node[below] {\tiny$-1$};

\draw [dotted] (0,0) circle [radius=1];
\draw [dotted] (0,0) circle [radius=3];

\filldraw [white] (1,0) circle [radius=0.2];
\draw  (1,0) circle [radius=0.2];
\filldraw  (1,0) circle [radius=0.05];
\node [below] at  (1.2,-0.1) {\tiny$k_2$};
\filldraw [white] (3,0) circle [radius=0.2];
\draw  (3,0) circle [radius=0.2];
\filldraw  (3,0) circle [radius=0.05];
\node [below] at  (3,-0.1) {\tiny$k_1$};
\filldraw [white] (-1,0) circle [radius=0.2];
\draw  (-1,0) circle [radius=0.2];
\filldraw  (-1,0) circle [radius=0.05];
\node [below] at  (-1.15,-0.1) {\tiny$k_3$};
\filldraw [white] (-3,0) circle [radius=0.2];
\draw  (-3,0) circle [radius=0.2];
\filldraw  (-3,0) circle [radius=0.05];
\node [below] at  (-3,-0.1) {\tiny$k_4$};

\coordinate (I) at (0,0);
\fill[red] (I) circle [radius=0.03] node[below] {\tiny$O$};

\filldraw [white] (-0.5,0.87) circle [radius=0.2];
\draw  (-0.5,0.87) circle [radius=0.2];
\filldraw  (-0.5,0.87) circle [radius=0.05];
\node [above] at   (-0.85,0.9) {\tiny$\omega k_2$};
\filldraw [white] (-1.5,2.6) circle [radius=0.2];
\draw (-1.5,2.6) circle [radius=0.2];
\filldraw  (-1.5,2.6) circle [radius=0.05];
\node [above] at  (-1.75,2.7) {\tiny$\omega k_1$};
\filldraw [white] (0.5,-0.87) circle [radius=0.2];
\draw (0.5,-0.87) circle [radius=0.2];
\filldraw  (0.5,-0.87) circle [radius=0.05];
\node [below] at   (0.9,-0.9) {\tiny$\omega k_3$};
\filldraw [white] (1.5,-2.6) circle [radius=0.2];
\draw (1.5,-2.6) circle [radius=0.2];
\filldraw  (1.5,-2.6) circle [radius=0.05];
\node [below] at  (1.85,-2.7) {\tiny$\omega k_4$};

\filldraw [white] (-0.5,-0.87) circle [radius=0.2];
\draw  (-0.5,-0.87) circle [radius=0.2];
\filldraw  (-0.5,-0.87) circle [radius=0.05];
\node [below] at   (-0.85,-0.9) {\tiny$\omega^2 k_2$};
\filldraw [white] (-1.5,-2.6) circle [radius=0.2];
\draw (-1.5,-2.6) circle [radius=0.2];
\filldraw  (-1.5,-2.6) circle [radius=0.05];
\node [below] at  (-1.75,-2.7) {\tiny$\omega^2 k_1$};
\filldraw [white] (0.5,0.87) circle [radius=0.2];
\draw (0.5,0.87) circle [radius=0.2];
\filldraw  (0.5,0.87) circle [radius=0.05];
\node [above] at   (0.9,0.9) {\tiny$\omega^2 k_3$};
\filldraw  (1.5,2.6) circle [radius=0.05];
\filldraw  (1.5,2.6) circle [radius=0.05];
\filldraw  (1.5,2.6) circle [radius=0.05];
\node [above] at  (1.85,2.7) {\tiny$\omega^2 k_4$};
		
\end{tikzpicture}
		\label{case1}}
	\subfigure[]{
\begin{tikzpicture}[scale=0.8]
\draw [thick] (-6,-0.6)to [out=0,in=180](-4.5,0.6)to [out=0,in=180](-3,-0.6) to [out=0,in=180] (-1.5,0.6)
to [out=0,in=180] (0,-0.6) to [out=0,in=180] (1.5,0.6)  to  [out=0,in=180] (3,-0.6) to [out=0,in=180] (4.5,0.6) to
[out=0,in=180] (6,-0.6);
\draw [thick](-6,0.6)to [out=0,in=180](-4.5,-0.6)to [out=0,in=180](-3,0.6) to [out=0,in=180] (-1.5,-0.6)
to [out=0,in=180] (0,0.6) to [out=0,in=180] (1.5,-0.6)  to [out=0,in=180] (3,0.6) to [out=0,in=180] (4.5,-0.6) to  [out=0,in=180] (6,0.6);

\draw [thick,rotate=240] (-6,-0.6)to [out=0,in=180](-4.5,0.6)to [out=0,in=180](-3,-0.6) to [out=0,in=180] (-1.5,0.6)
to [out=0,in=180] (0,-0.6) to [out=0,in=180] (1.5,0.6)  to  [out=0,in=180] (3,-0.6) to [out=0,in=180] (4.5,0.6) to
[out=0,in=180] (6,-0.6);
\draw [thick,rotate=240](-6,0.6)to [out=0,in=180](-4.5,-0.6)to [out=0,in=180](-3,0.6) to [out=0,in=180] (-1.5,-0.6)
to [out=0,in=180] (0,0.6) to [out=0,in=180] (1.5,-0.6)  to [out=0,in=180] (3,0.6) to [out=0,in=180] (4.5,-0.6) to  [out=0,in=180] (6,0.6);

\draw [thick,rotate=120] (-6,-0.6)to [out=0,in=180](-4.5,0.6)to [out=0,in=180](-3,-0.6) to [out=0,in=180] (-1.5,0.6)
to [out=0,in=180] (0,-0.6) to [out=0,in=180] (1.5,0.6)  to  [out=0,in=180] (3,-0.6) to [out=0,in=180] (4.5,0.6) to
[out=0,in=180] (6,-0.6);
\draw [thick,rotate=120](-6,0.6)to [out=0,in=180](-4.5,-0.6)to [out=0,in=180](-3,0.6) to [out=0,in=180] (-1.5,-0.6)
to [out=0,in=180] (0,0.6) to [out=0,in=180] (1.5,-0.6)  to [out=0,in=180] (3,0.6) to [out=0,in=180] (4.5,-0.6) to  [out=0,in=180] (6,0.6);

\draw [dotted] (0,0) circle [radius=0.75];
\draw [dotted] (0,0) circle [radius=2.25];
\draw [dotted] (0,0) circle [radius=3.75];
\draw [dotted] (0,0) circle [radius=5.25];

\filldraw [white] (0,0) circle [radius=0.75];
\draw[dotted](-6.5,0)--(6.8,0);
\draw[dotted,rotate=240](-6.5,0)--(6.8,0);
\draw[dotted,rotate=120](-6.5,0)--(6.8,0);

\draw [thick] (0,0) circle [radius=0.75];

\node  at (6.3,0.6) {\tiny$\Sigma_1$};
\node  at (6.3,-0.6) {\tiny$\Sigma_2$};
\node  at (-3.55,5.1) {\tiny$\Sigma_3$};
\node  at (-2.45,5.75) {\tiny$\Sigma_4$};
\node  at (-3.55,-5.1) {\tiny$\Sigma_5$};
\node  at (-2.45,-5.75) {\tiny$\Sigma_6$};

\coordinate (I) at (0,0);
		\fill[red] (I) circle (1pt) node[below] {\tiny$O$};
\filldraw [red] (2.7,0) circle [radius=0.03];
\node [below,red] at  (2.8,0) {\tiny$ 1$};
\filldraw [red] (-2.7,0) circle [radius=0.03];
\node [below,red] at  (-2.8,0) {\tiny$ -1$};

\filldraw [white] (3.75,0) circle [radius=0.2];
\draw  (3.75,0) circle [radius=0.2];
\filldraw  (3.75,0) circle [radius=0.05];
\node [below] at  (3.75,-0.1) {\tiny$ k_2$};
\filldraw [white] (5.25,0) circle [radius=0.2];
\draw  (5.25,0) circle [radius=0.2];
\filldraw  (5.25,0) circle [radius=0.05];
\node [below] at   (5.25,-0.1) {\tiny$ k_1$};
\filldraw [white] (2.25,0) circle [radius=0.2];
\draw  (2.25,0) circle [radius=0.2];
\filldraw  (2.25,0) circle [radius=0.05];
\node [below] at   (2.25,-0.1) {\tiny$ k_3$};
\filldraw [white] (0.75,0) circle [radius=0.2];
\draw  (0.75,0) circle [radius=0.2];
\filldraw  (0.75,0) circle [radius=0.05];
\node [below] at   (0.8,-0.1) {\tiny$ k_4$};
\filldraw [white] (-3.75,0) circle [radius=0.2];
\draw  (-3.75,0) circle [radius=0.2];
\filldraw  (-3.75,0) circle [radius=0.05];
\node [below] at   (-3.75,-0.1) {\tiny$ k_7$};
\filldraw [white] (-5.25,0) circle [radius=0.2];
\draw  (-5.25,0) circle [radius=0.2];
\filldraw  (-5.25,0) circle [radius=0.05];
\node [below] at   (-5.25,-0.1) {\tiny$ k_8$};
\filldraw [white] (-2.25,0) circle [radius=0.2];
\draw  (-2.25,0) circle [radius=0.2];
\filldraw  (-2.25,0) circle [radius=0.05];
\node [below] at   (-2.25,-0.1) {\tiny$ k_6$};
\filldraw [white] (-0.75,0) circle [radius=0.2];
\draw  (-0.75,0) circle [radius=0.2];
\filldraw  (-0.75,0) circle [radius=0.05];
\node [below] at   (-0.75,-0.1) {\tiny$k_5$};

\filldraw [white] (-1.88,3.25) circle [radius=0.2];
\draw  (-1.88,3.25) circle [radius=0.2];
\filldraw  (-1.88,3.25) circle [radius=0.05];
\node [above] at    (-2.05,3.35) {\tiny$\omega k_2$};
\filldraw [white]  (-2.63,4.55) circle [radius=0.2];
\draw   (-2.63,4.55) circle [radius=0.2];
\filldraw  (-2.63,4.55) circle [radius=0.05];
\node [above] at   (-2.8,4.6) {\tiny$\omega k_1$};
\filldraw [white]  (-1.13,1.95) circle [radius=0.2];
\draw   (-1.13,1.95) circle [radius=0.2];
\filldraw  (-1.13,1.95) circle [radius=0.05];
\node [above] at    (-1.35,2) {\tiny$\omega k_3$};
\filldraw [white]  (-0.38,0.65) circle [radius=0.2];
\draw   (-0.38,0.65) circle [radius=0.2];
\filldraw  (-0.38,0.65) circle [radius=0.05];
\node [above] at  (-0.6,0.7) {\tiny$\omega k_4$};
\filldraw [white]  (1.88,-3.25) circle [radius=0.2];
\draw   (1.88,-3.25) circle [radius=0.2];
\filldraw  (1.88,-3.25) circle [radius=0.05];
\node [below] at    (2.15,-3.25) {\tiny$\omega k_7$};
\filldraw [white]  (2.63,-4.55) circle [radius=0.2];
\draw   (2.63,-4.55) circle [radius=0.2];
\filldraw  (2.63,-4.55) circle [radius=0.05];
\node [below] at   (2.8,-4.55) {\tiny$\omega k_8$};
\filldraw [white] (1.13,-1.95) circle [radius=0.2];
\draw   (1.13,-1.95) circle [radius=0.2];
\filldraw  (1.13,-1.95) circle [radius=0.05];
\node [below] at    (1.35,-1.95) {\tiny$\omega k_6$};
\filldraw [white] (0.38,-0.65) circle [radius=0.2];
\draw   (0.38,-0.65) circle [radius=0.2];
\filldraw  (0.38,-0.65) circle [radius=0.05];
\node [below] at  (0.6,-0.65) {\tiny$\omega k_5$};

\filldraw [white] (-1.88,-3.25) circle [radius=0.2];
\draw  (-1.88,-3.25) circle [radius=0.2];
\filldraw  (-1.88,-3.25) circle [radius=0.05];
\node [below] at    (-2.1,-3.25) {\tiny$\omega^2 k_2$};
\filldraw [white] (-2.63,-4.55) circle [radius=0.2];
\draw  (-2.63,-4.55) circle [radius=0.2];
\filldraw  (-2.63,-4.55) circle [radius=0.05];
\node [below] at   (-2.8,-4.55) {\tiny$\omega^2 k_1$};
\filldraw [white] (-1.13,-1.95) circle [radius=0.2];
\draw  (-1.13,-1.95) circle [radius=0.2];
\filldraw  (-1.13,-1.95) circle [radius=0.05];
\node [below] at    (-1.35,-1.95) {\tiny$\omega^2 k_3$};
\filldraw [white] (-0.38,-0.65)  circle [radius=0.2];
\draw  (-0.38,-0.65)  circle [radius=0.2];
\filldraw  (-0.38,-0.65) circle [radius=0.05];
\node [below] at  (-0.6,-0.65) {\tiny$\omega^2 k_4$};
\filldraw [white] (1.88,3.25)  circle [radius=0.2];
\draw  (1.88,3.25)  circle [radius=0.2];
\filldraw  (1.88,3.25) circle [radius=0.05];
\node [above] at    (2.2,3.25) {\tiny$\omega^2 k_7$};
\filldraw [white] (2.63,4.55)  circle [radius=0.2];
\draw (2.63,4.55)  circle [radius=0.2];
\filldraw  (2.63,4.55) circle [radius=0.05];
\node [above] at   (2.9,4.55) {\tiny$\omega^2 k_8$};
\filldraw [white] (1.13,1.95)  circle [radius=0.2];
\draw (1.13,1.95)  circle [radius=0.2];
\filldraw  (1.13,1.95) circle [radius=0.05];
\node [above] at    (1.45,1.95) {\tiny$\omega^2 k_6$};
\filldraw [white] (0.38,0.65)  circle [radius=0.2];
\draw (0.38,0.65)  circle [radius=0.2];
\filldraw  (0.38,0.65) circle [radius=0.05];
\node [above] at  (0.7,0.65) {\tiny$\omega^2 k_5$};
		\end{tikzpicture}
		\label{case2}}
\caption{\footnotesize Figures (a) and (b) depict the jump contour $\Sigma^{E}$ for the case
   $\hat{\xi}\in\left[0,3\right)$ and $\hat{\xi}\in(-\frac{3}{8},0)$, respectively. }
\label{FIGE3}
\end{center}
\end{figure}
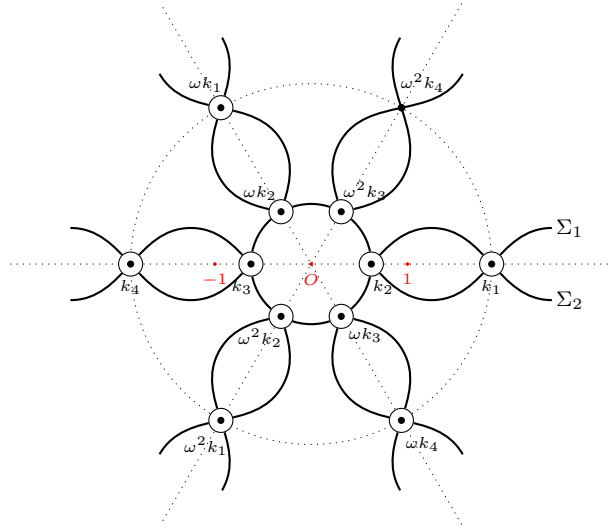
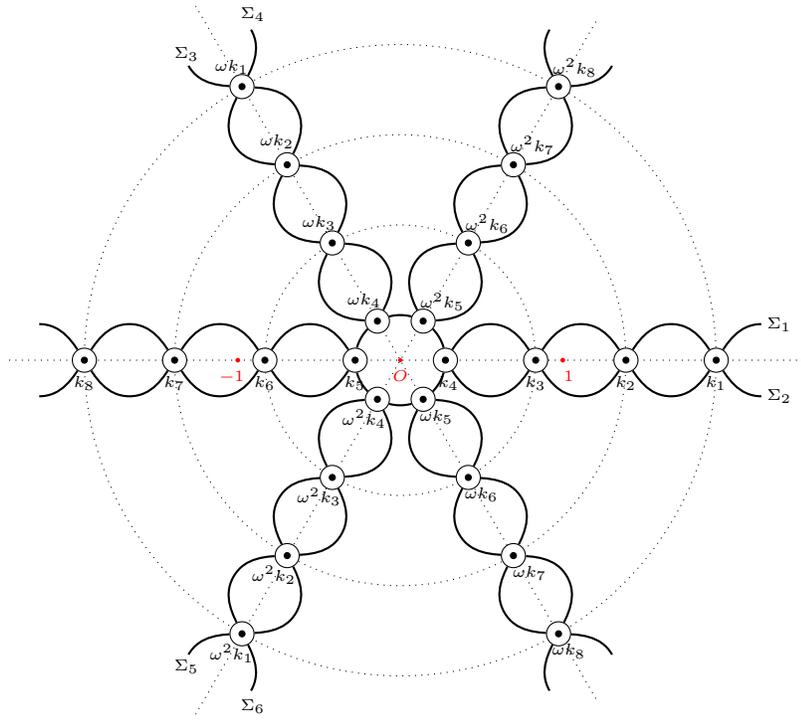

For $k\in\Sigma^{(ju)}\setminus U$,  referring to \eqref{estV2O} and observing the boundedness of $M^{r}$ is bounded, we find
\begin{equation}\label{estVE1}
\Vert V^{E}(k)-I\Vert_{L^q}= \mathcal{O}( e^{-K_qt}),  \quad t\to\infty;
\end{equation}
For $k\in \partial U$, as indicated by Proposition \ref{asyMlo}, we obtain
\begin{equation}
\lvert V^{E}(k)-I\rvert=\lvert M^{r}(k)(M^{lo}(k)-I)M^{r}(k)^{-1} \rvert = \mathcal{O}(t^{-\frac{1}{2}}), \quad t\to\infty.
\end{equation}
In a manner akin to the methodology employed in determining the value of $M^{err}(k)$ in subsection \ref{merr}, we derive the formulation for $E(k)$ and elucidate its associated asymptotic results.

\begin{proposition}
As $t\to\infty$, it follows that
\begin{equation}\label{asyE16pi}
E(e^{\frac{\pi}{6}\mathrm{i}})=I+t^{-\frac{1}{2}}H^{(0)}(e^{\frac{\pi}{6}\mathrm{i}})+\mathcal{O}(t^{-1+\rho}), \quad 0<\rho<1/4,
\end{equation}
where
\begin{equation}\label{H0}
H^{(0)}(e^{\frac{\pi}{6}\mathrm{i}})=-\frac{1}{2}\sum_{i=1}^{p(\hat{\xi})}M^{r}(k_i)\mathcal{F}_i(e^{\frac{\pi}{6}\mathrm{i}}) M^{r}(k_i)^{-1}.
\end{equation}
and $\mathcal{F}_i(k)$ is defined in \eqref{Fi}.
\end{proposition}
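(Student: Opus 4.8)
The plan is to obtain the proposition from a standard small--norm Riemann--Hilbert analysis of RH problem \ref{E}, followed by a residue evaluation at the fixed regular point $k=e^{\frac{\pi}{6}\mathrm{i}}$; the scheme runs parallel to the treatment of $M^{err}(k)$ in Subsection \ref{merr}. I would first record the size of the jump $V^{E}(k)-I$ on $\Sigma^{E}=\big(\Sigma^{(2)}\setminus U\big)\cup\partial U$: on $\Sigma^{(ju)}\setminus U$ it is exponentially small by \eqref{estVE1}, while on $\partial U$ one has $|V^{E}(k)-I|=|M^{r}(k)\big(M^{lo}(k)-I\big)M^{r}(k)^{-1}|=\mathcal{O}(t^{-1/2})$ from Proposition \ref{asyMlo} and the uniform boundedness of $M^{r}$ on $\partial U$ (itself a consequence of Lemma \ref{Mout} and Proposition \ref{mout}). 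Consequently the associated Beals--Coifman operator $\mathcal{C}_{w^{E}}$ on $\Sigma^{E}$ satisfies $\|\mathcal{C}_{w^{E}}\|_{L^{2}}\lesssim t^{-1/2}<1$ for $t$ large, so RH problem \ref{E} is uniquely solvable and
\begin{equation*}
E(k)=I+\frac{1}{2\pi\mathrm{i}}\int_{\Sigma^{E}}\frac{\mu(s)\big(V^{E}(s)-I\big)}{s-k}\,ds,\qquad \mu:=(I-\mathcal{C}_{w^{E}})^{-1}I,\quad \|\mu-I\|_{L^{2}(\Sigma^{E})}\lesssim t^{-1/2}.
\end{equation*}

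Next I would evaluate this representation at $k=e^{\frac{\pi}{6}\mathrm{i}}$, which is legitimate since $e^{\frac{\pi}{6}\mathrm{i}}$ lies at a positive distance from $\Sigma^{E}$ (built into the definitions of $\varrho$ and $\varrho^{0}$). Splitting the contour, the $\Sigma^{(ju)}\setminus U$ part contributes $\mathcal{O}(e^{-ct})$, and the piece carrying $\mu-I$ is $\mathcal{O}\big(\|\mu-I\|_{L^{2}}\|V^{E}-I\|_{L^{2}}\big)=\mathcal{O}(t^{-1})$ by Cauchy--Schwarz; inserting $V^{E}-I=M^{r}(M^{lo}-I)M^{r,-1}$ on $\partial U$ together with the expansion $M^{lo}(s)=I+\tfrac12 t^{-1/2}\sum_{i=1}^{p(\hat{\xi})}\mathcal{F}_{i}(s)+\mathcal{O}(t^{-1})$ of Proposition \ref{asyMlo}, what remains is
\begin{equation*}
E(e^{\frac{\pi}{6}\mathrm{i}})=I+\frac{t^{-1/2}}{2}\sum_{i=1}^{p(\hat{\xi})}\frac{1}{2\pi\mathrm{i}}\int_{\partial U}\frac{M^{r}(s)\,\mathcal{F}_{i}(s)\,M^{r}(s)^{-1}}{s-e^{\frac{\pi}{6}\mathrm{i}}}\,ds+\mathcal{O}(t^{-1+\rho}),\qquad 0<\rho<\tfrac14 .
\end{equation*}

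Finally I would compute each integral by residues. In regions II--1 and II--2 one has $\Lambda=\varnothing$, so $M^{r}$ is analytic on $U$, whereas $\mathcal{F}_{i}$ of \eqref{Fi} is rational with simple poles precisely at the centres $k_{i},\,\omega k_{i},\,\omega^{2}k_{i}$ of the disks $\omega^{\ell}U_{k_{i}}$, and $e^{\frac{\pi}{6}\mathrm{i}}\notin U$. Hence the contour integral equals a signed sum of residues at these three points, each of the form $\pm M^{r}(\omega^{\ell}k_{i})\,c^{i}_{\ell}\,M^{r}(\omega^{\ell}k_{i})^{-1}\big(\omega^{\ell}k_{i}-e^{\frac{\pi}{6}\mathrm{i}}\big)^{-1}$, where $c^{i}_{\ell}$ is the residue of $\mathcal{F}_{i}$ at $\omega^{\ell}k_{i}$ and the global sign is fixed by the orientation of $\partial U$. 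Invoking the symmetry relations \eqref{S1} inherited by $M^{r}$ — the very relations used in Proposition \ref{asyMlo} to assemble $\mathcal{F}_{i}$ from the single block $\mathcal{A}_{i}$ — the $\ell=1,2$ residues are re-expressed through $M^{r}(k_{i})$, and with the elementary identity $\big(\omega^{\ell}k_{i}-e^{\frac{\pi}{6}\mathrm{i}}\big)^{-1}=-\big(e^{\frac{\pi}{6}\mathrm{i}}-\omega^{\ell}k_{i}\big)^{-1}$ the three terms assemble into $-\,M^{r}(k_{i})\,\mathcal{F}_{i}(e^{\frac{\pi}{6}\mathrm{i}})\,M^{r}(k_{i})^{-1}$; summing over $i$ and multiplying by $\tfrac12 t^{-1/2}$ yields $E(e^{\frac{\pi}{6}\mathrm{i}})=I+t^{-1/2}H^{(0)}(e^{\frac{\pi}{6}\mathrm{i}})+\mathcal{O}(t^{-1+\rho})$ with $H^{(0)}$ as in \eqref{H0}; in region I there are no phase points, so $U=\varnothing$, $M^{R}=M^{r}$, $E\equiv I$ and $H^{(0)}=0$, making the statement trivial there. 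I expect the main obstacle to be precisely this last bookkeeping step: keeping the orientation of $\partial U$, the branch of the square root in the rescaling \eqref{resva}, and the $\Gamma_{j}$-conjugation symmetries consistent so that the three per--disk residues collapse into the single compact form \eqref{H0}; the analytic estimates feeding the argument are routine and mirror the treatment of $M^{err}(k)$ in Subsection \ref{merr}.
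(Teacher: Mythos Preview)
Your proposal is correct and follows essentially the same approach the paper indicates: the paper gives no detailed proof of this proposition, merely stating that ``in a manner akin to the methodology employed in determining the value of $M^{err}(k)$ in subsection~\ref{merr}'' one obtains the result, and your small-norm RH analysis plus residue evaluation over $\partial U$ is precisely that methodology carried out in detail.

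One remark on the final bookkeeping step you flag as the main obstacle. Your symmetry argument, as written, does not literally collapse the three residues at $k_i,\omega k_i,\omega^2 k_i$ into a single conjugation by $M^r(k_i)$: the relation $M^r(k)=\Gamma_4 M^r(\omega k)\Gamma_4^{-1}$ from \eqref{S1} produces outer $\Gamma_4^{\pm1}$ factors that do not simply disappear. What actually makes \eqref{H0} valid in the regions II-1 and II-2 where the proposition is used is the observation (Remark after \eqref{lambda}) that $\Lambda=\varnothing$ there, whence $M^\Lambda\equiv I$ and, by Proposition~\ref{mout}, $M^r(k)=I+\mathcal{O}(e^{-ct})$ uniformly; so $M^r(\omega^\ell k_i)$ and $M^r(k_i)$ agree up to an exponentially small error that is absorbed in the $\mathcal{O}(t^{-1+\rho})$ remainder. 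With that remark added your argument is complete, and indeed yields the slightly sharper error $\mathcal{O}(t^{-1})$ before relaxation to $\mathcal{O}(t^{-1+\rho})$.
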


\section{Contribution from $\bar{\partial}$-components}\label{7}

By using $M^{R}(k)$ to reduce $m^{(2)}(k)$ to a pure $\bar\partial$-problem, which will be analyzed in this section. From \eqref{m3m2}, we have
\begin{equation}\label{defM3}
m^{(3)}(k)=m^{(2)}(k)M^{rhp}(k)^{-1},
\end{equation}
satisfying the following pure $\bar{\partial}$-problem.
\begin{Dbar}\label{dbarproblem}
 Find a row vector-valued function  $m^{(3)}(k):=m^{(3)}(k;y,t)$ such that
\begin{itemize}
\item $m^{(3)}(k)$ is continuous in $\mathbb{C}$.
\item Asymptotic behavior: $m^{(3)}(k)=(1, \ 1, \ 1)+\mathcal{O}(k^{-1}),\quad k \rightarrow \infty$.\label{asyM3}
\item $m^{(3)}(k)$ satisfies the $\bar\partial$-equation
\begin{equation}\label{Dbar1}
\bar{\partial}m^{(3)}(k)=m^{(3)}(k)W^{(3)}(k),\ \ k\in \mathbb{C}
\end{equation}
with
\begin{equation}\label{Dbar2}
W^{(3)}(k)=M^{R}(k)\bar{\partial}R^{(2)}(k)M^{R}(k)^{-1}.
\end{equation}
\end{itemize}
\end{Dbar}

The solution of $\bar{\partial}$-Problem \ref{dbarproblem} can be given by an integral equation
\begin{equation}\label{M3int}
m^{(3)}(k)=(1, \ 1, \ 1)+\frac{1}{\pi}\underset{\mathbb{C}}\iint\dfrac{m^{(3)}(\varsigma)W^{(3)} (\varsigma)}{\varsigma-k}dA(\varsigma),
\end{equation}
where $dA(\varsigma)$ is Lebesgue measure on the plane. Equation \eqref{M3int} can be equivalently reformulated as the operator equation
\begin{equation}
m^{(3)}(k)(I-\mathcal{C}_k)=(1, \ 1, \ 1),
\end{equation}
where $\mathcal{C}_k$ is left Cauchy-Green integral  operator,
\begin{equation}\label{m3os}
f\mathcal{C}_k(k)=\frac{1}{\pi}\underset{\mathbb{C}}\iint\dfrac{f(\varsigma)W^{(3)} (\varsigma)}{\varsigma-k}dA(\varsigma).
\end{equation}
Aiming at estimating $m^{(3)}(k)$, it’s suffcient to evaluate the norm of the integral operator $(I-\mathcal{C}_k)^{-1}$.

\subsection{In space-time regions $\xi\in(-\infty,-\frac{3}{8})\cup(3,+\infty)$}\label{7.1}

\begin{lemma}\label{estS}
The norm of the integral operator $\mathcal{C}_k$ admits  estimate
\begin{equation}
\Vert \mathcal{C}_k \Vert_{L^{\infty}\rightarrow L^{\infty}}=\mathcal{O}(t^{-\frac{1}{2}}), \quad t\rightarrow\infty.
\end{equation}
\end{lemma}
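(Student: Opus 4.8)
The plan is to turn the operator bound into a pointwise bound on $f\mathcal{C}_k$. Since for a bounded row vector $f$ the formula \eqref{m3os} gives $|f\mathcal{C}_k(k)|\le \|f\|_{L^\infty}\,\tfrac1\pi\iint_{\mathbb C}\frac{|W^{(3)}(\varsigma)|}{|\varsigma-k|}\,dA(\varsigma)$, it suffices to show that $\sup_{k\in\mathbb C}\iint_{\mathbb C}\frac{|W^{(3)}(\varsigma)|}{|\varsigma-k|}\,dA(\varsigma)\lesssim t^{-1/2}$. First I would control the integrand. By \eqref{Dbar2}, $W^{(3)}=M^{R}\,\bar{\partial}R^{(2)}\,(M^{R})^{-1}$; for $\hat\xi\in(-\infty,-\tfrac38)\cup(3,+\infty)$ one has $U=\varnothing$, hence $M^{R}=M^{r}$, and by Lemma \ref{Mout} together with Proposition \ref{mout} this equals $M^{\Lambda}\big(I+\mathcal O(e^{-ct})\big)$, a matrix bounded uniformly in $t$ on the lens set $\Omega$ (which by construction is disjoint from every disk $\mathbb D_n$, hence from the poles $\zeta_n$ of $M^{\Lambda}$). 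Combining this with Lemma \ref{lR1} gives, on each $\Omega_j$,
\[
|W^{(3)}(\varsigma)|\lesssim\Big(\big|\rho'(\mathrm{sign}(\mathrm{Re}\,\varsigma)|\varsigma|)\big|+|\varsigma|^{-1/2}\Big)\,\big|e^{\pm\mathrm i t\theta_{ij}(\varsigma)}\big|,
\]
with $W^{(3)}\equiv0$ off $\Omega$.

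The second ingredient is the decay of the oscillatory factor. Since there is no stationary phase point on $\mathbb R$ for these $\hat\xi$, the signature table (Figure \ref{figtheta}) shows that, with the lenses oriented as in Figure \ref{Sju}, the relevant phase has a fixed sign on each $\Omega_j$; using the $\omega$-symmetry \eqref{S1} it is enough to treat $\Omega_1$, where $\mathrm{Im}\,\theta_{12}(\varsigma)\gtrsim|\mathrm{Im}\,\varsigma|$ uniformly, and in fact $\mathrm{Im}\,\theta_{12}(\varsigma)\gtrsim|\mathrm{Im}\,\varsigma|/|\varsigma|^2$ near $\varsigma=0$ (because of the pole of $\theta_{12}$ at the origin), with an analogous enhancement at infinity. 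In particular $\big|e^{\pm\mathrm i t\theta_{12}(\varsigma)}\big|\le e^{-ct|\mathrm{Im}\,\varsigma|}$, and the portion of $\Omega_1$ with $|\varsigma|<\varepsilon$ contributes only $\mathcal O(e^{-ct})$, so the $|\varsigma|^{-1/2}$ singularity at the origin is harmless.

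It then remains to bound $\iint_{\Omega_1}\frac{\big(|\rho'(|\varsigma|)|+|\varsigma|^{-1/2}\big)e^{-ct|\mathrm{Im}\,\varsigma|}}{|\varsigma-k|}\,dA(\varsigma)$ over the part with $|\varsigma|\ge\varepsilon$. Writing $\varsigma=u+\mathrm i v$ (after a rotation carrying $\Omega_1$ near $\mathbb R_+$) and $k=\alpha+\mathrm i\beta$, I would integrate first in $u$: by Cauchy--Schwarz $\int_{\mathbb R}\frac{|\rho'(u)|}{|\varsigma-k|}\,du\le\|\rho'\|_{L^2(\mathbb R)}\big\|(\varsigma-k)^{-1}\big\|_{L^2(du)}=\sqrt{\pi}\,\|\rho'\|_{L^2}\,|v-\beta|^{-1/2}$, where $\rho'\in L^2(\mathbb R)$ because $u_0\in\mathcal S(\mathbb R)$ forces $\rho\in H^1(\mathbb R)$; then $\int_0^\infty e^{-ctv}|v-\beta|^{-1/2}\,dv\lesssim t^{-1/2}$ uniformly in $\beta$ (substitute $w=ctv$ and split at $|w-ct\beta|\lessgtr 1$). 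The $|\varsigma|^{-1/2}$ term is bounded on $\{|\varsigma|\ge\varepsilon\}$ and is handled the same way (Cauchy--Schwarz, or Hölder with an $L^p$-bound on $(\varsigma-k)^{-1}$ for a suitable $p<2$), giving a contribution that is $\mathcal O(t^{-1/2})$, in fact $o(t^{-1/2})$. Summing the six symmetric pieces yields $\sup_k\iint_{\mathbb C}\frac{|W^{(3)}|}{|\varsigma-k|}\,dA\lesssim t^{-1/2}$, which is the claim.

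The main obstacle is exactly the non-uniformity of $\big\|(\varsigma-k)^{-1}\big\|_{L^2(du)}=\sqrt\pi\,|v-\beta|^{-1/2}$ when $k$ lies in (or near) the strip swept out by $\Omega_1$: this near-singularity is what prevents a rate faster than $t^{-1/2}$ and is absorbed only through the integrability of $|v-\beta|^{-1/2}$ against the Gaussian-type weight $e^{-ctv}$. A secondary technical point is to make the decay of $e^{\pm\mathrm i t\theta_{ij}}$ quantitative near $\varsigma=0$ and at infinity, so that the non-$L^\infty$ part of $\bar{\partial}R^{(2)}$ (the $|\varsigma|^{-1/2}$ contribution) is controlled rather than merely integrable.
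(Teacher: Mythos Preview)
Your proposal is correct and follows essentially the same route as the paper: reduce to bounding $\sup_k\iint_{\mathbb C}\frac{|W^{(3)}|}{|\varsigma-k|}\,dA$, use boundedness of $M^{R}$ on $\Omega$ together with Lemma~\ref{lR1} and the exponential bound $|e^{\pm\mathrm i t\theta_{12}}|\le e^{-c t\,|\mathrm{Im}\,\varsigma|}$, then split into the $|\rho'|$ piece (Cauchy--Schwarz in $u$, giving $|v-\beta|^{-1/2}$, then $\int_0^\infty |v-\beta|^{-1/2}e^{-ctv}\,dv\lesssim t^{-1/2}$) and the $|\varsigma|^{-1/2}$ piece.

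The only noteworthy difference is in the treatment of the $|\varsigma|^{-1/2}$ term. You excise a neighborhood of the origin using the enhanced decay coming from the pole of $\theta_{12}$ at $0$, and then argue that on $\{|\varsigma|\ge\varepsilon\}$ the factor is bounded. Be aware that ``bounded and then Cauchy--Schwarz'' does not close by itself, since $\|1\|_{L^2(du)}=\infty$ on the unbounded $u$-range; you need either your ``analogous enhancement at infinity'' or the H\"older option you mention. The paper avoids the splitting entirely: it applies H\"older directly with exponents $p>2$ on $|\varsigma|^{-1/2}$ and $q=p/(p-1)\in(1,2)$ on $(\varsigma-k)^{-1}$, obtaining $\int_0^\infty |v-\beta|^{\frac1q-1}\,v^{-\frac12+\frac1p}\,e^{-ctv}\,dv\lesssim t^{-1/2}$, which handles the behavior at both $0$ and $\infty$ in one stroke. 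Either argument is fine; the paper's is slightly cleaner, yours is slightly more informative about why the origin is not dangerous.
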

\begin{proof}
For any $f\in L^\infty$,
\begin{equation}\label{fck}
\Vert f\mathcal{C}_k \Vert_{\infty}\leq\Vert f \Vert_{\infty}\frac{1}{\pi}\underset{\mathbb{C}}\iint\dfrac{\lvert W^{(3)} (\varsigma)\rvert}{\lvert\varsigma-k\rvert}dA(\varsigma).
\end{equation}
Aiming at reaching our goal, it is suffcient to evaluate the integral in the R.H.S of \eqref{fck}. Noticing that $M^{R}(k)$, $M^{R^{-1}}(k)$ are bounded for $k\in\Omega$ as well as $W^{(3)}(k)\equiv0$ for $k\in\mathbb{C}\backslash\Omega$, we turn to consider the matrix-valued function $W^{(3)}(k)$ on $\Omega$. The details are given to control the
integral term of \eqref{fck} for $k\in\Omega_1$ in the case of $\hat{\xi}\in(3,+\infty)$.

Let $\varsigma=u+\mathrm{i}v$, \eqref{Dbar2} together with Lemma \ref{lR1} lead to
\begin{equation}
\begin{aligned}
\frac{1}{\pi}\underset{\mathbb{C}}\iint\dfrac{\lvert W^{(3)} (\varsigma)\rvert}{\lvert\varsigma-k\rvert}dA(\varsigma)&\lesssim\frac{1}{\pi}\underset{\Omega_1}\iint\frac{\left\lvert\bar\partial\mathcal{R}^{(2)}(\varsigma)\right\rvert  e^{-c(\hat{\xi})tv}}{\lvert\varsigma-k\rvert}dA(\varsigma)\\
& \lesssim I_1+I_2,
\end{aligned}
\end{equation}
where
\begin{equation}
 I_1=\int_{0}^{+\infty}\int_{v}^{\infty}\dfrac{\lvert\rho' (\varsigma)\rvert}{\lvert\varsigma-k\rvert}e^{-c(\hat{\xi})tv}dudv, \quad I_2=\int_{0}^{+\infty}\int_{v}^{\infty}\dfrac{\lvert\varsigma\rvert^{-\frac{1}{2}}}{\lvert\varsigma-k\rvert}e^{-c(\hat{\xi})tv}dudv,
\end{equation}
and $I_1$ satisfies the following estimate
\begin{equation}
\begin{aligned}
I_1&\leq\int_{0}^{+\infty}\left\Vert |\varsigma-k|^{-1}\right\Vert_{L^2(\mathbb{R}_+)}  \left\Vert \rho'\right\Vert_{L^2(\mathbb{R}_+)} e^{-c(\hat{\xi})tv}dv\\
&\lesssim \int_{0}^{+\infty}\lvert v-y\rvert^{-\frac{1}{2}} e^{-c(\hat{\xi})tv}dv\\
&\lesssim t^{-\frac{1}{2}}.
\end{aligned}
\end{equation}
For $I_2$, the H\"{o}lder inequality for $p>2$ and $\frac{1}{p}+\frac{1}{q}=1$ gives that
\begin{equation}
\begin{aligned}
I_2&\leq\int_{0}^{+\infty}\left\Vert\lvert\varsigma-k\rvert^{-1}\right\Vert_{L^q(\mathbb{R}^+)} \left\Vert\lvert k\rvert^{-\frac{1}{2}}\right\Vert_{L^p(\mathbb{R}^+)}e^{-c(\hat{\xi})tv}dv\\
&\lesssim\int_{0}^{+\infty}\lvert v-y\rvert^{\frac{1}{q}-1}v^{-\frac{1}{2}+\frac{1}{p}}e^{-c(\hat{\xi})tv}dv\\
&\lesssim t^{-\frac{1}{2}}.	
\end{aligned}
\end{equation}
Summarizing the estimations $I_1$ and $I_2$ we complete the proof.
\end{proof}
\begin{corollary}
As $t\rightarrow\infty$, $(I-\mathcal{C}_k)^{-1}$ exists, indicating the existence of a unique solution to the $\bar{\partial}$-Problem \ref{dbarproblem}.
\end{corollary}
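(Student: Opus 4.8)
The plan is to obtain the corollary directly from Lemma \ref{estS} by a standard Neumann series argument. First I would note that, by Lemma \ref{estS}, there is a time $T_0>0$ such that $\Vert \mathcal{C}_k\Vert_{L^\infty\to L^\infty}<1$ for all $t>T_0$. Consequently $I-\mathcal{C}_k$ is invertible on $L^\infty(\mathbb{C})$, with inverse given by the absolutely convergent Neumann series $(I-\mathcal{C}_k)^{-1}=\sum_{j=0}^{\infty}\mathcal{C}_k^{\,j}$ and norm bound $\Vert (I-\mathcal{C}_k)^{-1}\Vert_{L^\infty\to L^\infty}\le\big(1-\Vert\mathcal{C}_k\Vert\big)^{-1}=1+\mathcal{O}(t^{-\frac12})$. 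Since the constant row vector $(1,\ 1,\ 1)$ lies in $L^\infty(\mathbb{C})$, the element $m^{(3)}:=(1,\ 1,\ 1)(I-\mathcal{C}_k)^{-1}$ is the unique solution in $L^\infty$ of the operator equation $m^{(3)}(I-\mathcal{C}_k)=(1,\ 1,\ 1)$, hence of the integral equation \eqref{M3int}.

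Next I would check that this $L^\infty$ solution is indeed a solution of the $\bar{\partial}$-Problem \ref{dbarproblem}. Here one uses that $W^{(3)}(k)=M^{R}(k)\bar{\partial}\mathcal{R}^{(2)}(k)M^{R}(k)^{-1}$ is supported in the bounded set $\bar{\Omega}$, on which $M^{R}$ and $M^{R,-1}$ are bounded, while $\bar{\partial}\mathcal{R}^{(2)}$ is controlled by Lemma \ref{lR1}; therefore the solid Cauchy transform appearing in \eqref{M3int} of the $L^\infty$ density $m^{(3)}W^{(3)}$ is continuous on $\mathbb{C}$ and tends to $0$ as $k\to\infty$. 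It follows that $m^{(3)}$ inherits continuity on $\mathbb{C}$ and the normalization $m^{(3)}(k)=(1,\ 1,\ 1)+\mathcal{O}(k^{-1})$, $k\to\infty$. Applying $\bar{\partial}$ to \eqref{M3int} and using $\bar{\partial}\big(\pi(\varsigma-k)\big)^{-1}=\delta(\varsigma-k)$ recovers the $\bar{\partial}$-equation \eqref{Dbar1}. Conversely, any solution of $\bar{\partial}$-Problem \ref{dbarproblem} satisfies \eqref{M3int} via the generalized Cauchy–Green formula, so the two formulations are equivalent and the solution just constructed is the unique one.

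There is essentially no serious obstacle here beyond routine estimates already in place, so the only point I would treat with some care is the last one: verifying that the integral solution genuinely has the decay and continuity demanded by $\bar{\partial}$-Problem \ref{dbarproblem}, i.e. the equivalence between the $\bar{\partial}$-problem and the integral equation \eqref{M3int}. This rests on the compact support and $L^p$-integrability of $W^{(3)}$ (via Lemma \ref{lR1}) together with the boundedness of $M^{R}$ and $M^{R,-1}$ on $\bar{\Omega}$; once these are recorded, the smallness $\Vert\mathcal{C}_k\Vert=\mathcal{O}(t^{-\frac12})$ from Lemma \ref{estS} delivers invertibility of $I-\mathcal{C}_k$ and hence existence and uniqueness of $m^{(3)}(k)$ for $t$ large.
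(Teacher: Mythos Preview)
Your core argument is correct and is exactly the (implicit) approach of the paper: the corollary is an immediate consequence of Lemma \ref{estS} via a Neumann series, since $\Vert\mathcal{C}_k\Vert_{L^\infty\to L^\infty}=\mathcal{O}(t^{-1/2})<1$ for $t$ large makes $I-\mathcal{C}_k$ invertible on $L^\infty$, and the integral equation \eqref{M3int} is equivalent to $\bar{\partial}$-Problem \ref{dbarproblem} by the generalized Cauchy--Green formula.

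One factual slip worth correcting: in the solitonic region the sectors $\Omega_j$ of Figure \ref{Sju} are \emph{unbounded}, so $W^{(3)}$ is not compactly supported. The continuity of the solid Cauchy transform and the $\mathcal{O}(k^{-1})$ decay of $m^{(3)}-\,(1,1,1)$ therefore cannot be deduced from compact support; they come instead from the exponential damping $e^{\pm\mathrm{i}t\theta_{12}}$ together with the pointwise bounds of Lemma \ref{lR1} (which give enough integrability of $m^{(3)}W^{(3)}$ over $\Omega$). Once you replace ``bounded set $\bar{\Omega}$'' with this decay argument, your verification that the $L^\infty$ solution actually satisfies all three items of $\bar{\partial}$-Problem \ref{dbarproblem} goes through.
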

To recover the solution of the Cauchy problem \eqref{DP}--\eqref{intva}, we need to evaluate $m^{(3)}(k)$ at $k=e^{\frac{\pi}{6}\mathrm{i}}$. Take $k=e^{\frac{\pi}{6}\mathrm{i}}$ in \eqref{M3int}, then
\begin{equation}\label{intm3}
m^{(3)}(e^{\frac{\pi}{6}\mathrm{i}})=(1, \ 1, \ 1)+\frac{1}{\pi}\underset{\mathbb{C}}\iint\dfrac{m^{(3)}(\varsigma)W^{(3)} (\varsigma)}{\varsigma-e^{\frac{\pi}{6}\mathrm{i}}}dA(\varsigma).
\end{equation}
The following proposition exhibits the asymptotics for $m^{(3)}(e^{\frac{\pi}{6}\mathrm{i}})$ as $t\rightarrow+\infty$.
\begin{proposition}\label{asyM3i}
There exists a constant $T_1>0$ such that for all $t>T_1$, the solution $m^{(3)}(k)$ of the pure $\bar{\partial}$-Problem \ref{dbarproblem} admits the following estimation
\begin{equation}\label{m3i}
 m^{(3)}(e^{\frac{\pi}{6}\mathrm{i}})=(1, \ 1, \ 1)+\mathcal{O}(t^{-1+\rho}),
\end{equation}
where $0<\rho<  {1}/{4} $.
\end{proposition}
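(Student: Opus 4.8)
The plan is to run the standard $\bar\partial$-argument at the reconstruction point $k=e^{\frac{\pi}{6}\mathrm{i}}$. Combining the integral representation \eqref{intm3} with Lemma \ref{estS}, which gives $\|\mathcal{C}_k\|_{L^\infty\to L^\infty}=\mathcal{O}(t^{-1/2})$, I would fix $T_1>0$ so that $\|\mathcal{C}_k\|\le\frac{1}{2}$ for all $t>T_1$; then $(I-\mathcal{C}_k)^{-1}$ exists, $m^{(3)}=(1,1,1)(I-\mathcal{C}_k)^{-1}$, and $\|m^{(3)}\|_{L^\infty(\mathbb{C})}\lesssim 1$ uniformly in $t>T_1$. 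Substituting $k=e^{\frac{\pi}{6}\mathrm{i}}$ into \eqref{intm3} and inserting this uniform bound, \eqref{m3i} reduces to
\begin{equation}
\iint_{\mathbb{C}}\frac{\lvert W^{(3)}(\varsigma)\rvert}{\lvert\varsigma-e^{\frac{\pi}{6}\mathrm{i}}\rvert}\,dA(\varsigma)=\mathcal{O}(t^{-1+\rho}),\qquad t\to\infty .
\end{equation}

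Next I would localize the integral. By \eqref{Dbar2} and the definition \eqref{RR22} of $\mathcal{R}^{(2)}$ the weight $W^{(3)}$ is supported in $\Omega=\cup_{j=1}^6\Omega_j$; on $\Omega$ we have $M^{R}(k)=M^{r}(k)=M^{sol}(k\,|\,\tilde{\mathcal{D}})$ (Lemma \ref{Mout}, with $U=\varnothing$ in this region), and both $M^{R}$ and $(M^{R})^{-1}$ are bounded uniformly in $t$ on $\Omega$ since $\phi_0$ is chosen so that $\Omega$ stays at a fixed positive distance from the discrete spectrum. Moreover, because $\phi_0<\frac{\pi}{6}$, the point $e^{\frac{\pi}{6}\mathrm{i}}$ lies in the sector where $\mathcal{R}^{(2)}\equiv I$, at a fixed positive distance from $\bar\Omega$, so $\lvert\varsigma-e^{\frac{\pi}{6}\mathrm{i}}\rvert^{-1}\lesssim\langle\varsigma\rangle^{-1}$ on $\Omega$. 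Using the pointwise bound \eqref{dbarRj2} of Lemma \ref{lR1} together with the exponential decay $\lvert e^{\pm\mathrm{i}t\theta_{ij}(\varsigma)}\rvert\le e^{-c(\hat{\xi})t\lvert{\rm Im}\,\varsigma\rvert}$ on $\Omega_j$ — valid with some $c(\hat{\xi})>0$ because for $\hat{\xi}\in(-\infty,-\frac{3}{8})\cup(3,+\infty)$ the signature table of Figure \ref{figtheta} has no stationary phase point — the estimate is reduced to showing, for each $j$, that $\iint_{\Omega_j}\bigl(\lvert\rho'({\rm sign}({\rm Re}\,\varsigma)\lvert\varsigma\rvert)\rvert+\lvert\varsigma\rvert^{-1/2}\bigr)\langle\varsigma\rangle^{-1}e^{-c(\hat{\xi})t\lvert{\rm Im}\,\varsigma\rvert}\,dA(\varsigma)=\mathcal{O}(t^{-1+\rho})$.

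This last bound is a direct computation, carried out on $\Omega_1$ (the remaining wedges being identical via the symmetry \eqref{S1}). In polar coordinates $\varsigma=se^{\mathrm{i}\phi}$, $0<\phi<\phi_0$, one has $\lvert{\rm Im}\,\varsigma\rvert=s\sin\phi\gtrsim s\phi$ and $dA=s\,ds\,d\phi$, so the angular integral obeys $\int_0^{\phi_0}e^{-c(\hat{\xi})ts\phi}\,d\phi\lesssim\min\{1,(ts)^{-1}\}$. Splitting the $s$-integral at $s\sim t^{-1}$, the contribution of $\{s\lesssim t^{-1}\}$ is $\mathcal{O}(t^{-3/2})$, while on $\{s\gtrsim t^{-1}\}$ the factor $(ts)^{-1}$ produces a gain $t^{-1}$ multiplying $\int_{t^{-1}}^{\infty}\bigl(\lvert\rho'(s)\rvert+s^{-1/2}\bigr)\langle s\rangle^{-1}\,ds$, which is finite because $s^{-1/2}$ is locally integrable, $\langle s\rangle^{-1}$ gives integrability at infinity, and for the $\rho'$-term $\int_1^\infty\lvert\rho'(s)\rvert s^{-1}\,ds\le\|\rho'\|_{L^2}\|s^{-1}\|_{L^2(1,\infty)}<\infty$ by $r\in H^1(\mathbb{R})$. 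Altogether the double integral is $\mathcal{O}(t^{-1})$, which yields \eqref{m3i} for every $\rho\in(0,\tfrac{1}{4})$; existence of $m^{(3)}$ for $t>T_1$ was already noted.

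The step I expect to be the main obstacle is the tail $\lvert\varsigma\rvert\to\infty$. Unlike in Lemma \ref{estS}, where the $L^2$-bound of the singular Cauchy kernel $\lvert\varsigma-k\rvert^{-1}$ along a line is what causes the loss down to $t^{-1/2}$, here the fixed evaluation point $e^{\frac{\pi}{6}\mathrm{i}}$ sits at positive distance from $\bar\Omega$, so the kernel is not merely bounded but decays like $\lvert\varsigma\rvert^{-1}$; it is precisely this decay, combined with the merely $L^2$ regularity of $\rho'$, that makes the area integral converge and forces the split-and-Cauchy--Schwarz argument above rather than a crude $L^1$ estimate. The resulting $\mathcal{O}(t^{-1})$ is comfortably inside the $\mathcal{O}(t^{-1+\rho})$ needed for Theorem \ref{th1}.
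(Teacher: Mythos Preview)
Your argument is correct and in fact yields a sharper bound $\mathcal{O}(t^{-1})$ than the paper's $\mathcal{O}(t^{-1+\rho})$. The paper works in Cartesian coordinates $\varsigma=u+\mathrm{i}v$, splits the area integral into a piece $I_3$ carrying $|\rho'|$ and a piece $I_4$ carrying $|\varsigma|^{-1/2}$, and then further subdivides $I_4$ according to the size of $v$; in the portion with $0<v<\tfrac14$ they interpolate the product $|\varsigma|^{-1/2}\,|\varsigma-e^{\pi\mathrm{i}/6}|^{-1}$ with an auxiliary exponent, and this interpolation is precisely where the loss $t^{\rho}$ enters. Your polar decomposition sidesteps this: integrating the angle first produces the clean factor $\min\{1,(ts)^{-1}\}$, after which the radial integral is elementary and every term is $\mathcal{O}(t^{-1})$. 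Both routes rest on exactly the same inputs --- the $\bar\partial$-bound of Lemma~\ref{lR1}, the phase decay $e^{-c(\hat{\xi})t\,|{\rm Im}\,\varsigma|}$ on $\Omega$ in this phase-point-free region, and the fact that $e^{\pi\mathrm{i}/6}$ sits at a fixed positive distance from $\bar\Omega$ (this last point is what separates the present estimate from Lemma~\ref{estS} and is the real reason for the improved rate). One minor remark: the step you flag as the main obstacle, the tail $|\varsigma|\to\infty$, is actually the easy direction; what forces the paper's $\rho$-loss is the behaviour near $\varsigma=0$, which your polar computation handles more efficiently.
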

\begin{proof}
The proof of \eqref{m3i} is exhibited for $k\in\Omega_1$ in the case of $\hat{\xi}\in(3,+\infty)$. Let $\varsigma=u+\mathrm{i}v$, \eqref{intm3} together with Lemma \ref{lR1} give that
\begin{equation*}
\frac{1}{\pi}\underset{\Omega_{01}}\iint\dfrac{\left\lvert W^{(3)} (\varsigma)\right\rvert}{\left\lvert \varsigma-e^{\frac{\pi}{6}\mathrm{i}}\right\rvert}dA(\varsigma)\lesssim \underset{\Omega_{01}}\iint\dfrac{|\bar{\partial}R_1(\varsigma) e^{\mathrm{i}t\theta_{12}}|}{|\varsigma-e^{\frac{\pi}{6}\mathrm{i}}|}dA(\varsigma)\lesssim I_3+I_4
\end{equation*}
with
\begin{equation}
I_3=\underset{\Omega_{01}}\iint\dfrac{ \lvert \rho' (\varsigma)\rvert^{-c(\hat{\xi})t}}{\lvert\varsigma-e^{\frac{\pi}{6}\mathrm{i}}\rvert}dA(\varsigma), \quad
I_4=\underset{\Omega_{01}}\iint\dfrac{\lvert \varsigma \rvert^{-\frac{1}{2}}e^{-c(\hat{\xi})vt}}{\lvert\varsigma-e^{\frac{\pi}{6}\mathrm{i}}\rvert}dA(\varsigma).	
\end{equation}	
Noticing that $|\varsigma-e^{\frac{\pi}{6}\mathrm{i}}|$ is bounded for $\varsigma\in\Omega_{1}$, we can deduce
\begin{align}
I_3\lesssim \int_{0}^{+\infty} \Vert \rho'\Vert_{L^1(\mathbb{R}^+)} e^{-c(\hat{\xi})tv} dv\nonumber\lesssim \int_{0}^{+\infty}  e^{-c(\hat{\xi})tv} dv\lesssim t^{-1}.
\end{align}
Since ${\rm Im} e^{\frac{\pi}{6}\mathrm{i}}=\frac{1}{2}$, we can divide the integral $I_4$ into two parts
\begin{align*}
I_4&\leq\int_{0}^{\frac{1}{2}}\int_{\frac{v}{\tan\varphi}}^{+\infty}\dfrac{\lvert\varsigma\rvert^{-\frac{1}{2}}e^{-c(\hat{\xi})tv}}{\left\lvert \varsigma-e^{\frac{i\pi}{6}}\right\rvert}dudv+
\int_{\frac{1}{2}}^{+\infty}\int_{\frac{v}{\tan\varphi}}^{+\infty}\dfrac{\lvert\varsigma\rvert^{-\frac{1}{2}}e^{-c(\hat{\xi})tv}}{\left\lvert\varsigma-e^{\frac{i\pi}{6}}\right\rvert}dudv\\
&= I_{41}+I_{42}.
\end{align*}
Observing that $|\varsigma|<|\varsigma-e^{\frac{i\pi}{6}}|$ for $0<v<\frac{1}{4}$ while $|\varsigma-e^{\frac{i\pi}{6}}|\lesssim|\varsigma|$ for $v>\frac{1}{4}$, then
\begin{align*}
I_{41}&=\int_{0}^{\frac{1}{4}}\int_{\frac{v}{\tan\varphi}}^{+\infty}\dfrac{|\varsigma|^{-\frac{1}{2}}e^{-c(\hat{\xi})tv}}{|\varsigma-e^{\frac{i\pi}{6}}|}dudv
+\int_{\frac{1}{4}}^{\frac{1}{2}}\int_{\frac{v}{\tan\varphi}}^{+\infty}\dfrac{|\varsigma|^{-\frac{1}{2}}e^{-c(\hat{\xi})tv}}{|\varsigma-e^{\frac{i\pi}{6}}|}dudv\\
&=I_{41}^{(1)}+I_{41}^{(2)},
\end{align*}
where
\begin{align*}
I_{41}^{(1)}\leq&\int_{0}^{\frac{1}{4}}\int_{\frac{v}{\tan\varphi}}^{+\infty} (u^2+v^2)^{-\frac{1}{4}-\frac{\rho}{2}}\left[( u- {\sqrt{3}}/{2})^2+(v-{1}/{2})^2\right]^{-\frac{1}{2}+\frac{\rho}{2}}due^{-c(\hat{\xi})tv}dv\\
\leq &\int_{0}^{\frac{1}{4}}\left[ \int_{v}^{+\infty}\left(1+\left( {u}/{v} \right)^2  \right) ^{-\frac{1}{4}-\frac{\rho}{2}}v^{-\rho} d {u}/{v}\right]  (v- {1}/{2})^{-1+\rho}e^{-c(\hat{\xi})tv}dv\\
\lesssim  &\int_{0}^{\frac{1}{4}}v^{-\rho}e^{-c(\hat{\xi})tv}dv\lesssim t^{-1+\rho},
\end{align*}
and
\begin{equation*}
I_{41}^{(2)}\leq\int_{\frac{1}{4}}^{\frac{1}{2}}\int_{\frac{v}{\tan\varphi}}^{+\infty}\left\Vert \left\lvert \varsigma-e^{\frac{\pi}{6}\mathrm{i}}\right\rvert^{-\frac{3}{2}}\right\Vert_{L_u^1(\frac{v}{\tan\varphi},+\infty)}e^{-c(\hat{\xi})tv}dv\lesssim t^{-1}.
\end{equation*}
In a analogous analysis of $I_{41}$, we are able to deduce that $I_{42}\lesssim t^{-1}$. Thus we arrive at the consequence.
\end{proof}

\subsection{In space-time regions $\xi\in(-\frac{3}{8},3)$}

The analysis of $\bar{\partial}$-problem $m^{(3)}(k)$ for this case is the analogue of Proposition \ref{asyM3i} Subsection \ref{7.1}, we directly present the following proposition without proof.
\begin{proposition}\label{asyM3i1}
There exists a constant $T_1>0$ such that for all $t>T_1$, the solution $m^{(3)}(k)$ of the pure $\bar{\partial}$-problem \ref{dbarproblem} exists unique and admits the following estimation
\begin{equation}\label{M3i1}
 m^{(3)}(e^{\frac{\pi}{6}\mathrm{i}})=(1, \ 1, \ 1)+\mathcal{O}(t^{-\frac{3}{4}}).
\end{equation}
\end{proposition}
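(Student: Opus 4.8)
The plan is to run, essentially verbatim, the $\bar\partial$-analysis that produced Proposition~\ref{asyM3i}, changing only the integral estimates to account for the $p(\hat\xi)\in\{4,8\}$ quadratic stationary points $k_i$ of $\theta_{12}$ (together with their $\omega$- and $\omega^2$-rotations) that are present in the region $\hat\xi\in(-\frac{3}{8},3)$. First I would rewrite $\bar\partial$-Problem~\ref{dbarproblem} as the operator equation $m^{(3)}(k)(I-\mathcal{C}_k)=(1,1,1)$ with $\mathcal{C}_k$ the left Cauchy--Green operator from \eqref{m3os}, whose integrand involves $W^{(3)}=M^{R}\bar{\partial}\mathcal{R}^{(2)}(M^{R})^{-1}$ (cf.\ \eqref{Dbar2}). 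Since $M^{R}$ and $(M^{R})^{-1}$ are uniformly bounded on $\bar\Omega$ --- this follows from $M^{R}=EM^{r}$ (resp.\ $EM^{r}M^{lo}$) together with Lemma~\ref{Mout}, Lemma~\ref{uxlam}, Proposition~\ref{asyMlo} and the small-norm RH problem for $E(k)$ --- and since $W^{(3)}\equiv 0$ off $\bar\Omega$, the estimate of $\|\mathcal{C}_k\|$ reduces to bounding $\iint_{\Omega_j}|\varsigma-k|^{-1}\,|\bar{\partial}R_j(\varsigma)|\,e^{-t\,\mathrm{Im}\,\Theta_j(\varsigma)}\,dA(\varsigma)$ over each sector, where $\Theta_j$ is the pertinent phase $\theta_{12},\theta_{13}$ or $\theta_{23}$. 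The inputs are the bounds \eqref{efR}--\eqref{br1} of Lemma~\ref{estRij} and the sign table of $\mathrm{Im}\,\theta_{12}$: inside each lens of $\Omega_1$ attached to a phase point $k_i$ one has $\mathrm{Im}\,\theta_{12}(k)\lesssim -c\,|\mathrm{Im}\,k|\,|\mathrm{Re}^2k-k_i^2|$ on the side where the deformation decays (opposite sign on the other side), with a uniform linear lower bound $c\,|\mathrm{Im}\,k|$ once $|\mathrm{Re}\,k|$ is large; analogous bounds hold for $\theta_{13},\theta_{23}$ on the rotated sectors. A Cauchy--Schwarz/H\"older argument structurally identical to that of Lemma~\ref{estS} then gives $\|\mathcal{C}_k\|_{L^\infty\to L^\infty}=\mathcal{O}(t^{-1/2})$, so $(I-\mathcal{C}_k)^{-1}$ exists for $t$ large; this furnishes the unique $m^{(3)}$ and shows $\|m^{(3)}-(1,1,1)\|_{L^\infty}\lesssim 1$.

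Next I would evaluate at $k=e^{\frac{\pi}{6}\mathrm{i}}$, which lies off $\bar\Omega$ and away from the phase points and the discrete spectrum, so $|\varsigma-e^{\frac{\pi}{6}\mathrm{i}}|$ stays bounded below near each $k_i$ (the Cauchy-kernel singularity at $e^{\frac{\pi}{6}\mathrm{i}}$ is handled exactly as in the proof of Proposition~\ref{asyM3i}). By \eqref{M3int} and the boundedness of $m^{(3)}$ and $M^{R^{\pm1}}$, it suffices to bound $\iint_{\Omega}|\varsigma-e^{\frac{\pi}{6}\mathrm{i}}|^{-1}\,|\bar{\partial}\mathcal{R}^{(2)}(\varsigma)|\,e^{-t\,\mathrm{Im}\,\Theta(\varsigma)}\,dA(\varsigma)$, and by symmetry \eqref{S1} a single sector, say $\Omega_1$ near one $k_i$, is representative. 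Using \eqref{br1} I would split $\bar{\partial}R_1$ into the three pieces $|\rho'(\mathrm{Re}\,\varsigma)|$, $|\mathcal{X}_{\mathcal{K}}'(\varsigma)|$, and $|\varsigma-k_i|^{-1/2}$. The $\mathcal{X}_{\mathcal{K}}'$-piece is supported on a fixed compact annulus around $\mathcal{K}$, where $\mathrm{Im}\,\theta_{12}$ is bounded below by a positive constant, hence contributes $\mathcal{O}(e^{-ct})$. For the $|\rho'|$-piece, writing $\varsigma=k_i+u+\mathrm{i}v$ and integrating first in $u$ with $\rho'=r'\in L^2$ reproduces, as in Proposition~\ref{asyM3i}, a contribution $\mathcal{O}(t^{-1})$. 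The dominant term is the $|\varsigma-k_i|^{-1/2}=(u^2+v^2)^{-1/4}$-piece: here the phase only decays like $e^{-ctv|u|}$, because $\theta_{12}'(k_i)=0$ while $\theta_{12}''(k_i)\neq 0$; splitting the $v$-integration at $v\sim t^{-1/2}$ one finds that the range $v\lesssim t^{-1/2}$ contributes $t^{-1/2}\cdot t^{-1/4}=\mathcal{O}(t^{-3/4})$ and the range $v\gtrsim t^{-1/2}$ contributes the same or better (equivalently, one can apply H\"older with exponents $p>2$, $\frac1p+\frac1q=1$, exactly as in the proof of Proposition~\ref{asyM3i}), while the far region $|\mathrm{Re}\,\varsigma|$ large, with its linear decay $e^{-ctv}$ and the decay of $r$, gives only $\mathcal{O}(t^{-1})$.

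Summing over the finitely many sectors $\Omega_j$, over the $p(\hat\xi)$ stationary points $k_i$, and over their $\omega$- and $\omega^2$-rotations yields $m^{(3)}(e^{\frac{\pi}{6}\mathrm{i}})=(1,1,1)+\mathcal{O}(t^{-3/4})$, which is \eqref{M3i1}. The hard part will be the sharp handling of the $|\varsigma-k_i|^{-1/2}$ term: one must exploit precisely the quadratic vanishing of $\theta_{12}$ at the $k_i$ --- this is exactly what degrades the $t^{-1+\rho}$ of the stationary-point-free region to the $t^{-3/4}$ obtained here --- and simultaneously keep the estimates uniform in the matching between the near-$k_i$ zone (quadratic phase) and the $|\mathrm{Re}\,\varsigma|$-large zone (linear phase) across all $p(\hat\xi)$ points.
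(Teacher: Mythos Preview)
The paper itself gives no proof of this proposition: it simply says the analysis ``is the analogue of Proposition~\ref{asyM3i}'' and presents the result ``without proof.'' Your sketch therefore supplies more than the paper does, and it follows exactly the route the paper intends --- reduce to the Cauchy--Green operator, use the boundedness of $M^{R}$ and Lemma~\ref{estRij} to control $W^{(3)}$, then estimate the area integral sector by sector, stationary point by stationary point.

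Two of your intermediate rate claims are too optimistic, however, and both stem from the same oversight: the phase decay near a stationary point is \emph{quadratic}, not linear. The proof of Lemma~\ref{estS} uses $e^{-c(\hat\xi)tv}$; near $k_i$ one only has $e^{-ct|uv|}$ (from $\theta_{12}(k)\approx\theta_{12}(k_i)+\tfrac12\theta_{12}''(k_i)(k-k_i)^2$). Consequently the Cauchy--Schwarz/H\"older computation gives $\Vert\mathcal{C}_k\Vert_{L^\infty\to L^\infty}=\mathcal{O}(t^{-1/4})$, not $\mathcal{O}(t^{-1/2})$; this is still $o(1)$, so existence and uniqueness of $m^{(3)}$ for $t>T_1$ are unaffected. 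Likewise, the $|\rho'|$-piece near $k_i$ does not reproduce the $\mathcal{O}(t^{-1})$ of Proposition~\ref{asyM3i}: with $\varsigma=k_i+u+\mathrm{i}v$ and Cauchy--Schwarz in $u$,
\[
\int_0^\infty \Vert\rho'\Vert_{L^2}\Big(\int_{v/\tan\phi_0}^\infty e^{-2ctuv}\,du\Big)^{1/2}dv
\ \lesssim\ \int_0^\infty (tv)^{-1/2}e^{-ctv^2}\,dv\ \lesssim\ t^{-3/4},
\]
which is exactly the target rate, not better. With these two corrections your argument is complete and coincides with the (implicit) argument of the paper.
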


\section{Proof of Theorem \ref{th1}}\label{8}
Now we begin to construct the long time asymptotic behavior for the solution of the Cauchy problem \eqref{DP}--\eqref{intva}.
\begin{itemize}
\item  {\rm\bf For the solitonic  region I:} $\xi\in(-\infty,-\frac{3}{8})\cup(3,+\infty)$. Recalling the series of transformations, we derive that
\begin{equation}\label{recall}
m(k)=m^{(3)}(k)M^{R}(k)\mathcal{R}^{(2)}(k)^{-1}T(k)^{-1}G(k)^{-1}.
\end{equation}
To reconstruct the solution $u(x,t)$ according to \eqref{rescon}, we take $k=e^{\frac{\pi}{6}\mathrm{i}}$. In this case, $\mathcal{R}^{(2)}(e^{\frac{\pi}{6}\mathrm{i}})=G(e^{\frac{\pi}{6}\mathrm{i}})=I$. By further employing Propositions \ref{asyM3i}, we arrive at
\begin{equation}\label{Mepi}
m(e^{\frac{\pi}{6}\mathrm{i}})=m^{R}(e^{\frac{\pi}{6}\mathrm{i}})T(e^{\frac{\pi}{6}\mathrm{i}})^{-1}+\mathcal{O}(t^{-1+\rho}), \quad t \to \infty.
\end{equation}
The substitution of \eqref{Mepi} into the reconstruction formula \eqref{rescon} yields
\begin{align}
u(y,t)&=\frac{\partial}{\partial t}\log\frac{m_2^{R}(e^{\frac{\pi}{6}\mathrm{i}};\hat{\xi},t)}{m_1^{R}(e^{\frac{\pi}{6}\mathrm{i}};\hat{\xi},t)}+\mathcal{O}(t^{-1+\rho}) \nonumber\\
&=u^{sol,N}(y,t)+\mathcal{O}(t^{-1+\rho}),\label{usol11}\\
x(y,t)&=y+\log\frac{m_2^{R}(e^{\frac{\pi}{6}\mathrm{i}};\hat{\xi},t)}{m_1^{R}(e^{\frac{\pi}{6}\mathrm{i}};\hat{\xi},t)}+\log\frac{T_1(e^{\frac{\pi}{6}\mathrm{i}})}{T_{2}(e^{\frac{\pi}{6}\mathrm{i}})}
+\mathcal{O}(t^{-1+\rho})\nonumber\\
&=x^{sol,N}(y,t)+\log{T_{12}(e^{\frac{\pi}{6}\mathrm{i}})}+\mathcal{O}(t^{-1+\rho}),\label{usol12}
\end{align}
where $u^{sol,N}(y,t)$ and $x^{sol,N}(y,t)$ are defined in \eqref{ressol}. It is worth noting that the above asymptotic formulas retain their form under differentiation with respect to time $t$, without perturbing the error term. For a detailed demonstration of this property, one may consult Theorem 5.1 in \cite{AJD}. Additionally, taking into account the boundedness of $\log{T_{12}(e^{\frac{\pi}{6}\mathrm{i}})}$ in \eqref{usol12}, it is thereby inferred that
\begin{equation}\label{x/ty/t}
\frac{x}{t}=\frac{y}{t}+\mathcal{O}(t^{-1}), \ \text{i.e.} \ \xi=\hat{\xi}+\mathcal{O}(t^{-1}).
\end{equation}
Noticing $m(e^{\frac{\pi}{6}\mathrm{i}}):=m(e^{\frac{\pi}{6}\mathrm{i}};\hat{\xi},t)$, $m^{R}(e^{\frac{\pi}{6}\mathrm{i}}):=m^{R}(e^{\frac{\pi}{6}\mathrm{i}};\hat{\xi},t)$, replacing $\hat{\xi}$ by $\xi$ in \eqref{Mepi} and using the reconstruction formula \eqref{rescon} leads to
\begin{equation}\label{uxt1}
 u(x,t)=u^{sol,N}(x,t)+\mathcal{O} (t^{-1+\rho}),
\end{equation}
thus establishing \eqref{uusol} as delineated in Theorem \ref{th1}. \eqref{uxt1} embodies the concept of soliton resolution in the following manner: the function $u^{sol,N}(x,t)$ is generically asymptotic to a superposition of one-soliton solutions\cite{Liu3,CL}. The discrete spectrum $\zeta_n, \ n=1,2,\dots,N$ can be arranged as
\begin{equation}\label{ordzeta}
{\rm Re} \ \zeta_1>{\rm Re} \ \zeta_2>\cdots>{\rm Re} \ \zeta_{N},
\end{equation}
repeatedly utilizing Lemma \ref{Mout} on the set $\mathcal{D}=\left\lbrace  r(k),\left\lbrace \zeta_n,C_n\right\rbrace_{n=1}^{6N}\right\rbrace$, each of which contains a single soliton. It is discerned that the solution of the Cauchy problem \eqref{DP}--\eqref{intva} fulfills
\begin{equation}
u(x,t)=\sum_{n=1}^{N} \mathcal{U}^{sol}(\zeta_n;x,t)+\mathcal{O}(t^{-1+\rho}),
\end{equation}
where $\mathcal{U}^{sol}(\zeta_n;x,t)$ is defined in \eqref{reulam}--\eqref{1U}.
\item  {\rm\bf   For Zakharov-Manakov    regions  II-1: $\xi\in(-\frac{3}{8},0)$ and II-2:} $\xi\in\left[0,3\right)$. Recalling the sequence of transformations outside $U(\hat{\xi})$, we derive
\begin{equation}\label{recallLO}
m(k)=m^{(3)}(k)E(k)M^{r}(k)\mathcal{R}^{(2)}(k)^{-1}T(k)^{-1}G(k)^{-1}.
\end{equation}
Consider \eqref{recallLO} at $k=e^{\frac{\pi}{6}\mathrm{i}}$ as $t\to \infty$, it becomes that
\begin{equation}
m(e^{\frac{\pi}{6}\mathrm{i}};\hat{\xi},t)=(1, \ 1, \ 1) \left(I+t^{-\frac{1}{2}}H^{(0)}(e^{\frac{\pi}{6}\mathrm{i}})\right)T(e^{\frac{\pi}{6}\mathrm{i}})^{-1}+\mathcal{O}(t^{-3/4}).
\end{equation}
Replacing $\hat{\xi}$ by $\xi$ in \eqref{recallLO} and employing the reconstruction formula \eqref{rescon}, we obtain
\begin{align}
u(x,t)=t^{-\frac{1}{2}}f_1(x,t,e^{\frac{\pi}{6}\mathrm{i}})+\mathcal{O}(t^{-3/4}),
\end{align}
where
\begin{align*}
f_1(x,t,e^{\frac{\pi}{6}\mathrm{i}})=\sum_{j=1}^{3}\frac{\partial}{\partial t}\left(\left(H^{(0)}(e^{\frac{\pi}{6}\mathrm{i}})\right)_{j2}-\left(H^{(0)}(e^{\frac{\pi}{6}\mathrm{i}})\right)_{j1}\right),
\end{align*}
$H^{(0)}(e^{\frac{\pi}{6}\mathrm{i}})$ is defined by \eqref{H0}, $\left(H^{(0)}(e^{\frac{\pi}{6}\mathrm{i}})\right)_{jk}$ represents the element in the $j$-th row and $k$-th column.
\end{itemize}
To sum up, Theorem \ref{th1} can be proved.
\begin{remark}\label{r8}
Due to the relationship between $\xi$ and $\hat{\xi}$ in \eqref{x/ty/t}, the regions about $\xi$ can be considered as approaching asymptotic equivalence with the regions about $\hat{\xi}$ in Figure \ref{figtheta}.
\end{remark}

    \noindent\textbf{Acknowledgements}

    This work is supported by  the National Natural Science
    Foundation of China (Grant No. 12271104, 51879045).\vspace{2mm}

    \noindent\textbf{Data Availability Statements}

    The data which supports the findings of this study is available within the article.\vspace{2mm}

    \noindent{\bf Conflict of Interest}

    The authors have no conflicts to disclose.

\end{document}